\documentclass[aps,prl,twocolumn,showpacs,groupedaddress,notitlepage,superscriptaddress,floatfix,longbibliography,10pt]{revtex4-1}
\usepackage{amsmath,amsthm,amssymb,amsfonts,enumitem,float,graphicx,physics,hyperref}

\usepackage{color}
\usepackage[dvipsnames]{xcolor}
\usepackage[most]{tcolorbox}

\usepackage[normalem]{ulem}

\newcommand{\PRLsec}[1]{\textit{#1}.---}

\setlist[itemize]{leftmargin=*}
\setlist[enumerate]{leftmargin=*}

\usepackage{aliascnt}

\newtheorem{theorem}{Theorem}[section]
\newtheorem*{theorem*}{Theorem}

\newaliascnt{lemma}{theorem}
\newtheorem{lemma}[lemma]{Lemma}
\aliascntresetthe{lemma}
\newtheorem*{lemma*}{Lemma}

\newaliascnt{corollary}{theorem}
\newtheorem{corollary}[corollary]{Corollary}
\aliascntresetthe{corollary}

\newaliascnt{result}{theorem}
\newtheorem{result}[result]{Result}
\aliascntresetthe{result}
\newtheorem*{result*}{Result}

\newaliascnt{example}{theorem}

\aliascntresetthe{example}

\newaliascnt{proposition}{theorem}
\newtheorem{proposition}[proposition]{Proposition}
\aliascntresetthe{proposition}

\newtheorem{definition}{Definition}[section]

\newtheorem{remark}{Remark}[section]
\newtheorem*{remark*}{Remark}
\newtheorem{problem}{Problem}

\usepackage{cleveref}

\Crefname{theorem}{Theorem}{Theorems}
\Crefname{problem}{Problem}{Problems}
\Crefname{result}{Result}{Results}
\Crefname{lemma}{Lemma}{Lemmas}
\Crefname{corollary}{Corollary}{Corollaries}
\Crefname{proposition}{Proposition}{Propositions}
\Crefname{example}{Example}{Examples}
\Crefname{section}{Appendix}{Appendices}

\newcommand{\poly}[1]{\operatorname{poly}\!\left(#1\right)}

\newcommand{\lr}[1]{\left(#1\right)}
\newcommand{\E}{\mathbb{E}}
\newcommand{\EE}[1]{\mathbb{E}\left[#1\right]}

\newcommand{\C}{\mathbb{C}}
\newcommand{\D}{\mathbb{D}}

\newcommand{\cJ}{\mathcal{J}}

\newcommand{\cN}{\mathcal{N}}

\newcommand{\cR}{\mathcal{R}}

\newcommand{\cU}{\mathcal{U}}

\newcommand{\iprhob}[1]{\left\langle #1 \right\rangle_{\beta}}

\newcommand{\supp}{\operatorname{supp}}
\newcommand{\Id}{I}

\newcommand{\ad}{\operatorname{ad}}
\newcommand{\OTOCorder}{L}
\newcommand{\locality}{k}

\newcommand{\sgn}{\operatorname{sgn}}
\newcommand{\pf}{\operatorname{Pf}}

\hypersetup{
    colorlinks=true, 
    linkcolor=cyan,
    citecolor=magenta, 
    filecolor=magenta, 
    urlcolor=cyan,
    runcolor=cyan
}

\begin{document}

\title{SYK thermal expectations are classically easy at any temperature}

\author{Alexander Zlokapa}
\affiliation{Center for Theoretical Physics -- a Leinweber Institute, MIT}
\author{Bobak T. Kiani}
\affiliation{Bowdoin College}

\begin{abstract}
Estimating thermal expectations of local observables is a natural target for quantum advantage. We give a simple classical algorithm that approximates thermal expectations for Gibbs states of local Hamiltonians, and we show it has quasi-polynomial cost $n^{O(\log (n/\epsilon))}$ for all temperatures above a phase transition in the free energy. For many natural models, this coincides with the entire fast-mixing, quantumly easy phase. Our results apply to the Sachdev-Ye-Kitaev (SYK) model at any constant temperature due to its absence of a phase transition---despite its entanglement, sign problem, and polynomial quantum circuit lower bounds. Beyond SYK, we rigorously establish a universal classically easy high-temperature phase for all local, bounded-degree Hamiltonians and show that it extends to temperatures strictly colder than the death of entanglement transition.
\end{abstract}

\maketitle

A central question in quantum computation asks if thermal states admit problems featuring quantum advantage over classical computers.
Recent advances in quantum algorithms~\cite{temme2011quantum,chen2021fast,chen2023efficient,rall2023thermal,zhang2023dissipative,rouze2024optimal,jiang2024quantum,bakshi2024high,gilyen2024quantum,ding2025end,ding2025efficient,chen2025efficient,chen2025quantum} enable the efficient preparation of Gibbs states 
\begin{align}
    \rho_\beta = \frac{e^{-\beta H}}{Z(\beta)} \; \text{ for } \; Z(\beta) = \Tr(e^{-\beta H})
\end{align}
for a variety of models.
Sampling from Gibbs states can yield quantum advantage under complexity-theoretic conjectures \cite{chen2024local}; however, it remains open if the more physically relevant task of estimating local thermal expectations,
\begin{align}
    \Tr(O\rho_\beta) \;\text{ for bounded local } \;O,
\end{align}
can yield quantum advantage at constant temperature.

We provide evidence that for many natural models---including those previously proposed for quantum advantage---this task is classically easy. 
Our argument is based on Barvinok's interpolation method, which approximates the partition function in quasi-polynomial time~\cite{barvinok2014computing,barvinok2016computing,barvinok2016approximating,barvinok2016combinatorics,barvinok2018approximating}. This succeeds when $Z(\beta)$ is analytic along a path from $\beta=0$ to the target inverse temperature, i.e., in the absence of a phase transition~\cite{lee1952statistical,fisher1965nature}. Our results bypass common classical barriers highlighted in Fig.~\ref{fig:box}
because they rely only on analyticity of $Z$ rather than on sampling or low‑entanglement structure. Since a phase transition often leads to a slow-mixing phase that obstructs quantum algorithms~\cite{dennis2002topological,gamarnik2021overlap,basso2022performance,gamarnik2024slow,anschuetz2025efficient,zlokapa2025average,chen2025information}, our results suggest that efficient classical algorithms often exist throughout the entire quantumly easy phase.

We apply this generic argument to the SYK model, one of the most well-studied models in condensed matter and high-energy physics~\cite{sachdev1993gapless,parcollet1999non,georges2000mean,georges2001quantum,kitaev2015SYK,maldacena2016remarks,maldacena2016bound,kitaev2018soft}. 
The SYK model is a strongly interacting fermionic model defined by randomly coupling all possible $q$-body interactions between $n$ Majorana fermions. For $J_{i_1\cdots i_q} \sim_\mathrm{iid} \cN\lr{0, (q-1)! J^2/ n^{q-1}}$ and fermions satisfying $\{\psi_i,\psi_j\} = \delta_{ij}$, a Hamiltonian instance is given by
\begin{align}
    H = i^{q/2} \sum_{1 \leq i_1 < \cdots < i_q \leq n} J_{i_1\cdots i_q}\psi_{i_1} \cdots \psi_{i_q}.
\end{align}
In the context of quantum algorithms, the SYK model has been considered a candidate Hamiltonian for quantum advantage due to various obstructions to classical algorithms outlined in Fig.~\ref{fig:box}. Moreover, it is expected to be quantumly easy due to its absence of a phase transition: it has non-glassy, replica symmetric physics at any (constant) temperature~\cite{almheiri2024universal,zhang2019evaporation,maldacena2021syk,chen2021fast,schuster2025cooling}.

\begin{figure}[H]
    \centering
    \begin{tcolorbox}[
        colback=gray!10,
        colframe=gray!50,
        boxrule=0.5pt,
        arc=10pt,
        left=6pt,
        right=6pt,
        top=6pt,
        bottom=6pt
    ]
    \begin{itemize}
        \item \emph{Sign problem.} Obstructs quantum Monte Carlo evaluation of $e^{-\beta H}$ as a probabilistic process~\cite{troyer2005computational}.
        \item \emph{Large entanglement and magic.} Obstructs tensor networks~\cite{vidal2003efficient} and stabilizer circuits~\cite{gottesman1998heisenberg}.
        \item \emph{Circuit complexity lower bounds.} Obstruct direct classical simulation~\cite{markov2008simulating,cirac2021matrix} and variational ansatzes such as Gaussian states~\cite{hastings2022optimizing,hastings2023field}.
        \item \emph{Instance-to-instance fluctuations.} Obstructs algorithms that output the ensemble average or make other mean-field approximations~\cite{mezard1987spin,chen2024sparse,google2025observation}.
    \end{itemize}
    \end{tcolorbox}
    \caption{Common obstructions to classical algorithms. The SYK model satisfies all of the above properties~\cite{liu2018quantum,anschuetz2025strongly,hastings2023field,ramkumar2025high,bettaque2026magic}.}
    \label{fig:box}
\end{figure}

We argue that Barvinok’s interpolation method estimates thermal expectations within a broad zero-free regime that extends well beyond settings accessible to standard classical heuristics. 
Our main application is the SYK model, where standard but non-rigorous large-$n$, large-$q$ path-integral methods locate the complex zeros of the partition function and thereby determine the $\beta$-dependence of Barvinok’s interpolation runtime \cite{maldacena2016remarks,kitaev2015SYK,takahashi2011replica,takahashi2013zeros,khramtsov2021spectral}. 
We support this picture with fully rigorous results for bounded-degree Hamiltonians showing a universal high-temperature, computationally easy phase that extends to temperatures lower than the sudden death of entanglement~\cite{bakshi2024high}. A family of provably entangled Gibbs states certifies the separation between these computationally easy and separable temperatures.

\begin{figure*}
    \centering
    \includegraphics[width=0.9\linewidth]{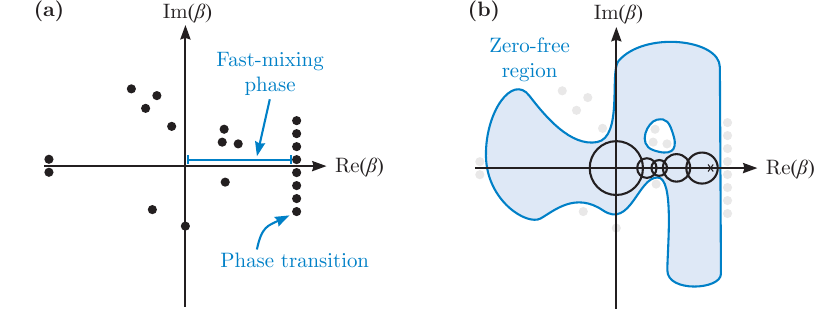}
    \caption{(a) Cartoon of a phase transition identified by the complex (Fisher) zeros of $Z(\beta)$ asymptotically intersecting the real axis in the thermodynamic limit. (b) Illustration of Barvinok's algorithm to estimate $Z(\beta)$ via an analytic continuation of a Taylor expansion from the origin. Because the high-temperature phase is characterized by Fisher zeros bounded away from the real axis, a zero-free strip of constant height is always available for Barvinok's algorithm to analytically continue along. For constant $\beta$, this yields a quasi-polynomial time to estimate (up to inverse polynomial additive error) $\log Z(\beta)$ and, if the zero-free region is stable upon perturbation by local observables, thermal expectations. For many natural systems (including SYK), the high-temperature phase corresponds to the non-glassy, fast-mixing phase.}
    \label{fig:barvinok}
\end{figure*}

\PRLsec{Results}%
To analyze our classical algorithms, the main objects we study are the zeros of the partition function $Z(\beta)$; i.e., complex-valued $\beta_j$ such that $Z(\beta_j) = 0$. Note that all such $\beta_j$ must have an imaginary component: since $Z(\beta) = \sum_k e^{-\beta E_k}$ for eigenvalues $E_k$ of the Hamiltonian, the partition function is always nonzero for any choice of real $\beta$. However, if the complex zeros $\beta_j$ approach the $\Re \beta$ axis at some $\beta_*$ in the thermodynamic limit $n\to\infty$, then the free energy $-\frac{1}{\beta} \log Z(\beta)$ becomes non-analytic in the vicinity of $\beta_*$. This corresponds to a phase transition, as first introduced by Lee and Yang~\cite{lee1952statistical} for complex-valued external fields and later extended by Fisher~\cite{fisher1965nature} to complex-valued temperatures.

The \emph{absence} of a phase transition implies that the partition function admits a zero-free region (Fig.~\ref{fig:barvinok}a). While this suffices to estimate the partition function (Fig.~\ref{fig:barvinok}b), we require here a mild strengthening in order to estimate thermal expectations of local observables. 
We will refer to $\beta<\beta_*$ as a \emph{stable high-temperature phase} if there exists constant $b_0>0$ such that for all sufficiently large $n$ and sufficiently small $\lambda$, the following holds
\begin{equation}\label{eq:stable}
	Z_n(\beta,\lambda)\neq 0  \text{ in } \cR := \{\beta\,:\, \Re \beta \in [0,\beta_*], |\Im \beta| < b_0\}
\end{equation}
and in an arbitrarily small (constant) neighborhood of $\cR$.
In many models, this high-temperature phase (in particular the inclusion of $+\lambda O$) can be established by bounding the influence of interactions through combinatorial bounds from cluster expansions~\cite{kotecky1986cluster,fernandez2007cluster}. 
It is also implied by the Dobrushin condition~\cite{dobrushin1985completely}, which implies rapid mixing of Glauber dynamics in classical models~\cite{hayes2006simple}.

The classical algorithm, based on Barvinok's interpolation method introduced for counting problems~\cite{barvinok2014computing,barvinok2016computing,barvinok2016approximating,barvinok2016combinatorics,barvinok2018approximating}, evaluates derivatives of $\log Z(\beta)$ at $\beta=0$ and estimates the partition function at a target temperature via an analytic continuation. Within a zero-free disk---i.e., $Z(\beta) \neq 0$ for $|\beta|<R$---the partition function can be estimated to multiplicative error $\epsilon$ by truncating the series at order $m = O(\log(n/\epsilon))$, yielding an algorithm that costs $n^{O(m)}$ to achieve accuracy $|\log \widetilde{Z} - \log Z(\beta)| \leq \epsilon$. In general, a zero-free disk might not exist. However, the absence of a phase transition guarantees that the complex zeros do not pinch the real axis, producing a rectangular zero-free region: for $\Re \beta < \beta_*$, one can always find some constant $b_0 > 0$ such that a zero-free region exists for all $|\Im \beta| < b_0$ \footnote{More carefully, the region may include $O(1)$ zeros---even asymptotically close to the real axis---which do not cause the free energy to be non-analytic in the thermodynamic limit. However, Barvinok's algorithm can efficiently avoid a constant number of zeros by trying a constant number of analytic continuations of constant width, as described in~\cite{eldar2018approximating}.}. An analytic continuation yields the same scaling $m = O(\log(n/\epsilon))$. In the \emph{stable} high-temperature phase of \eqref{eq:stable}, the same argument applies to $Z(\beta, \lambda)$. Thermal observables $\Tr[O\rho_\beta]$ can thus be efficiently estimated by a finite-difference approximation
\begin{align}
    -\frac{1}{\beta}\partial_\lambda \log Z(\beta,\lambda) \Big|_{\lambda=0} \approx -\frac{\log Z(\beta,\lambda) - \log Z(\beta,0)}{\beta\lambda}
\end{align}
for sufficiently small $\lambda$ \cite{bravyi2021complexity}.

Before addressing the SYK model, we first give rigorous evidence that Barvinok's algorithm can bypass typical obstructions to classical algorithms. Prior work suggests that Barvinok's algorithm can overcome the sign problem to efficiently compute the permanent of random matrices~\cite{eldar2018approximating} and to contract random tensor networks~\cite{jiang2025positive}. In the latter case, Barvinok's algorithm fails at a phase transition in entanglement~\cite{chen2025sign}.

Our first result rigorously establishes that Barvinok's method can in fact bypass the entanglement obstruction of Fig.~\ref{fig:box}. Previous applications of Barvinok's interpolation for quantum Gibbs states have been restricted to identifying a high-temperature phase universal to arbitrary $k$-local, degree-$D$ spin Hamiltonians, $\beta < \beta_{\rm zero-free} = \Theta(1/kD)$~\cite{harrow2020classical,mann2021efficient,yao2022polynomial,mann2024algorithmic}, where $D$ is the maximum number of Hamiltonian terms acting on a single site. In comparison, for Pauli Hamiltonians, Ref. \cite{bakshi2024high} shows that all Gibbs states are separable at inverse temperature $\beta < \beta_{\rm sep}$ for $\beta_{\rm sep} \geq 1/(100k^2D)$~\footnote{The notion of degree $D$ here is different from the dual interaction graph degree $\mathfrak{d}$ used in~\cite{bakshi2024high}.}; it is not known if the bound on $\beta_{\rm sep}$ is tight. A natural question is whether the universal classically easy phase $\beta < \beta_{\rm zero-free}$ is contained in a trivial separable phase. We show that at temperatures above $1/\beta_{\rm zero-free}$, Gibbs states can be provably entangled, ruling out a confluence of phase transitions in separability and classical easiness.
\begin{theorem*}[Universal high-temperature phase]
    For $k$-local degree-$D$ Hamiltonians, a quasi-polynomial classical algorithm can estimate the free energy to inverse polynomial error throughout a universal high-temperature phase $\beta_{\rm zero-free} \geq \frac{1}{2e(k(D-1)+1)}$. Moreover, for $k \geq 8$, the inequality $\beta_{\rm zero-free} > \beta_{\rm sep}$ holds.
\end{theorem*}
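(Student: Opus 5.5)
The theorem has two parts, and I would prove them separately.

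\emph{Part 1: the zero-free region.} The goal is a zero-free disk for $Z(\beta)$ of radius $R = \frac{1}{2e(k(D-1)+1)}$, combined with Barvinok interpolation as described above.

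Write $H=\sum_a h_a$ with $\|h_a\|\le 1$. Each term acts on at most $k$ sites, and each site carries at most $D$ terms, so every $h_a$ overlaps at most $k(D-1)$ other terms.

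I would run a cluster (Mayer) expansion of $\log \Tr e^{-\beta H}/\Tr I$ in $\beta$. The coefficient of $\beta^m$ is a sum over connected multisets (clusters) of $m$ terms, because disconnected clusters factorize and cancel in the logarithm. The number of connected clusters of size $m$ rooted at a given term is at most $(e\Delta)^{m-1}$, where $\Delta = k(D-1)+1$ counts neighbors including the term itself; this is the standard lattice-animal bound on a graph of degree $\Delta-1$. The cumulant weight of each cluster is bounded by $1/m!$ times a combinatorial factor that is at most of order $2^m m!$, by the usual bounds on joint cumulants of bounded operators (Hastings / Kuwahara--Kato--Brandão style). Together these give
\[
|c_m|\le n\,(2e\Delta)^m ,
\]
so the series for $\log Z$ converges absolutely for $|\beta|<1/(2e\Delta)$. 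In particular $Z\neq0$ on that disk, with $\log Z$ bounded by $O(n)$.

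Barvinok's truncation at order $m=O(\log(n/\epsilon))$ then has error $n(|\beta|/R)^m$. Computing $c_m$ exactly by enumerating clusters costs $n\,\Delta^{O(m)}$, which is $n^{O(\log(n/\epsilon))}$, i.e.\ quasi-polynomial. Dividing by $\beta$ gives the free-energy claim.

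\emph{Main obstacle.} The delicate point is getting exactly the constant $2e$. I would try the tree-graph inequality applied to the moment--cumulant formula for operator products, as in Kuwahara--Saito, and track the constants carefully.

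\emph{Part 2: $\beta_{\rm zero-free}>\beta_{\rm sep}$ for $k\ge8$.} The upper bound on $\beta_{\rm sep}$ is not known to be tight, so I cannot simply compare the two bounds. Instead I would exhibit an explicit family of $k$-local, degree-$D$ Hamiltonians whose Gibbs state is entangled at some $\beta<\frac{1}{2e\Delta}$. Since $\beta_{\rm sep}$ is defined as the largest inverse temperature below which every Gibbs state is separable, this forces $\beta_{\rm sep}<\beta_{\rm zero-free}$.

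The natural construction has $D=1$, so $\Delta=k$ and the region is $\beta<1/(2ek)$. Take a single $k$-qubit Pauli-type term, or a few commuting ones, such as a GHZ-stabilizer Hamiltonian $H=-\sum_j S_j$, with products over the $k$ qubits, normalized so that each term has norm at most 1. The Gibbs state is then diagonal in the GHZ basis, with weights computable in closed form. I would certify entanglement with a GHZ fidelity witness: separable states satisfy $\langle \mathrm{GHZ}|\rho|\mathrm{GHZ}\rangle\le 1/2$. Evaluating the fidelity $e^{\beta k'}/Z$ then gives a threshold $\beta_{\rm ent}(k)$ above which the state is entangled.

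With one term, the fidelity exceeds $1/2$ only when $\beta\gtrsim \log 2^{k}$-ish. That is far too cold, so a single term will not work. I would therefore use the full stabilizer group with $k$ generators, $H=-\frac{1}{k}\sum$, or equivalently a degree chosen so that the constraint $D$ scales appropriately. I would then compute when the fidelity witness, or a PPT test across a bipartition, detects entanglement, and compare with $1/(2e(k(D-1)+1))$. Checking that this comparison holds exactly for $k\ge8$ is a routine inequality once the witness is fixed.
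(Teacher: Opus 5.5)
\textbf{Part 2 is the main gap: your construction cannot work.} Your logic for this part is right: exhibit one explicit $k$-local, degree-$D$ Hamiltonian whose Gibbs state is entangled at some $\beta$ inside the zero-free disk. Two things go wrong in carrying it out.

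First, an arithmetic slip. With $D=1$ you have $k(D-1)+1=1$, not $k$, so the disk is $|\beta|<1/(2e)$, independent of $k$. That $k$-independence is exactly what a single term buys you.

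Second, Pauli or stabilizer terms are the wrong place to look.
\begin{itemize}
\item For a single Pauli term, $\rho_\beta=2^{-k}(I-\tanh\beta\,P)=\tfrac{1+\tanh\beta}{2}\,2^{-k}(I-P)+\tfrac{1-\tanh\beta}{2}\,2^{-k}(I+P)$. Each $2^{-k}(I\pm P)$ is a uniform mixture of product eigenstates of $P$, so $\rho_\beta$ is separable at every $\beta$, and the GHZ fidelity never exceeds $2^{1-k}$.
\item For the full GHZ stabilizer Hamiltonian (generators $X^{\otimes k}$ and $Z_iZ_{i+1}$, degree $3$, zero-free radius $1/(2e(2k+1))$), you get $\rho_\beta=2^{-k}\prod_j(I+\tanh\beta\,S_j)=2^{-k}\sum_g c_g\,g$ with $\sum_{g\neq I}|c_g|=(1+\tanh\beta)^k-1$.
\item Hence $\rho_\beta$ is a convex combination of $2^{-k}I$ and the separable states $2^{-k}(I\pm g)$ whenever $(1+\tanh\beta)^k\le 2$. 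This holds in particular for all $\beta\le(\ln 2)/k$, which covers every real $\beta$ in the zero-free disk.
\item The GHZ fidelity $((1+\tanh\beta)/2)^k$ exceeds $1/2$ only when $\beta=\Omega(\log k)$.
\end{itemize}
So the ``routine inequality'' at the end is false for these Hamiltonians, whichever witness you choose.

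The missing idea has two pieces: a bounded $k$-local term far from any Pauli string (exponentially large $\ell_1$ weight in the Pauli basis), and a witness that detects a first-order-in-$\beta$ deviation from $I/2^k$. The paper takes $H=F$, the swap of two $k/2$-qubit halves. Since $\langle a,b|F|a,b\rangle=|\langle a|b\rangle|^2$, every separable $\sigma$ has $\Tr(F\sigma)\ge 0$. Meanwhile $\Tr(F\rho_\beta)=(\cosh\beta-d\sinh\beta)/(d\cosh\beta-\sinh\beta)$ with $d=2^{k/2}$ is negative once $\tanh\beta>2^{-k/2}$. The point $\beta=2^{1-k/2}$ lies below $1/(2e)$ for $k\ge 8$.

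\textbf{Part 1 takes a different route and asserts, rather than proves, the constant $2e$, which is the content of the statement.}
\begin{itemize}
\item The Taylor coefficients of $\log\Tr e^{-\beta H}$ run over ordered sequences of terms with repetitions. The lattice-animal count $(e\Delta)^{m-1}$ is for connected subsets, so it does not apply as stated.
\item The per-cluster cumulant bound of order $2^m$ for noncommuting operators is not derived. Generic moment--cumulant bounds grow like $m!$. Tree-graph bounds only close after resumming over spanning trees, and that resummation is where the constant is decided.
\end{itemize}

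The paper avoids operator cumulants altogether:
\begin{itemize}
\item It Trotterizes, writing each factor as $e^{-(\beta/m)h_j}=I+B_{j,s}$ with $\|B_{j,s}\|\le T\approx|\beta|/m$. Repeated terms thereby become distinct space--time events.
\item $\Tr U_m(\beta)$ becomes a hard-core polymer gas with $|\zeta_m(\gamma)|\le T^{|\gamma|}$, using submultiplicativity and factorization over disjoint supports.
\item The event graph has closed neighborhoods of size at most $m(k(D-1)+1)$.
\item Koteck\'y--Preiss with $a(\gamma)=|\gamma|/2$, plus the Fuss--Catalan identity for the subtree count $\frac1s\binom{s\Delta_m}{s-1}$, reduces everything to $T\Delta_m\le 1/(2e)$. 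Here $1/(2e)=\max_\theta\theta e^{-2\theta}$ as $m\to\infty$.
\item Hurwitz's theorem transfers zero-freeness to $Z_H$ on $|\beta|<1/(2e(k(D-1)+1))$, and Barvinok's disk interpolation then gives the quasi-polynomial algorithm.
\end{itemize}
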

The bound on $\beta_{\rm zero-free}$ that we report (proven in \Cref{app:hight}) is a slight improvement over the temperature of prior quasi-polynomial classical and quantum algorithms~\cite{harrow2020classical,mann2021efficient,yao2022polynomial,rouze2024optimal,wild2023classical,mann2024algorithmic,bakshi2024high,bakshi2025dobrushin}. The proof that the zero-free temperature is strictly lower than the separability transition is given in \Cref{prop:swap}. Note that the above theorem is a universal result regarding the zero-freeness and separability for \emph{all} $k$-local degree-$D$ Hamiltonians above some temperature. With additional structure, zero-freeness and entanglement may happen at lower temperatures. As an example of this, we discuss the case of stabilizer code Hamiltonians in \Cref{app:codes} where analyticity extends to all temperatures despite an entanglement transition at constant temperature.

Our second result shows that Barvinok's interpolation algorithm is efficient for the SYK model, and in particular controls the complex zeros of the SYK Gibbs state to determine the $\beta$-dependence of the algorithm. At constant temperature, the SYK Gibbs state is highly entangled~\cite{liu2018quantum} and contains magic~\cite{bettaque2026magic}; moreover, typical eigenstates have at least polynomially growing circuit complexity and matrix product state bond dimension \cite{anschuetz2025strongly}. This causes variational ansatzes, such as mixtures of Gaussian states, to poorly approximate the Gibbs state \cite{hastings2023field,ramkumar2025high}. Quantum Monte Carlo approaches also fail due to the sign problem \cite{troyer2005computational}. Finally, mean-field approximations that simply return the observable in expectation fail to capture instance-to-instance fluctuations, which are inverse-polynomially large. Nonetheless, we show that Barvinok's algorithm succeeds at any constant temperature and resolves these fluctuations (\Cref{app:syk}).

\begin{result*}[Classical algorithm for SYK]
With probability $1-o(1)$, a large-$q$ SYK instance satisfies for any local observable $O$ and any $\lambda = o(1)$
\begin{align}
	\Tr(e^{-\beta (H+\lambda O)}) \neq 0 \;\; \text{if}\;\; |\Im \beta| < 1.3 \;\; \text{or} \;\; \Re \beta \neq 0\label{eq:sykregion}.
\end{align}
Consequently, for all (constant) real $\beta > 0$, Barvinok's interpolation method estimates thermal expectations to additive error $\epsilon$ with cost $n^{O\lr{\beta \log (\beta n/\epsilon)}}$.
\end{result*}

This result is exponential in $\beta$; in comparison, a naive application of Barvinok's algorithm in the absence of a phase transition would give a doubly exponential dependence in $\beta$ (\Cref{app:disk-to-strip-mapping}). We also note that the instance-to-instance fluctuations of local observables are inverse polynomial in size, and thus the algorithm remains quasi-polynomial even when $\epsilon$ is small enough to recover disorder-dependent thermal expectations.

We show the result via a generalization of the standard SYK Euclidean path integral computation~\cite{maldacena2016remarks,kitaev2018soft} to complex $\beta$. Rather than computing $\E \log |Z(\beta)|$ with the replica trick, we use Jensen's formula to bound the number of complex zeros in terms of the annealed quantity $\E |Z(\beta)|^2$. Solving the saddle point equations in the large-$q$ limit under a replica symmetric ansatz allows us to identify a zero-free region, which we verify numerically by exact diagonalization (Fig.~\ref{fig:sykzeros}). We note that these techniques, although standard in the physics literature, are non-rigorous due to the difficulty of formally controlling the path integral.

To emphasize that Barvinok's algorithm is broadly applicable to systems with a sign problem and large entanglement, we give numerical examples of zero-free regions in \Cref{app:num} for the quantum Heisenberg model~\cite{georges2001quantum} and the doped Fermi-Hubbard model~\cite{zhang1999finite}. Since experimentally accessible phases often correspond to the fast-mixing regime before a phase transition, we expect these classical algorithms to remain quasi-polynomial in relevant settings for quantum simulators and fault-tolerant quantum Gibbs sampling algorithms.

\begin{figure}
    \centering
    \includegraphics[width=0.85\linewidth]{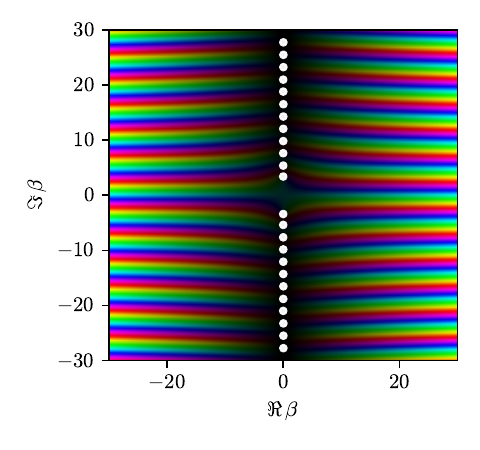}
    \caption{Complex zeros of $Z(\beta)$ obtained from exact diagonalization of a single $n=30, q=4$ SYK instance ($J=1$). Brightness indicates magnitude and hue indicates phase; zeros are indicated in white. The region $|\Re \beta| \gtrsim 0.01$ is zero-free, as well as $|\Im \beta| \lesssim 3.4$, exceeding the zero-free region identified analytically in \eqref{eq:sykregion}. In Fig.~\ref{fig:sykzeros300} of \Cref{app:syk}, we verify that the zero-free region persists up to at least $|\Re \beta|, |\Im \beta| \leq 300$ and when $H$ is perturbed by a local observable.}
    \label{fig:sykzeros}
\end{figure}

\PRLsec{Related work}%
Barvinok's interpolation method and related cluster expansions have been applied to several settings, including bounded‑degree spin systems \cite{harrow2020classical,mann2024algorithmic}, weakly interacting fermions \cite{chen2025convergence} and the guided local Hamiltonian problem \cite{zhang2024dequantized}, yielding quasi‑polynomial or even polynomial runtimes under zero‑freeness guarantees.
For dense models, arguments based on Jensen's formula control complex zeros via annealed moments \cite{hughes2008zeros,eldar2018approximating,bencs2025zeros}, an approach we extend to the SYK setting.
The computational complexity of the SYK model was analyzed in Ref.~\cite{hastings2022optimizing}, which gave a quantum algorithm achieving constant factor approximation to the ground energy and showed Gaussian states cannot achieve this constant factor approximation. 
This evidence of classical hardness was later strengthened to show convex combinations of Gaussian states fail to represent the constant-temperature SYK Gibbs state~\cite{hastings2023field,herasymenko2023optimizing,ramkumar2025high}.
Ref.~\cite{anschuetz2025strongly} showed that low-energy states of the SYK model have polynomially growing circuit complexity lower bounds (indicative of classical hardness) and that quenched and annealed free energies agree at all temperatures (indicative of quantum easiness). 
Replica and numerical analyses give strong evidence against a finite‑temperature glass transition in SYK \cite{gur2018does,gurau2017quenched,wang2019replica}, suggesting that it is quantumly efficient to prepare the Gibbs state~\cite{almheiri2024universal,zhang2019evaporation,maldacena2021syk,chen2021fast,schuster2025cooling}.
Our work builds on spectral form factor analyses that locate complex zeros on the imaginary axis, and is related to prior works that control the density of complex zeros via the replica trick \cite{takahashi2011replica,takahashi2013zeros,khramtsov2021spectral,winer2022hydrodynamic,bunin2024fisher}.

\PRLsec{Zero-freeness of the SYK model}%
The SYK model is known to have a single phase at all constant temperatures. As described in \Cref{app:disk-to-strip-mapping}, this fact in general allows Barvinok's algorithm to estimate local observables, but incurs a doubly exponential cost in $\beta$. Here, we show how to control the complex zeros of the SYK model to reduce this to $O(n^{\beta \log(\beta n/\epsilon)})$.

In the standard SYK Euclidean path integral computation over $n$ fermions and $s$ replicas, one computes disorder-averaged moments $\E Z^s$ for the partition function~\cite{kitaev2015SYK,maldacena2016remarks}
\begin{align}
    Z &= \int D\psi_i \exp\Bigg[-\int_0^\beta d\tau\Bigg(\frac{1}{2}\sum_j \psi_j \partial_\tau \psi_j \nonumber\\
    &\qquad + i^{q/2} \sum_{1 \leq i_1 < \cdots < i_q \leq n} J_{i_1\cdots i_q}\psi_{i_1} \cdots \psi_{i_q}\Bigg)\Bigg]
\end{align}
and commutes limits $s \to 0, n\to\infty$ in order to obtain the quenched free energy $-\frac{1}{\beta}\E \log Z$ via the replica trick~\cite{mezard1987spin}. Here, we instead control the complex zeros using an \emph{annealed} quantity. We count the number of complex zeros using Jensen's formula, which states that for the disk $\D(0, R)$ centered at the origin and of radius $R$,
\begin{align}
    \sum_{\beta\in \D(0,R):Z(\beta)=0} \log \frac{R}{|\beta|} = \frac{1}{2\pi} \int_{0}^{2\pi} d\theta \log \left|\frac{Z(Re^{i\theta})}{Z(0)}\right|.
\end{align}
If the Hamiltonian's zero-free region were a disk of size $R$, we could immediately apply Jensen's formula to count the number of zeros $N_r = \#\{\beta \in \D(0,r) : Z(\beta) = 0\}$ in a smaller disk of radius $r < R$:
\begin{align}
    N_r &\leq \lr{\log \frac{R}{r}}^{-1} \frac{1}{2\pi} \int_0^{2\pi} d\theta \log \left|\frac{Z(Re^{i\theta})}{Z(0)}\right|.
\end{align}
Taking the disorder average of both sides and applying Markov's inequality would then give a zero-freeness result that holds with high probability. However, this requires controlling the quenched quantity $\E \log |Z(Re^{i\theta})|$. We apply Jensen's inequality to obtain a bound in terms of the simpler annealed second moment, i.e.,
\begin{align}
    \E \log \left|\frac{Z(Re^{i\theta})}{Z(0)}\right| \leq \frac{1}{2} \log \E \left|\frac{Z(Re^{i\theta})}{Z(0)}\right|^2.
\end{align}
If the RHS is a harmonic function, then the Jensen integral vanishes, giving $N_r = 0$ in expectation.
To the best of our knowledge, this method was first used for random polynomials~\cite{hughes2008zeros}, then similarly applied by~\cite{eldar2018approximating} for the matrix permanent and by~\cite{bencs2025zeros} in a rigorous treatment of the Sherrington-Kirkpatrick (SK) model. The SK model is replica symmetric for all $0 < \beta < 1$, and it has a zero-free region in the $|\beta| < 1$ disk. In contrast, the SYK model is known to have complex zeros at $|\beta| = O(1)$ even if it is replica symmetric for all $0 < \beta < \infty$: these zeros correspond to the times at which the spectral form factor vanishes. Consequently, a more careful analysis than the above disk is required (\Cref{app:sykjensen}).

To compute $\E |Z(\beta)|^2$, we introduce two copies of the SYK model labeled by replica indices $a \in \{L,R\}$. Changing to the standard $G, \Sigma$ variables gives $\E |Z(\beta)|^2 = \int DG \, D\Sigma\, \exp[-n S[G, \Sigma]]$ for replica symmetric action
\begin{align}\label{eq:Smain}
    S[G,\Sigma] &= - \log \pf\lr{\partial_\tau -\hat\Sigma_{aa}} \nonumber\\
    &\quad + \frac{1}{2} \int_0^{|\beta|} d\tau_1\,d\tau_2\, \Bigg(\Sigma_{aa}(\tau_1,\tau_2) G_{aa}(\tau_1,\tau_2) \nonumber\\
    &\quad - \frac{J^2}{q} s_{aa} G_{aa}(\tau_1,\tau_2)^q\Bigg),
\end{align}
where $\pf$ denotes the Pfaffian, $\hat \Sigma$ denotes the integral operator, and
\begin{align}
    s_{LL} = \lr{\frac{\beta}{|\beta|}}^2, \quad s_{RR} = \lr{\frac{\bar\beta}{|\beta|}}^2.
\end{align}
In the large-$q$ limit, the translation-invariant solution can be written as
\begin{align}
	G_{aa}(\tau_1,\tau_2) &= \frac{1}{2}\sgn(\tau_1-\tau_2)\lr{1 + \frac{g_{aa}(\tau_1,\tau_2)}{q}}
\end{align}
for (using standard substitution $\cJ^2 = J^2 q / 2^{q-1}$)
\begin{align}
    e^{g_{aa}(\tau)} &= \lr{\frac{\cos(c_a/2)}{\cos\left[c_a \lr{\frac{1}{2} - \frac{\tau}{|\beta|}}\right]}}^2\\
    c_a^2 &= s_{aa} \lr{|\beta|\cJ}^2 \cos^2 \frac{c_a}{2}.
\end{align}
Integrating out $\Sigma$ simplifies the action to
\begin{align}
	S[g] &= - \frac{s_{aa}|\beta|\cJ^2}{2q^2} \int_0^{|\beta|/2} d\tau\,  \lr{1 - \frac{1}{2}g_{aa}(\tau)}e^{g_{aa}(\tau)},
\end{align}
which we saddle point approximate by identifying the leading solutions for $c_L, c_R$ that extremize the action and yield a physical (non-diverging) partition function. We show that when a single leading saddle dominates, any complex zeros are confined to the imaginary $\beta$ axis. The first of these zeros on the imaginary axis corresponds to the $|\Im \beta| < 1.3$ condition of \eqref{eq:sykregion} and matches the first dip of the SYK spectral form factor~\cite{khramtsov2021spectral}. For all $\Re \beta \neq 0$, we analytically show that a unique saddle dominates in both the $|\beta| \cJ \ll 1$ and $|\beta| \cJ \gg 1$ limits; we argue (and numerically verify) that it always dominates. Finally, we show that a single dominating saddle corresponds to a harmonic function in the integrand of Jensen's formula, yielding with probability $1-o(1)$ the zero-free region claimed in \eqref{eq:sykregion} and confirmed numerically in Fig.~\ref{fig:sykzeros}. Finally, we extend this argument to the perturbed Hamiltonian $H \to H+\lambda O$ for local observable $O$, which we use to obtain thermal expectations with Barvinok's algorithm.

\PRLsec{Discussion}%
Our results suggest that for many natural models, efficient classical algorithms can estimate local thermal expectations until a phase transition occurs.
Such a phase transition often exhibits glassiness or a divergence of correlation length that obstructs local samplers~\cite{dennis2002topological,gamarnik2021overlap,basso2022performance,gamarnik2024slow,anschuetz2025efficient,zlokapa2025average,chen2025information}, analogous to the classical case~\cite{mezard2009information,gamarnik2021overlap,huang2025strong}. We expect similar obstructions to hold for physical thermalization processes~\cite{davies1974markovian}, e.g., as obtained by quantum simulators~\cite{altman2021quantum}. This suggests that algorithms with quantum advantage for Gibbs state problems must ultimately exploit regimes beyond a phase transition where analyticity fails. Alternatives include using more nonlocal algorithms for Gibbs sampling~\cite{jiang2024quantum,chen2025quantum} or identifying other Gibbs-related tasks, such as rapid mixing within a phase~\cite{bergamaschi2026rapid} or measuring properties of metastable states~\cite{bergamaschi2025structural}.

Our classical algorithm currently has runtime quasi-polynomial in system size.
It may be possible to improve upon this: recent works have found polynomial-time classical algorithms for sampling from high-temperature spin models~\cite{kuwahara2020clustering,mann2021efficient,mann2024algorithmic} and weakly interacting fermions~\cite{chen2025convergence}.
More sophisticated algorithms, such as message-passing algorithms~\cite{biroli2001quantum}, as well as more refined analyses of Barvinok's algorithm~\cite{patel2017deterministic,yao2022polynomial,mann2024algorithmic}, may also achieve polynomial runtimes.

Even if a polynomial-time classical algorithm were obtained, quantum polynomial speedups may still persist. 
An Arrhenius law generically gives a mixing time of $e^{O(\beta)}$ due to a free energy barrier~\cite{bovier2016metastability}, whereas Barvinok's algorithm generically applied on a zero-free strip has doubly exponential dependence on $\beta$ (\Cref{app:disk-to-strip-mapping}). 
For the SYK model, the zero-free region available yields exponential dependence in $\beta$ classically (\Cref{app:disk-to-wedge-mapping}), whereas the quantum mixing time may only be polynomial in $\beta$~\cite{almheiri2024universal,zhang2019evaporation,maldacena2021syk,chen2021fast,schuster2025cooling}. However, proving polynomial quantum mixing times is often difficult, although recent results achieve quasi-polynomial mixing times for lattice models assuming a spectral gap~\cite{bergamaschi2025quantum}.

Gibbs-related tasks beyond thermal expectations may also provide an avenue for quantum advantage. Sampling (i.e., measuring in a fixed basis) is known to be sufficient for quantum advantage at constant temperature~\cite{bergamaschi2024quantum}, although its relevance for physically relevant tasks is less apparent. Quantities that require real time evolution (e.g., estimating the density of states) may also be promising candidates for quantum advantage for natural Hamiltonian ensembles~\cite{brandao2008entanglement,chen2024sparse}.
Although analytic continuation can be applied to $e^{-iHt}$, the exponential (or greater) cost in $t$ and presence of zeros along the imaginary axis prevents Barvinok's algorithm from efficiently computing quantities such as out-of-time correlators~\cite{larkin1969quasiclassical,swingle2016measuring,roberts2017chaos,google2025observation} beyond constant time. 
At constant time, however, we note in~\Cref{app:otoc} that Barvinok's method yields an exponential improvement in error dependence compared to a naive Lieb-Robinson truncation of Heisenberg evolution.

A central theme in the classical theory is that zero-freeness, correlation decay, fast mixing, and algorithmic tractability often coincide in several families of models \cite{dobrushin1985completely,jerrum1993polynomial,weitz2006counting,ding2009mixing,barvinok2016combinatorics,regts2018zero,shao2021contraction,regts2023absence,liu2025correlation,huang2025strong}. These relationships are subtle and can be physically informative: for some natural models, thermal expectation estimation can remain efficient beyond the efficient sampling regime~\cite{bencs2025zeros}.
In the quantum setting, relations between these notions are less clear~\cite{brandao2019finite} and further complicated by phenomena such as entanglement and the sign problem~\cite{chen2025sign,jiang2025positive}. 
Recent work indicates that clarifying these links yields new structural understanding of quantum Gibbs states~\cite{zlokapa2025average,bakshi2025dobrushin}.
Our work suggests that understanding such properties can also help relate the algorithmic transitions of quantum and classical algorithms.

\PRLsec{Acknowledgments}%
We thank Eric Anschuetz, Jordan Cotler, Ike Chuang, Daniel Lee, and Kuikui Liu for enlightening conversations. We especially thank Chi-Fang Chen, Aram Harrow, Saeed Mehraban, and Jiaqing Qiang for comments on a preliminary draft of this work, and Francisco Pernice for introducing us to the classical literature as well as for subsequent discussions and comments. AZ is supported by a Hertz Fellowship.

\bibliography{References.bib}
\bibliographystyle{apsrev4-1}

\clearpage
\onecolumngrid
\appendix
\setcounter{secnumdepth}{3}

\section{Barvinok's interpolation method for quantum Hamiltonians}
\subsection{Partition function and local observables}

To estimate thermal expectations to additive error, we estimate the partition function of a perturbed Hamiltonian to a particular relative error. We describe this procedure following the setting of \cite{bravyi2021complexity}.
\begin{problem}[Quantum Partition Function (QPF)] \label{problem:QPF}
Given a Hamiltonian $H$ acting on $n$ qubits and a precision parameter $\epsilon = O(\poly{1/n})$, compute an estimate $\widetilde{Z}$ to $Z_{H}(\beta) := \Tr[e^{-\beta H}]$ such that
\begin{equation}
(1 - \epsilon)Z_{H}(\beta) \leq \widetilde{Z} \leq (1 + \epsilon)Z_{H}(\beta).
\end{equation}    
\end{problem}

Ref. \cite{bravyi2021complexity} shows that the hardness of QPF (\Cref{problem:QPF}) is essentially equivalent to that of estimating the expected value of observables with respect to the Gibbs measure, which we formalize below.

\begin{problem}[Quantum Mean Value (QMV)] \label{problem:QMV}
Given a Hamiltonian $H$ acting on $n$ qubits, a Pauli or fermionic monomial $A$, and precision parameter $\delta = O(\poly{1/n})$, compute an estimate $\widetilde{\mu}$ to $\iprhob{A}:= \Tr[A \rho_\beta]$ such that
\begin{equation}
\left| \widetilde{\mu}  - \iprhob{A} \right| \leq \delta.
\end{equation}
\end{problem}

The value of an observable can be estimated by solving QPF via estimating a derivative of the log partition function for real-valued $\beta$, i.e.,
\begin{equation}
    \iprhob{A} = -\frac{1}{\beta} \frac{\partial}{\partial \lambda} \log Z_{H + \lambda A}(\beta) \bigg\vert_{\lambda = 0}.
\end{equation}
This was formally shown in \cite{bravyi2021complexity} and restated below.

\begin{lemma}[QMV from QPF; implicit in Lemma 11 of \cite{bravyi2021complexity}] \label{lem:QMV_from_QPF}
    Given a Hamiltonian $H$ acting on $n$ qubits and a Pauli or fermionic monomial $A$, the observable expectation $\iprhob{A}$ in QMV (\Cref{problem:QMV}) can be estimated to error $\delta = 1/\poly{n}$ by solving QPF (\Cref{problem:QPF}) for $Z_{H}(\beta)$ and $Z_{H+\delta A}(\beta)$ to precision $\epsilon=O(\delta^2)$.
\end{lemma}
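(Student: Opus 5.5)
The plan is to use a finite-difference estimate of the $\lambda$-derivative of $\log Z_{H+\lambda A}(\beta)$ and bound its error by Taylor's theorem in $\lambda$. Write $f(\lambda) := \log Z_{H+\lambda A}(\beta)$ for real $\lambda$. Since $Z$ is a positive, smooth function of $\lambda$, Duhamel's formula gives
\begin{equation*}
f'(0) = -\beta\,\iprhob{A}.
\end{equation*}
We will output the estimator
\begin{equation*}
\widetilde{\mu} := -\frac{\log \widetilde{Z}_{H+\delta A} - \log \widetilde{Z}_{H}}{\beta\delta},
\end{equation*}
where $\widetilde{Z}_{H}$ and $\widetilde{Z}_{H+\delta A}$ are the QPF outputs at precision $\epsilon$. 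The error then splits as
\begin{equation*}
|\widetilde{\mu} - \iprhob{A}| \leq \frac{1}{\beta\delta}\bigl|f(\delta)-f(0)-\delta f'(0)\bigr| + \frac{1}{\beta\delta}\Bigl(\bigl|\log \widetilde{Z}_{H+\delta A} - f(\delta)\bigr| + \bigl|\log \widetilde{Z}_{H} - f(0)\bigr|\Bigr).
\end{equation*}

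\textbf{Statistical term.} A multiplicative $(1\pm\epsilon)$ guarantee gives additive error at most $2\epsilon$ in $\log Z$ for $\epsilon\le 1/2$. This term is therefore at most $4\epsilon/(\beta\delta)$, which is $O(\delta/\beta)$ when $\epsilon = O(\delta^2)$.

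\textbf{Truncation term.} By Taylor's theorem the truncation term is at most $\frac{\delta}{2\beta}\sup_{|\lambda|\le\delta}|f''(\lambda)|$. The main obstacle is bounding $f''$ uniformly, with no assumption on the Hamiltonian.

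For this I will use the standard identity
\begin{equation*}
f''(\lambda) = \beta^2 \int_0^1 ds\, \Bigl(\langle A(s) A\rangle_{\lambda} - \langle A\rangle_{\lambda}^2\Bigr),
\end{equation*}
where $A(s) = \rho^{s} A \rho^{-s}$ is taken with respect to $\rho = e^{-\beta(H+\lambda A)}/Z$. This is the Kubo--Mori (Bogoliubov) variance of $A - \langle A\rangle$. It is nonnegative and bounded by the ordinary variance, which is at most $\|A\|^2 = 1$ for a Pauli or Majorana monomial; for Majorana monomials one may need a rescaling by a constant. The bound follows either from the operator monotonicity argument or directly in the eigenbasis of $H+\lambda A$, since $\frac{p_i - p_j}{\log p_i - \log p_j} \leq \frac{p_i+p_j}{2}$ (logarithmic mean versus arithmetic mean). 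Hence $0 \le f'' \le \beta^2$, and the truncation term is at most $\beta\delta/2$.

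Combining the two bounds gives total error $O((\beta + 1/\beta)\,\delta)$. For constant $\beta$ this is $O(\delta)$, and rescaling $\delta$ by a constant factor yields the claimed error $\delta = 1/\poly{n}$ from QPF precision $\epsilon = O(\delta^2)$. If $\beta$ is allowed to scale with $n$, the same argument holds with the polynomial factors in $\beta$ absorbed into the choice of $\delta$.
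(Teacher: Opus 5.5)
Your finite-difference argument is correct and follows the route the paper indicates; the paper gives no proof of its own, only the difference quotient of $\log Z_{H+\lambda A}(\beta)$ and a citation to Bravyi et al., and your Kubo--Mori/log-mean bound $0\le f''\le\beta^2\|A\|^2$ supplies the missing second-derivative control. Two small corrections: first, the imaginary-time operator should be $A(s)=\rho^{-s}A\rho^{s}$, since your $\rho^{s}A\rho^{-s}$ gives $\Tr(\rho^{1+s}A\rho^{-s}A)$ rather than the Duhamel correlator $\Tr(\rho^{1-s}A\rho^{s}A)$, which is the one whose eigenbasis kernel you actually use; second, instead of rescaling by factors of $\beta$, take the perturbation step $h=\delta/\beta$, which bounds the total error by $\tfrac{\beta h}{2}+\tfrac{4\epsilon}{\beta h}=\tfrac{\delta}{2}+\tfrac{4\epsilon}{\delta}\le\delta$ for $\epsilon=\delta^2/8$ uniformly in $\beta$ (with step exactly $\delta$, as literally stated, error $\delta$ holds only for $\beta=O(1)$).
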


In other words, the classical algorithm that we study can estimate any bounded observable written as a sum of Pauli or fermionic operators assuming the perturbed model is zero-free. Notably, this does not require an explicit representation of a state (e.g., expressed via a circuit or tensor network).

\subsection{Barvinok's interpolation}
Barvinok's interpolation \cite{barvinok2016combinatorics} is a method introduced to compute $Z_H(\beta)$ from its Taylor series at $\beta = 0$.
This was used in the quantum setting by Ref. \cite{harrow2020classical} and various other works \cite{bravyi2021classical,mann2024algorithmic,wild2023classical}.
Barvinok's interpolation method parametrizes $Z_H(t)$ along the real segment $t\in[0,\beta]$ and approximates $\log Z_H$ by truncating its Taylor series at $t=0$. 
The algorithm succeeds in approximating the complex-valued partition function in its zero-free region. 
We begin by describing the algorithm in the following two regions.

\begin{definition}[Closed and open disk]
    Given $R>0$, the open disk $\mathbb D_R$ is the set $\mathbb D_R:=\{t\in\mathbb{C}:|t| < R\}$ and the closed disk $\overline{\D_R}$ is the set $\overline{\mathbb D_R}:=\{t\in\mathbb{C}:|t| \le R\}$.
\end{definition}

Central to Barvinok's interpolation is an error guarantee of the truncated Taylor series. 
We depart from the standard presentation of Barvinok's interpolation and provide a bound on the truncation error of arbitrary complex functions in their analytic regions below; this will eventually be useful for approximating out-of-time-order correlators. 
Later, we adapt this to apply to partition functions as well.

\begin{lemma}[Taylor truncation under a zero-free/analytic disk]\label{lem:disk-analytic}
Let $f:\mathbb D_R\to\mathbb C$ be holomorphic on the open disk $\mathbb D_R:=\{z\in\mathbb C:|z| < R\}$ with $R>1$.
Choose any radius $\rho$ with $1<\rho<R$. 
Let $T_K(1):=\sum_{r=0}^K \frac{f^{(r)}(0)}{r!}$ and $M_\rho:=\sup_{|z|\le \rho} |f(z)|$.
Then the truncation error is bounded by
\begin{equation}\label{eq:disk-analytic-error}
    \bigl|f(1)-T_K(1)\bigr| \le  \frac{M_\rho}{\rho^{K+1}(1-\frac{1}{\rho})}.
\end{equation}
Equivalently, for any $\epsilon\in(0,1)$ it suffices to choose
\begin{equation}\label{eq:disk-analytic-K}
    K \ge 
    \left\lceil \frac{\log \left(M_\rho\right) - \log\left(\epsilon(1-1/\rho))\right)}{\log \rho} - 1\right\rceil
    \end{equation}
to ensure $\bigl|f(1)-T_K(1)\bigr|\le \epsilon$.
\end{lemma}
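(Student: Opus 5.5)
The plan is to use Cauchy's estimates on the circle of radius $\rho$ to bound each Taylor coefficient, and then sum the resulting geometric tail. Since $f$ is holomorphic on $\mathbb D_R$ and $\rho<R$, the closed disk $\overline{\mathbb D_\rho}$ lies inside the domain of holomorphy. Hence $M_\rho$ is finite, since $f$ is continuous on the compact set $\overline{\mathbb D_\rho}$. Moreover, the Taylor series $f(z)=\sum_{r\ge 0} a_r z^r$ with $a_r = f^{(r)}(0)/r!$ converges on $\mathbb D_R$. In particular it converges at $z=1$, because $1<\rho<R$. So $f(1)-T_K(1)=\sum_{r>K} a_r$ exactly.

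The key estimate is Cauchy's integral formula for the coefficients,
\begin{equation*}
a_r = \frac{1}{2\pi i}\oint_{|z|=\rho}\frac{f(z)}{z^{r+1}}\,dz .
\end{equation*}
Bounding the integrand by $M_\rho/\rho^{r+1}$ and the contour length by $2\pi\rho$ gives $|a_r|\le M_\rho\,\rho^{-r}$. Summing the tail then gives
\begin{equation*}
|f(1)-T_K(1)| \le \sum_{r=K+1}^\infty M_\rho \rho^{-r} = \frac{M_\rho\,\rho^{-(K+1)}}{1-1/\rho},
\end{equation*}
which is exactly \eqref{eq:disk-analytic-error}. The series converges because $\rho>1$.

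For the equivalent form \eqref{eq:disk-analytic-K}, I would set the right-hand side to be at most $\epsilon$ and take logarithms. The condition becomes $(K+1)\log\rho \ge \log M_\rho - \log(\epsilon(1-1/\rho))$. Dividing by $\log\rho>0$ and rearranging gives the stated lower bound on $K$, and rounding up to an integer preserves it. If $M_\rho<\epsilon(1-1/\rho)$, the bound is trivially satisfied for any $K\ge 0$.

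There is no real obstacle in this argument. The only points needing care are justifying that the series converges at $z=1$, which follows from $1<R$, and that $M_\rho<\infty$, which holds because $\rho<R$ strictly; both use the strict inequalities in the hypotheses.
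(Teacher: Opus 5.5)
Your proposal is correct and follows essentially the same argument as the paper: Cauchy's estimate $|a_r|\le M_\rho\rho^{-r}$ on the circle $|z|=\rho$, then summing the geometric tail $\sum_{r\ge K+1}\rho^{-r}$, then solving the resulting inequality for $K$. Your remarks on the finiteness of $M_\rho$ and the convergence of the series at $z=1$ are correct details that the paper leaves implicit.
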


\begin{proof}
Denote the coefficients of the Maclaurin series by $a_r$, i.e., $f(z)=\sum_{r\ge 0}a_r z^r$.
Cauchy's estimate on the circle $|z|=\rho$ gives $|a_r|=|f^{(r)}(0)|/r!\le M_\rho/\rho^r$. 
Therefore, $\sum_{r\ge K+1}|a_r|\le M_\rho \sum_{r\ge K+1}\rho^{-r}= M_\rho \rho^{-(K+1)}/(1-1/\rho)$, which yields \eqref{eq:disk-analytic-error} and \eqref{eq:disk-analytic-K}.
\end{proof}

For the free energy $-\frac{1}{\beta} \log Z_H$ to be analytic in the setting of the above lemma, we need $Z_H(t) \neq 0$ in a disk (or the complex $b$-neighborhood of $[0,\beta]$).
Approximating $\log Z_H(t)$ using a Taylor series truncation to order $K=O(\log(n/\epsilon))$ then yields a $(1\pm \epsilon)$ multiplicative approximation to $Z_H(\beta)$.
These Taylor coefficients can be computed for a Hamiltonian
\begin{equation}
H=\sum_{a=1}^{m} h_a,
\end{equation}
by computing moments $\Tr(H^r)$ for $r \leq K$; if each $h_a$ is a Pauli or fermionic monomial, then this is efficient. We now detail the guarantees of Barvinok's interpolation when $Z_H$ is zero-free in a closed disk. 
\begin{theorem}[Barvinok interpolation under a zero-free disk]\label{thm:disk-barvinok}
    Let $H=\sum_{i=1}^m J_i P_i$ be an $n$-qubit Hamiltonian with $m$ Pauli or fermionic monomials and let $Z_H(t)=\Tr(e^{-tH})$. Fix $R>0$ and suppose $Z_H(t)\neq 0$ for all $t$ in the open disk $\mathbb D_R$. For any $|\beta|<R$, we can choose
    \begin{equation}\label{eq:K-choice-disk}
    K = O\left(\frac{1}{1-|\beta|/R}\,\left[\log \left(n+R\|H\|\right)+\log(1/\epsilon)\right]\right), 
    \end{equation}
    so that the implementation of Ref.~\cite{harrow2020classical} of Barvinok's interpolation solves \Cref{problem:QPF}, computing an estimate $\widetilde{Z}$ such that
    \begin{equation}
    (1-\epsilon) Z_H(\beta)   \le   \widetilde{Z}   \le   (1+\epsilon) Z_H(\beta),
    \end{equation}
    using the moments $M_r':=\Tr(H^r)$ for $r\le K$ and total time $O  \left(m^{K} \mathrm{poly}(n,K)\right)$. 
    In particular, if $m,\|H\|,R=\poly{n}$, $\epsilon = 1/\poly{n}$, and $|\beta|\leq (1-\alpha)R$ for $\alpha>0$, then the runtime is $n^{O(\log n)}$ (quasi‑polynomial).
\end{theorem}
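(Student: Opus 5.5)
The argument applies \Cref{lem:disk-analytic} to $f(z) := \log Z_H(z\beta) - \log Z_H(0)$, where $Z_H(0)=2^n$ (or the fermionic Hilbert space dimension). Since $Z_H \neq 0$ on $\mathbb D_R$ and the disk is simply connected, a holomorphic branch of $\log Z_H$ exists there. Hence $f$ is holomorphic on the disk $|z| < R/|\beta| =: R'$, with $R' > 1$ and $f(0)=0$. Choose $\rho = (1+R')/2$, so that $\log\rho \geq c\,(1-|\beta|/R)$ up to constants when $R'$ is close to $1$; when $R'$ is large, $\log \rho \geq \Omega(1)$ suffices. Then \eqref{eq:disk-analytic-K} gives $K = O(\log(M_\rho/\epsilon)/\log\rho)$. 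If $f(1)$ is approximated to additive error $\epsilon/2$, exponentiating gives $\widetilde Z = 2^n e^{T_K(1)}$ within multiplicative factor $(1\pm\epsilon)$, which solves \Cref{problem:QPF}.

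\textbf{Bounding $M_\rho$ (the main obstacle).} We must control $\sup_{|z|\le\rho}|f(z)|$, i.e., $|\log Z_H(t)|$ for $|t|\le \rho|\beta| < R$. The real part is easy from above: $\log|Z_H(t)| \le n\log 2 + |t|\|H\|$. The imaginary part and the lower bound on $\log|Z_H|$ require zero-freeness. The plan is to use the Borel--Carathéodory inequality on the larger disk of radius $\rho_2 = (\rho|\beta| + R)/2$, or more precisely on a slightly smaller disk inside $\mathbb D_R$ where $Z_H$ is still holomorphic and nonvanishing. With $g = \log Z_H - \log Z_H(0)$ and $\Re g \le |t|\,\|H\| \le R\|H\|$ on that disk, Borel--Carathéodory yields
\[
\sup_{|t|\le \rho|\beta|} |g(t)| \le \frac{2\rho|\beta|}{\rho_2-\rho|\beta|}\, R\|H\| .
\]
This gives $\log M_\rho = O(\log(R\|H\|) + \log\frac{1}{1-|\beta|/R})$, so $K = O\big(\frac{1}{1-|\beta|/R}[\log(n+R\|H\|)+\log(1/\epsilon)]\big)$. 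The extra $\log\frac{1}{1-|\beta|/R}$ term should be absorbed into the prefactor; if it is not, I would accept it and note that it is harmless in the quasi-polynomial regime. The factor $1/(1-|\beta|/R)$ comes from $1/\log\rho$. One has to be careful to shrink the radii twice so that all disks stay strictly inside $\mathbb D_R$.

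\textbf{Computing the Taylor coefficients.} The coefficients $f^{(r)}(0)/r!$ are $\beta^r$ times the cumulants of $-H$ under the normalized trace. These follow from the moments $M'_r = \Tr(H^r)$ for $r\le K$ via the standard Newton-type recursion used by \cite{harrow2020classical}: writing $Z' = Z\cdot (\log Z)'$ and matching coefficients gives the log-derivatives in $O(K^2)$ arithmetic operations. Each moment $\Tr(H^r)$ expands into at most $m^r$ products of Pauli or fermionic monomials, and the trace of each product is computable in $\poly{n,r}$ time because the product of monomials is again a monomial times a phase, and its trace is nonzero only if it is proportional to the identity. The total cost is therefore $O(m^K\poly{n,K})$. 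Finally, plugging in $m,\|H\|,R=\poly{n}$, $\epsilon=1/\poly{n}$ and $1-|\beta|/R\ge\alpha$ gives $K=O(\log n/\alpha)$, and hence runtime $n^{O(\log n)}$.
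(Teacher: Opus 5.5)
Your proposal is correct and follows essentially the same route as the paper: a holomorphic branch of $\log Z_H$ on the simply connected disk, Cauchy estimates at radius $(R+|\beta|)/2$ (your rescaled $\rho=(1+R/|\beta|)/2$ is the same choice), Borel--Carath\'eodory with $|Z_H(t)|\le 2^n e^{|t|\|H\|}$ to bound $M_\rho$, and the moment-to-cumulant recursion over at most $m^K$ monomial products. Your differences are minor refinements. You normalize so that $f(0)=0$, and you apply Borel--Carath\'eodory on an intermediate radius strictly inside $\mathbb D_R$. You also explicitly flag the additive $\log\frac{1}{1-|\beta|/R}$ term, which the paper's asymptotic choice of $K$ drops without comment.
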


\begin{proof}
    This result is implicit in \cite{harrow2020classical} though we provide a self contained proof here to have explicit dependencies on $\beta$, $R$, and $\|H\|$. 
    
    Following the proof of \Cref{lem:disk-analytic}, define $f(t):=\log Z_H(t)$ with expansion $f(t) = \sum_{r=0}^\infty a_r t^r$ for $a_r := f^{(r)}(0)/r!$. Define the degree-$K$ Taylor truncation $T_K(\beta):=\sum_{r=0}^K a_r \beta^r$. 
    Define the following:
    \begin{equation}
        \rho:=\frac{R+|\beta|}{2}\in(|\beta|,R),\qquad q:=\frac{|\beta|}{\rho}\in(0,1),\qquad 
        M_\rho:=\sup_{|t|\le \rho}|f(t)|.
    \end{equation}
    By Cauchy's estimate on the circle $|t|=\rho$,
    \begin{equation}
    |a_r|=\frac{|f^{(r)}(0)|}{r!}\le \frac{M_\rho}{\rho^r}.
    \end{equation}
    We therefore have
    \begin{equation}
    \left| f(\beta)-T_K(\beta)\right|
      \le   \sum_{r=K+1}^\infty |a_r| |\beta|^r
      \le   M_\rho \sum_{r=K+1}^\infty q^{ r}
      =   M_\rho \frac{q^{ K+1}}{1-q}.
    \end{equation}
    Note that $|\beta| < \rho < R $ and by the Borel-Carathéodory theorem,
    \begin{equation}
    M_\rho \leq \frac{2\rho}{R-\rho} \sup_{|t|\le R}\Re f(t) + \frac{R+\rho}{R-\rho}\,|f(0)|
    \leq \frac{2\rho}{R-\rho} \left(n\log 2 + R\|H\|\right) + \frac{R+\rho}{R-\rho} n\log 2,
    \end{equation}
    since $|Z_H(t)|\le 2^n e^{R\|H\|}$. 
    Choose $K$ as in \eqref{eq:K-choice-disk} so that $| f(\beta)-T_K(\beta)|\le \log(1+\epsilon)$. 
    More explicitly, it suffices to choose
    \begin{equation}
        K\ge\left\lceil \frac{\log\left(\dfrac{M_\rho}{\log(1+\epsilon)}\right) - \log(1-q)}{\log(\rho/|\beta|)} \right\rceil.
    \end{equation}
    Note that asymptotically, it thus suffices to choose $K$ satisfying
    \begin{align}\label{eq:kdisk}
        K = \Theta\lr{\frac{1}{1-|\beta|/R}\log \frac{n + R\norm{H}}{\epsilon}}.
    \end{align}
    Our estimate is then $\widetilde{Z}:=\exp  \left(T_K(\beta)\right)$, and
    \begin{equation}
    \frac{\widetilde{Z}}{Z_H(\beta)}   =   \exp  \left(T_K(\beta)-f(\beta)\right)
      \in   \left[e^{-\log(1+\epsilon)}, e^{\log(1+\epsilon)}\right]
      =   \left[(1+\epsilon)^{-1}, 1+\epsilon\right]
      \subseteq   [ 1-\epsilon, 1+\epsilon ]
    \end{equation}
    for $\epsilon\in(0,1/2)$. 

    Now we bound the runtime to compute $f(t)$ for the branch with $f(0)=n\log 2$.
    The implementation of Ref.~\cite{harrow2020classical} computes $a_r$ for $r\le K$ as follows. 
    First compute the moments $\mu_r':=\Tr(H^r)$, which are the coefficients of $Z_H(t)=\sum_{r\ge 0} \mu_r' t^r/r!$, and then convert $\{\mu_r'\}_{r\le K}$ to the \emph{cumulants} $\{\kappa_r\}_{r\le K}$, which are the derivatives of $\log Z_H$ at $0$, via the moment-to-cumulant recurrence costing $O(K^2)$ arithmetic operations:
    \begin{equation}
    \kappa_1=\mu_1',\qquad
    \kappa_r=\mu_r'-\sum_{i=1}^{r-1}\binom{r-1}{i-1}\kappa_i \mu_{r-i}'\quad (r\ge 2),
    \end{equation}
    and $a_r=\kappa_r/r!$. 
    The moment step costs $O(m^r\,\mathrm{poly}(n))$ for each $r\le K$, hence $O(m^K\,\mathrm{poly}(n,K))$ overall. By \eqref{eq:kdisk}, this yields a quasi-polynomial algorithm for $m, \norm{H}, R, 1/\epsilon = O(\poly{n})$ and $|\beta| < (1-\alpha)R$.
\end{proof}

\begin{remark}[Fully polynomial time approximation scheme (FPTAS)]
    To achieve a fully polynomial time approximation scheme (FPTAS), the computation of the $K$-th moment must run in polynomial time in $n$ and $1/\epsilon$. 
    For local, low degree Hamiltonians, computing degree $K=O(\log n)$ moments in polynomial time is often possible from cluster expansions or other methods \cite{mann2024algorithmic,patel2022approximate,wild2023classical}. 
    We hope that future work can also provide polynomial time approximations to moments for the dense and nonlocal instances that we study here.
\end{remark}

\begin{figure}
    \centering
    \begin{minipage}{0.42\textwidth}
        \centering
        \includegraphics[width=\textwidth]{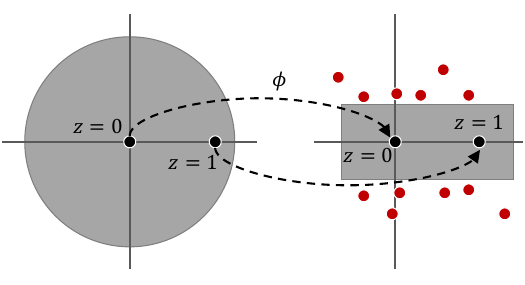}\\
        \small\textbf{(a)} Strip mapping (\Cref{lem:barvinok-strip-map})
        \label{fig:interpolator_strip}
    \end{minipage}
    \hfill
    \begin{minipage}{0.42\textwidth}
        \centering
        \includegraphics[width=\textwidth]{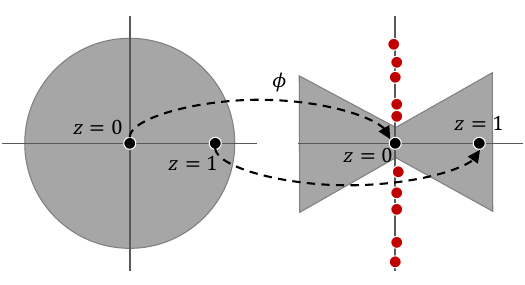}\\
        \small\textbf{(b)} Wedge mapping (\Cref{lem:wedge-avoiding-mapping})
        \label{fig:interpolator_wedge}
    \end{minipage}

    \caption{Illustrative schematic of the mapping $\phi$ used to avoid zeros of the partition function $Z_H(z)$ when applying Barvinok's interpolation. Red dots are example zeros that are avoided by the mapping.}
    \label{fig:interpolator_illustrative}
\end{figure}

If the zero-free regime is a disk, then \Cref{thm:disk-barvinok} can be applied directly.
In practice, however, the zero-free strip may be something else, whereby one then estimates the partition function by composing it with a zero avoiding map $\phi:\mathbb{D}_R \to \mathbb{C}$ that is holomorphic and avoids the zeros in the complex plane \cite{barvinok2016combinatorics,barvinok2018approximating}.
Here, \Cref{thm:disk-barvinok} is applied to the function $f(z) = \log Z_H(\beta \phi(z))$ with a mapping such that $\phi(0)=0$ and $\phi(1)=1$.
Two particular maps $\phi:\mathbb{D}_R \to \mathbb{C}$ that we consider are the mapping from a disk to a strip and a mapping from a disk to a wedge-avoiding region (see \Cref{fig:interpolator_illustrative}). 
We detail these two mappings in the sections that follow.

\subsection{Disk to strip}
\label{app:disk-to-strip-mapping}

We include here Barvinok's mapping of a disk to a strip.

\begin{lemma}[Disk to strip; Lemma 2.2.3 of \cite{barvinok2016combinatorics}] \label{lem:disk-to-strip-barvinok}
    For $0 < \rho < 1$, define
    \begin{equation}
        \alpha = \alpha(\rho) := 1 - e^{-\frac{1}{\rho}}, \quad \beta = \beta(\rho) := \frac{1 - e^{-1-\frac{1}{\rho}}}{1 - e^{-\frac{1}{\rho}}} > 1,
    \end{equation}
    and let
    \begin{equation}
    N = N(\rho) := \left\lfloor \left(1 + \frac{1}{\rho}\right) e^{1+\frac{1}{\rho}} \right\rfloor \ge 14, \quad \sigma = \sigma(\rho) := \sum_{m=1}^N \frac{\alpha^m}{m}.
    \end{equation}
    Define
    \begin{equation}
    \phi(z) = \phi_{\rho}(z) := \frac{1}{\sigma} \sum_{m=1}^N \frac{(\alpha z)^m}{m}.
    \end{equation}
    Then $\phi(z)$ is a polynomial of degree $N$ such that $\phi(0) = 0$, $\phi(1) = 1$,
    \begin{equation}
    -\rho \le \Re \phi(z) \le 1 + 2\rho \quad \textit{and} \quad |\Im \phi(z)| \le 2\rho \quad \textit{provided} \quad |z| \le \beta.
    \end{equation}
\end{lemma}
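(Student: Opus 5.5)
The plan is to treat $\phi$ as a truncation of the logarithm. Since $\alpha<1$,
\[
\sum_{m\ge1}\frac{(\alpha z)^m}{m}=-\log(1-\alpha z).
\]
Put $g(z):=-\log(1-\alpha z)$, so $\phi=\frac{1}{\sigma}\bigl(g(z)-\text{tail}\bigr)$. The trivial properties come first: $\phi$ is a polynomial of degree $N$ with $\phi(0)=0$, and $\phi(1)=\sigma/\sigma=1$ by the definition of $\sigma$. The choice of $\beta(\rho)$ is designed so that $\alpha\beta = 1-e^{-1-1/\rho}$. Hence for $|z|\le\beta$ we have $|1-\alpha z|\ge 1-\alpha\beta=e^{-1-1/\rho}$ and $|1-\alpha z|\le 1+\alpha\beta<2$.

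Next I bound the untruncated map. The quantity $w=1-\alpha z$ lies in the disk of radius $\alpha\beta<1$ about $1$, so $w$ has positive real part. Therefore $|\Im g(z)|=|\arg w|<\pi/2$. For the real part, $\Re g=-\log|w|$ lies between $-\log 2$ and $1+1/\rho$.

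I then bound the tail:
\[
\Bigl|\sum_{m>N}\frac{(\alpha z)^m}{m}\Bigr|\le \frac{1}{N+1}\cdot\frac{(\alpha\beta)^{N+1}}{1-\alpha\beta}\le \frac{e^{1+1/\rho}}{N+1}\le \frac{1}{1+1/\rho}=\frac{\rho}{1+\rho},
\]
using $N+1\ge(1+1/\rho)e^{1+1/\rho}$. This explains the choice of $N$. The condition $N\ge14$ follows since $\rho<1$ gives $(1+1/\rho)e^{1+1/\rho}>2e^2\approx14.8$.

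Dividing by $\sigma$ requires a lower bound on $\sigma$, and this is where the constants get tight. We have $\sigma\approx-\log(1-\alpha)=1/\rho$ minus a tail of size at most $\rho/(1+\rho)$ (the same tail estimate at $z=1$). So $\sigma\ge 1/\rho-\rho/(1+\rho)$, i.e.\ $1/\sigma\approx\rho$ up to a factor $1+O(\rho^2)$. Combining the pieces:
\begin{align*}
|\Im\phi| &\le \frac{\pi/2+\rho/(1+\rho)}{\sigma},\\
\Re\phi &\ge \frac{-\log2-\rho/(1+\rho)}{\sigma},\\
\Re\phi &\le \frac{1+1/\rho+\rho/(1+\rho)}{\sigma}.
\end{align*}

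The main obstacle is squeezing these into exactly $2\rho$, $-\rho$ and $1+2\rho$ for all $\rho\in(0,1)$. Two issues stand out. First, the lower real bound needs a sharper estimate, $\Re g\ge-\log(1+\alpha\beta)$, combined with careful bookkeeping. Second, the upper real bound is about $\rho(1+1/\rho)=1+\rho$ plus corrections coming from $1/\sigma$ exceeding $\rho$. Since this is Barvinok's Lemma 2.2.3, I would follow his computation: bound $\sigma$ precisely via $\sigma=-\log(1-\alpha)-\sum_{m>N}\alpha^m/m$, and check the worst case $\rho\to1$ numerically. At $\rho\to1$ we have $N=\lfloor 2e^2\rfloor=14$ and $\alpha=1-e^{-1}$, so the tail term is tiny there and the inequalities should have slack.
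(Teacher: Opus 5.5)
Your strategy is the standard one behind Barvinok's Lemma 2.2.3; the paper only cites that lemma and gives no proof of its own. You write $\phi$ as $\sigma^{-1}$ times a truncation of $g(z)=-\log(1-\alpha z)$, bound $\Re g$ and $\Im g$ on $|\alpha z|\le\alpha\beta=1-e^{-1-1/\rho}$, and use $\sigma\approx 1/\rho$.

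There is, however, a genuine gap at the tail estimate, which is exactly the step that makes the constants work. You bound $(\alpha\beta)^{N+1}\le 1$ and get a tail of size $\rho/(1+\rho)$. With that tail the stated bounds cannot be reached. Since $\sigma\le 1/\rho$ trivially, $2\rho\sigma\le 2$, while your numerator $\pi/2+\rho/(1+\rho)$ exceeds $2$ for $\rho\gtrsim 0.75$. So $|\Im\phi|\le 2\rho$ fails there however precisely $\sigma$ is known. Your lower bound $\sigma\ge 1/\rho-\rho/(1+\rho)$ is also off by a factor $2$ at $\rho=1$.

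Your proposed remedies do not repair this. Sharpening $-\log 2$ to $-\log(1+\alpha\beta)$ is unnecessary and changes little. A numerical check as $\rho\to1$ is not a proof for all $\rho\in(0,1)$, and it addresses only the tail at $z=1$, not the tail at $|\alpha z|=\alpha\beta$. Relatedly, your explanation of $N$ misses its role. The factor $1+1/\rho$ in $N$ is there to make $(\alpha\beta)^{N+1}$ exponentially small.

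The fix is to keep that factor. With $\delta:=e^{-1-1/\rho}$ and $|w|\le 1-\delta$,
\[
\Bigl|\sum_{m>N}\frac{w^m}{m}\Bigr|\le\frac{(1-\delta)^{N+1}}{(N+1)\delta}\le\frac{e^{-\delta(N+1)}}{\delta(N+1)}\le\frac{e^{-1-1/\rho}}{1+1/\rho}=:\varepsilon .
\]
This uses $\delta(N+1)\ge 1+1/\rho$ and that $x\mapsto e^{-x}/x$ is decreasing. Note that $(1+\rho)\varepsilon=\rho e^{-1-1/\rho}\le e^{-2}$.

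Since $\alpha=1-e^{-1/\rho}\le 1-\delta$ and $-\log(1-\alpha)=1/\rho$, the same bound gives $1/\rho-\varepsilon\le\sigma\le 1/\rho$. Now keep your crude bounds $|\Im g|\le\pi/2$ and $-\log 2\le\Re g\le 1+1/\rho$, and use $1/\sigma\le\rho/(1-\rho\varepsilon)$. The three claims then reduce to:
\begin{itemize}
\item $|\Im\phi|\le 2\rho$ follows from $(1+2\rho)\varepsilon\le 2-\pi/2$;
\item $\Re\phi\ge-\rho$ follows from $(1+\rho)\varepsilon\le 1-\log 2$;
\item $\Re\phi\le 1+2\rho$ follows from $2(1+\rho)\varepsilon\le 1$.
\end{itemize}
All three hold, because the left-hand sides are increasing in $\rho$ and at $\rho=1$ equal $\tfrac32e^{-2}\approx0.20$, $e^{-2}\approx0.14$ and $2e^{-2}\approx0.27$. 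With this replacement your argument is complete.
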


As a corollary, we show the asymptotic dependence of the mapping above on the strip width $\sigma$. 
Later, we will show that this asymptotic dependence, $\deg(\phi) = O(e^{C_1 /\sigma}) $, is optimal.
\begin{lemma}[Adapted from Lemma 2.2.3 of \cite{barvinok2016combinatorics}]\label{lem:barvinok-strip-map}
Given $\sigma>0$ constant, there exists a radius $\hat r=1+e^{-\Theta(1/\sigma)}$ and a polynomial $\phi$ of degree 
$\deg(\phi) = O(e^{C_1 /\sigma}) $ for constant $C_1>0$ and asymptotically small $\sigma$, such that $\phi(0)=0$, $\phi(1)=1$, and for all $|z| \leq \hat r$, $|\Im \phi(z) | \leq \sigma$.
\end{lemma}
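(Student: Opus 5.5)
The plan is to derive the statement directly from \Cref{lem:disk-to-strip-barvinok} by choosing the parameter $\rho$ as a function of the target strip width $\sigma$. Then I would track how $N(\rho)$ and $\beta(\rho)$ scale as $\rho\to 0$.

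\emph{Choice of parameter.} Set $\rho := \min\{\sigma/2, 1/2\}$. For asymptotically small $\sigma$ this gives $\rho=\sigma/2$. Let $\phi := \phi_\rho$ be the polynomial of \Cref{lem:disk-to-strip-barvinok}, and set $\hat r := \beta(\rho)$. That lemma already gives $\phi(0)=0$ and $\phi(1)=1$. It also gives $|\Im\phi(z)|\le 2\rho \le \sigma$ for all $|z|\le \hat r$, which is the required strip bound. Only the quantitative claims remain: the degree bound and the form of $\hat r$.

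\emph{Degree.} We have $\deg\phi = N(\rho) = \lfloor (1+1/\rho)e^{1+1/\rho}\rfloor$. With $\rho=\sigma/2$, this is at most $(1+2/\sigma)e^{1+2/\sigma}$. For small $\sigma$ the prefactor satisfies $1+2/\sigma \le e^{1/\sigma}$, so $N \le e\cdot e^{3/\sigma}$. This is $O(e^{C_1/\sigma})$ with, say, $C_1=3$.

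\emph{Radius.} Write $x := e^{-1/\rho}\in(0,1)$. Then
\begin{equation*}
\beta(\rho) = \frac{1-x/e}{1-x} = 1 + \frac{x(1-1/e)}{1-x}.
\end{equation*}
For small $\rho$ we have $x\to 0$, so
\begin{equation*}
(1-1/e)\,x \;\le\; \beta(\rho)-1 \;\le\; 2(1-1/e)\,x .
\end{equation*}
Hence $\hat r - 1 = \Theta(e^{-1/\rho}) = \Theta(e^{-2/\sigma})$, which is of the form $e^{-\Theta(1/\sigma)}$. Moreover $\hat r>1$, as the lemma records.

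The only step needing care is this asymptotic bookkeeping. We must make sure that the polynomial prefactor $(1+1/\rho)$ in $N$ is absorbed into the exponential constant $C_1$. We must also make sure that the two-sided estimate on $\beta(\rho)-1$ is genuinely $e^{-\Theta(1/\sigma)}$ and not just an upper bound. Both follow from the elementary estimates above once $\sigma$ is below a fixed constant.
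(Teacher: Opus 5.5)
Your proposal is correct and follows the same route as the paper, which proves the lemma only by citing Barvinok's Lemma 2.2.3 (reproduced as \Cref{lem:disk-to-strip-barvinok}) as a direct application. Your choice $\rho=\sigma/2$ and your explicit estimates $N(\rho)\le e\cdot e^{3/\sigma}$ and $(1-1/e)e^{-1/\rho}\le \beta(\rho)-1\le 2(1-1/e)e^{-1/\rho}$ fill in the asymptotic bookkeeping that the paper leaves implicit.
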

\begin{proof}
    This is a direct application of Lemma 2.2.3 of \cite{barvinok2016combinatorics} which we also reproduce in \Cref{lem:disk-to-strip-barvinok} above for completion.
\end{proof}

In our setting, the strip width will always be constant, so this exponentially poor scaling contributes a constant. We now show that the disk-to-strip mapping asymptotically gives truncation degree
\begin{align}
    K = \exp[O(\beta)] \cdot O\lr{\log \frac{n}{\epsilon}},
\end{align}
which is optimal given only a zero-free strip.

\subsubsection{Optimality of zero-free strip mapping}
\label{app:disk-to-strip-optimality}

We now show that Barvinok's disk to strip polynomial in \Cref{lem:barvinok-strip-map} is essentially optimal. This implies that improvements to the classical algorithm for local observables in the absence of a phase transition (which only guarantees a zero-free strip) and for out-of-time-order correlators (which also have a zero-free strip) are optimal unless a better zero-free region can be shown. In the cases of the SYK model and stabilizer codes, we will have a larger zero-free region, leading to the disk-to-wedge map we study in the following subsection.

\begin{proposition}[Disk to strip limitation]\label{prop:disk-strip-optimality}
Set $0<\rho<1$ and $R>1$. 
Assume $\phi$ is holomorphic on the open disk $\mathbb D_R:$, satisfies $\phi(0)=0$, $\phi(1)=1$ and maps $\mathbb D_R$ into the horizontal strip
$\{w\in\mathbb{C}: |\Im w|< \rho\}$.
Then it must hold that
\begin{equation}\label{eq:key}
  \frac{\pi}{2\rho}  \le  \log \left(\frac{R+1}{R-1}\right),
\end{equation}
which implies the asymptotic relationship $R \leq 1+e^{-\Theta(1/\rho)}$ sending $1/\rho \to \infty$.
\end{proposition}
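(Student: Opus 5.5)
The plan is to use the Schwarz--Pick lemma, i.e.\ the fact that holomorphic maps between hyperbolic domains do not increase hyperbolic distance. Both the source $\mathbb D_R$ and the target strip $S_\rho:=\{w:|\Im w|<\rho\}$ are simply connected hyperbolic domains. I will normalize each to a standard model and compare the hyperbolic distance between the two points where $\phi$ is pinned, namely $0\mapsto 0$ and $1\mapsto 1$.

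\emph{Source side.} Rescale by $z\mapsto z/R$, which sends $\mathbb D_R$ to the unit disk with $0\mapsto 0$ and $1\mapsto 1/R$. In the unit disk with density $2|dz|/(1-|z|^2)$, the distance from $0$ to $r$ is $\log\frac{1+r}{1-r}$. Hence
\[
d_{\mathbb D_R}(0,1)=\log\frac{R+1}{R-1}.
\]

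\emph{Target side.} The map $w\mapsto \frac{\pi}{2\rho}w$ sends $S_\rho$ to the strip $\{|\Im u|<\pi/2\}$. Next, $u\mapsto e^{u}$ sends that strip to the right half-plane, and $\zeta\mapsto(\zeta-1)/(\zeta+1)$ sends the half-plane to the unit disk. Under this composite, $0\mapsto 0$ and $1\mapsto \tanh(\pi/(4\rho))$. Equivalently, the real axis of the strip is a geodesic, parametrized by arclength with density $\frac{\pi}{2\rho}|dw|$. Therefore
\[
d_{S_\rho}(0,1)=\frac{\pi}{2\rho}.
\]

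\emph{Comparison.} Schwarz--Pick gives $d_{S_\rho}(\phi(0),\phi(1))\le d_{\mathbb D_R}(0,1)$. Substituting the two distances yields exactly \eqref{eq:key}. If the normalization of $\phi$ does not sit well with the Schwarz--Pick formulation, I would instead apply the classical Schwarz lemma to
\[
g(z)=\Psi\bigl(\phi(Rz)\bigr),
\]
where $\Psi$ is the explicit conformal map from the strip to the disk with $\Psi(0)=0$. Then $g(0)=0$ and $|g|<1$, so $|g(1/R)|\le 1/R$. Since $g(1/R)=\tanh(\pi/(4\rho))$, this gives $\tanh(\pi/(4\rho))\le 1/R$, which is equivalent after applying $\operatorname{artanh}$.

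\emph{Asymptotics.} Exponentiating \eqref{eq:key} gives $\frac{R+1}{R-1}\ge e^{\pi/(2\rho)}$. Solving for $R$,
\[
R\le \frac{e^{\pi/(2\rho)}+1}{e^{\pi/(2\rho)}-1}=1+\frac{2}{e^{\pi/(2\rho)}-1}=1+e^{-\Theta(1/\rho)}.
\]

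The main obstacle is only the bookkeeping of the strip's conformal map and hyperbolic metric. In particular, I must confirm that the factor is $\pi/(2\rho)$ under the density convention that makes the disk distance $\log\frac{1+r}{1-r}$. Using the explicit Schwarz-lemma version avoids any convention mismatch, so that is the route I would write out.
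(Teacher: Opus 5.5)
Your proof is correct and follows essentially the same route as the paper: both normalize the strip to the unit disk via $w\mapsto C\bigl(e^{\pi w/(2\rho)}\bigr)$, with $C$ the Cayley map, and then invoke Schwarz--Pick. The only difference is that the paper uses the infinitesimal form $|g'(z)|/\Re g(z)\le 2R/(R^2-|z|^2)$ and integrates it along $[0,1]$, whereas you apply the two-point Schwarz lemma to $\tanh\bigl(\frac{\pi}{4\rho}\phi(Rz)\bigr)$ and get $\tanh(\pi/(4\rho))\le 1/R$ directly, which is the same inequality without the integration step.
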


\begin{proof}
Define
\begin{equation}
    g(z) :=  \exp \left(\frac{\pi}{2\rho} \phi(z)\right).
\end{equation}
Our proof will show bounds on the integral of the $\log$ of the function above from $t=0$ to $t=1$ to obtain the stated result: 
\begin{equation}\label{eq:SP_upper_lower}
    \begin{split}
        \frac{\pi}{2\rho} \le \int_0^1 \left|\frac{d}{dt}\log g(t)\right| \, d t \le \log\left( \frac{R+1}{R-1} \right).
    \end{split}
\end{equation}
Because $|\Im \phi|\le \rho$ on $\mathbb D_R$, we have $-\frac{\pi}{2}<\Im \left(\frac{\pi}{2\rho}\phi\right)<\frac{\pi}{2}$, and $\Re g(z)>0$ for all $z\in \mathbb D_R$. 
Therefore $g$ maps $\mathbb D_R$ into the right half‑plane $H:=\{w:\Re w>0\}$.

Define the function $C(w)=(w-1)/(w+1)$ which maps $H\to\mathbb{D}$, and define the composite function $F:=C\circ g$, which maps $\mathbb D_R$ into the unit disk $\mathbb{D}$. 
The Schwarz-Pick inequality gives
\begin{equation}\label{eq:SPbeta}
    |F'(z)|  \le  \frac{R}{R^2-|z|^2} \left(1-|F(z)|^2\right)
\end{equation}
for all $z\in \mathbb D_R$.
Using $C'(w)=2/(w+1)^2$ and the bound below for $\Re w>0$,
\begin{equation}
    1-|C(w)|^2  =  \frac{4 \Re w}{|w+1|^2},
\end{equation}
we have $F'(z)=C'(g(z)) g'(z)$ and $1-|F(z)|^2=4 \Re g(z)/|g(z)+1|^2$. Plugging these into \eqref{eq:SPbeta} gives
\begin{equation}\label{eq:metric-contract}
  \frac{|g'(z)|}{\Re g(z)}  \le  \frac{2R}{R^2-|z|^2}.
\end{equation}
Along the segment of the real line $[0,1]\subset \mathbb D_R$,
$\left|\frac{d}{dt}\log g(t)\right| \leq \frac{|g'(t)|}{\Re g(t)}$
and 
\begin{equation}
  \int_0^1 \frac{|g'(t)|}{\Re g(t)} dt
   \ge  \int_0^1 \left|\frac{d}{dt}\Re\log g(t)\right| dt
   \ge  \left|\Re\log g(1)-\Re\log g(0)\right|.
\end{equation}
Note that $\log g(z) = \frac{\pi}{2\rho} (\phi(z)-\phi(0))$. 
Using the assumptions $\phi(0)=0$ and $\phi(1)=1$ with \eqref{eq:metric-contract} and integrating from $t=0$ to $t=1$ gives
\begin{equation}
  \frac{\pi}{2\rho}
   \le  \int_0^1 \frac{2R}{R^2-t^2} dt
   =  \log \left(\frac{R+1}{R-1}\right),
\end{equation}
which gives \eqref{eq:key}.
\end{proof}

\begin{corollary}\label{cor:strip-optimal-trunc}
Let us implement Barvinok's interpolation on $Z_H(\beta)$ assuming that $Z_H(z)$ has zeros along the strip at points $z=a+ib$ for $|b|\ge b_0$ and $\|H\|=O(n)$. 
To avoid these zeros we estimate $F(z):=\log Z_H(\beta \phi(z))$
with $\phi$ as in \Cref{prop:disk-strip-optimality} setting $\rho:=b_0/\beta$ with $R=1+e^{-\Theta(\beta)}$. 
Then for any $r$ such that $1<r<R$, $F$ is analytic on $\overline{\mathbb D_r}$, and by \Cref{lem:disk-analytic} the degree‑$K$ truncation of $F$ evaluated at $z=1$ attains error at most $\epsilon$ only if $K \ge  \Omega(\exp(\Omega_\beta(\beta))\log(n/\epsilon))$,
asymptotically matching the guarantees from the disk to strip map of \Cref{lem:barvinok-strip-map}.
\end{corollary}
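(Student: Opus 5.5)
The plan is to prove the two parts of \Cref{cor:strip-optimal-trunc} in turn. First I show that $F$ is analytic on $\overline{\mathbb D_r}$ with a controlled modulus. Then I show that the error guarantee of \Cref{lem:disk-analytic} cannot be made small unless $K$ is of the stated size. The upper limit on the usable radius comes from \Cref{prop:disk-strip-optimality}. The scale of the numerator comes from $|F(0)|=n\log 2$.

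\emph{Analyticity.} By hypothesis $\phi$ maps $\mathbb D_R$ into $\{|\Im w|<\rho\}$ with $\rho=b_0/\beta$. Hence $\beta\phi(z)$ lies in $\{|\Im w|<b_0\}$, where $Z_H$ has no zeros by assumption. The function $z\mapsto Z_H(\beta\phi(z))$ is therefore holomorphic and nonvanishing on the simply connected domain $\mathbb D_R$. So it has a holomorphic logarithm $F$, fixed by the branch $F(0)=\log Z_H(0)=n\log 2$, and $F$ is analytic on $\overline{\mathbb D_r}$ for every $1<r<R$. To bound $M_r=\sup_{|z|\le r}|F(z)|$, I apply Borel--Carath\'eodory on an intermediate radius $r<r'<R$, exactly as in the proof of \Cref{thm:disk-barvinok}. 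The real part satisfies $\Re F\le n\log 2+\beta\sup_{|z|\le r'}|\phi(z)|\,\|H\|$. With $\|H\|=O(n)$ and $\phi$ bounded on $\overline{\mathbb D_{r'}}$, this gives $M_r=O(n)\cdot \mathrm{poly}(\beta)\cdot\frac{r'+r}{r'-r}$. Plugging $M_r$ into \Cref{lem:disk-analytic} gives the sufficient condition
\begin{equation*}
K\gtrsim \frac{\log(M_r/\epsilon)+\log\frac{1}{1-1/r}}{\log r}.
\end{equation*}

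\emph{Lower bound.} The key input is \eqref{eq:key} from \Cref{prop:disk-strip-optimality}. It forces $\log\frac{R+1}{R-1}\ge \frac{\pi\beta}{2b_0}$, i.e.\ $R-1\le 2e^{-\pi\beta/(2b_0)}(1+o(1))$. Hence every admissible $r$ satisfies $\log r\le r-1<R-1=e^{-\Omega(\beta)}$. On the other side, $M_r\ge |F(0)|=n\log 2$. The numerator $\log(M_r/\epsilon)-\log(1-1/r)$ is therefore at least $\log(n/\epsilon)-O(1)$. Consequently the Cauchy-estimate bound of \Cref{lem:disk-analytic} certifies error $\le\epsilon$ only when
\begin{equation*}
K\ge \frac{\log(n/\epsilon)}{e^{-\Omega(\beta)}}-O(1)=\Omega\big(e^{\Omega(\beta)}\log(n/\epsilon)\big).
\end{equation*}
This matches the upper bound obtained by taking $r$ a constant fraction of the way to $\hat r=1+e^{-\Theta(1/\sigma)}$ in \Cref{lem:barvinok-strip-map} with $\sigma=b_0/\beta$.

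\emph{Main obstacle.} The delicate step is controlling $M_r$ from above for the matching direction. The danger is that $\Im F$, the winding of $\arg Z_H$, could blow up near the boundary of the strip. My first attempt is Borel--Carath\'eodory on the slightly larger disk, which needs only an upper bound on $\Re F$. This costs a factor $\frac{r'+r}{r'-r}$, which is only $e^{O(\beta)}$ when $r'-r$ is chosen comparable to $R-1$, so it enters logarithmically and does not spoil the $e^{\Theta(\beta)}\log(n/\epsilon)$ scaling. For the pure lower bound, only $M_r\ge n\log 2$ and the radius constraint are needed, so that direction is essentially immediate once \Cref{prop:disk-strip-optimality} is in hand.
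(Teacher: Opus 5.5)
Your proposal is correct and follows essentially the same argument the paper leaves implicit. \Cref{prop:disk-strip-optimality} with $\rho=b_0/\beta$ forces $R-1\le 2e^{-\pi\beta/(2b_0)}(1+o(1))$, so every admissible $r$ has $\log r<R-1=e^{-\Omega(\beta)}$; feeding this and $M_r\ge|F(0)|=n\log 2$ into the sufficient condition of \Cref{lem:disk-analytic} gives $K=\Omega(e^{\Omega(\beta)}\log(n/\epsilon))$.

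One small slip: the constant you subtract in the numerator is also divided by $\log r$, so it contributes $e^{O(\beta)}$ rather than $O(1)$. This is harmless, since $-\log(1-1/r)\ge\log\frac{1}{R-1}>0$ already absorbs the $\log\log 2$ term, leaving $K\ge\frac{\log(n/\epsilon)}{R-1}-1$.
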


\subsection{Disk to wedge}
\label{app:disk-to-wedge-mapping}
We now provide a holomorphic mapping from the disk to a wedge which avoids zeros near the imaginary axis.

\begin{lemma}[Wedge-avoiding disk mapping]\label{lem:wedge-avoiding-mapping}
Fix $0<\rho<1$ and a wedge half-angle $\delta_\theta\in(0,\pi/2)$. Set $p:=1-\frac{2\delta_\theta}{\pi}\in(0,1)$ and define
\begin{equation}
    Q(s):=\frac{1+s}{1-s},\qquad
    G(z):=Q\left(\left(\frac{z}{R}\right)^2\right)^p.
\end{equation}
The wedge-avoiding interpolator takes the form
\begin{equation}
    \phi(z) := \rho z \sqrt{ \frac{G(z)^2-1}{z^2} }
\end{equation}
where $(\cdot)^p$ denotes the principal branch. The square root is chosen to be the branch holomorphic on $\mathbb{D}_R$ that is positive at $z=0$ ensuring $\phi(0)=0$.
Set $R>1$ to solve
\begin{equation}\label{eq:radius-wedge-explicit}
    R^2 = \frac{M+1}{M-1},\qquad
    M:=\bigl(1+\rho^{-2}\bigr)^{1/(2p)},
\end{equation}
so that $G(1) = \sqrt{1+\rho^{-2}}$.
Then $\phi$ is holomorphic on $\mathbb D_R:=\{|z|<R\}$, satisfies $\phi(0)=0$, $\phi(1)=1$, and
\begin{equation}
\begin{split}
    \phi(\mathbb D_R) \subset\ \Omega_{\rho,\delta_\theta}
    &:=\mathbb{C}\setminus \left( W^{(+)}_{\rho,\delta_\theta} \cup W^{(-)}_{\rho,\delta_\theta}\right), \\
    W^{(+)}_{\rho,\delta_\theta} &= i\rho+\{re^{i\theta}: r\ge0, |\theta-\pi/2|\le \delta_\theta\}, \\
    W^{(-)}_{\rho,\delta_\theta} &= -i\rho+\{re^{i\theta}: r\ge0, |\theta-3\pi/2|\le \delta_\theta\}.
\end{split}
\end{equation}
Asymptotically as $\rho \to 0$, $R = 1+\Theta(\rho^{1/p})$ and setting $r = (1+R)/2$, $\max_{z \in \mathbb{D}_r} |\phi(z)| \leq 2^p(1+O(\rho^{1/p}))$.
\end{lemma}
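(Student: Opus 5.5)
The plan is to read $\phi$ as a chain of elementary conformal maps, and to reduce the wedge-avoidance claim to an argument (angle) estimate. Write $s=(z/R)^2$, which maps $\mathbb D_R$ into the unit disk $\mathbb D$. The Cayley-type map $Q$ sends $\mathbb D$ onto the right half-plane $\{\Re w>0\}$. Hence $Q(s)^p$, with the principal branch, lies in the open sector $\{|\arg w|<p\pi/2\}$, and $G(z)^2$ lies in the open sector $S_p:=\{|\arg w|<p\pi\}$; note that $p\pi=\pi-2\delta_\theta<\pi$.

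\emph{Holomorphy and normalization.} I would first show that the square root is well defined.
\begin{itemize}
\item $G(z)=1$ iff $Q(s)=1$ iff $s=0$ iff $z=0$.
\item Near $0$ we have $Q(s)^p=1+2ps+O(s^2)$, so $G^2-1$ has a zero of order exactly $2$ at $z=0$.
\item Hence $(G^2-1)/z^2$ is holomorphic and nonvanishing on the simply connected domain $\mathbb D_R$, with value $4p/R^2>0$ at $0$.
\item So a holomorphic square root exists, and we take the branch positive at $0$. This gives $\phi$ holomorphic with $\phi(0)=0$.
\end{itemize}
On the real segment $[0,1]$, $G$ is real and $>1$, so by continuity the chosen branch makes $\phi(t)=\rho\sqrt{G(t)^2-1}>0$ there. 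The defining condition \eqref{eq:radius-wedge-explicit} is equivalent to $Q(1/R^2)=M$, i.e. $G(1)^2=M^{2p}=1+\rho^{-2}$, which yields $\phi(1)=\rho\cdot\rho^{-1}=1$.

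\emph{Wedge avoidance (the key step).} Set $w=\phi/\rho$, so that $w^2=G^2-1$, i.e. $1+w^2=G^2\in S_p$. Suppose $\phi\in W^{(+)}_{\rho,\delta_\theta}$. Then $w=i(1+te^{i\psi})$ with $t\ge 0$ and $|\psi|\le\delta_\theta$. A direct computation gives
\begin{equation}
1+w^2=-te^{i\psi}\,(2+te^{i\psi}).
\end{equation}
Since $\arg(2+te^{i\psi})$ has the sign of $\psi$ and magnitude at most $|\psi|$, the argument of $1+w^2$ differs from $\pi$ by at most $2|\psi|\le 2\delta_\theta$. Therefore $|\arg(1+w^2)|\ge \pi-2\delta_\theta=p\pi$. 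The case $t=0$ gives $1+w^2=0$, which is also outside $S_p$. Either way this contradicts $G^2\in S_p$. The lower wedge $W^{(-)}_{\rho,\delta_\theta}$ is handled the same way: replace $i$ by $-i$, or use the symmetry $\phi(\bar z)=\overline{\phi(z)}$, which holds because all the defining maps are real on the real axis.

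\emph{Asymptotics.} As $\rho\to0$ we have $M=(1+\rho^{-2})^{1/(2p)}\sim\rho^{-1/p}$. Then $R^2=1+2/(M-1)$ gives $R=1+\Theta(\rho^{1/p})$. For $|z|\le r=(1+R)/2$ we get $|s|\le (r/R)^2$, so $1-|s|\ge (R-1)(1+o(1))$. Using $|Q(s)|\le (1+|s|)/(1-|s|)\le 2/(1-|s|)$, we obtain
\begin{equation}
|\phi|\le \rho\,(1+|G|^2)^{1/2}\approx\rho\,|Q|^{p}\le \rho\Bigl(\frac{2}{(R-1)(1+o(1))}\Bigr)^{p}.
\end{equation}
Since $R-1\sim 2/M\sim 2\rho^{1/p}$, this evaluates to roughly $\rho\cdot(\rho^{-1/p})^p$ times constants, i.e. $O(1)$.

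I expect this last step to be the main obstacle. The wedge-avoidance argument is clean, but obtaining exactly the constant $2^p(1+O(\rho^{1/p}))$ requires tracking $R-r=(R-1)/2$ carefully against $1-|s|$. I would expand $1-(r/R)^2$ to first order in $R-1$, and I would allow the stated constant to absorb the factor coming from $(R-1)/2$ versus $R-1$. If the constant does not come out as stated, the $O(1)$ bound still suffices for the downstream application through \Cref{lem:disk-analytic}.
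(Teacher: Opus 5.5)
Your argument follows the paper's proof step for step. Both use the sector bound $|\arg G^2|<p\pi$, which comes from $Q$ mapping the unit disk onto the right half-plane. Both use nonvanishing of $(G^2-1)/z^2$ on the simply connected disk $\mathbb{D}_R$, the normalization $G(1)^2=1+\rho^{-2}$, and the reduction of wedge avoidance to the map $w\mapsto 1+(w/\rho)^2$. For the asymptotics, both use the crude bound $|\phi|\le\rho\sqrt{1+|G|^2}$. Through the wedge step everything is correct, and in two places you are more careful than the paper. Your factorization $1+w^2=-te^{i\psi}(2+te^{i\psi})$ actually verifies that both wedges land in the closed sector of half-angle $2\delta_\theta$ about the negative real axis; the paper only asserts this. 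You also check that the chosen square-root branch is positive on $[0,1]$, which is what makes $\phi(1)=+1$ rather than $-1$.

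The gap is the last claim, which you leave at ``$O(1)$''. The obstacle you anticipate is not real; it comes from an arithmetic slip. Since $R^2-1=2/(M-1)$ and $R+1\to 2$, one has $R-1=(1+o(1))/M\sim\rho^{1/p}$, not $2/M$. Nothing is lost between $R-r=(R-1)/2$ and $1-|s|$ either, because $1-(r/R)^2=(R-r)(R+r)/R^2=(R-1)\bigl(1+O(R-1)\bigr)$. With the correct value, your own chain gives $|\phi|\le\rho\,(2M)^p\bigl(1+o(1)\bigr)=2^p\sqrt{1+\rho^2}\,\bigl(1+o(1)\bigr)$. This is exactly the stated constant, and it is how the paper concludes.

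Your value $R-1\sim 2/M$ would instead have produced the constant $1$, which is false. At the real point $z=r$ one has $|\phi(r)|=\rho\sqrt{Q((r/R)^2)^{2p}-1}$ with $Q((r/R)^2)=2M+O(1)$, so $|\phi(r)|\to 2^p$.

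If you also want the error term, keep $M^{2p}=1+\rho^{-2}$ exact. Together with the $\pm1$ under the root, it produces a relative $\Theta(\rho^2)$ correction that is genuinely present at $z=r$. So the provable bound is $2^p\bigl(1+O(\rho^{1/p}+\rho^2)\bigr)$. This agrees with the stated $O(\rho^{1/p})$ only for $p\ge 1/2$, and the paper's proof also drops this term. Only the $O(1)$ bound is needed downstream.
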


\begin{proof}
To shorten notation, we define $t(z) := (z/R)^2$.
For $z \in \mathbb D_R$, $|t(z)|<1$ and thus $Q(t(z))$ maps $\mathbb D_R$ to the right half-plane $H=\{\Re w>0\}$. 
Using the principal power, $G(z)=Q(t(z))^p$ is holomorphic on $\mathbb D_R$ with $\arg G(z)\in(-p\pi/2,p\pi/2)$ and $G(0)=1$.
Note that for $z \in \mathbb D_R$, by Taylor expansion,
\begin{equation}
    G(z)^2-1 = \frac{4p}{R^2} z^2 + O(z^4),
\end{equation}
which has a zero of order 2 at $z=0$.
Consider $h(z):=\frac{G(z)^2-1}{z^2}$ with removable singularity at $z=0$ where we set $h(0)=4p/R^2$. 
By Taylor expansion, $G(z)^2-1 = \frac{4p}{R^2} z^2 + O(z^4)$.
Since $G(z)$ maps to the sector $|\arg w| < p\pi/2 < \pi/2$, $G(z)^2=1$ implies $G(z)=1$, which requires $Q(t(z))=1$ and thus $z=0$. Since $h(z)$ vanishes if and only if $G(z)^2 = 1$, this implies that $h(z)$ is non-vanishing on $\mathbb D_R$.
Since $\mathbb{D}_R$ is simply connected and $h(z)$ is non-vanishing, a holomorphic branch of the logarithm $\log h(z)$ exists. We define $\sqrt{h(z)} := \exp(\frac{1}{2} \log h(z))$ using the branch where $\sqrt{h(0)} > 0$.
Finally, defining $\phi(z) := \rho z \sqrt{h(z)}$ yields a function holomorphic on $\mathbb D_R$ satisfying $\phi(0)=0$.
The condition $\phi(1)=1$ follows from the choice of $R$.

To see that $\phi$ avoids the wedges $W^{(\pm)}$,
consider the map $\Psi(w):=1+(w/\rho)^2$. The forbidden wedges at $\pm i\rho$
are mapped by $\Psi$ onto the set
\begin{equation}
    S_{\delta_\theta}:=\{re^{i\vartheta}: r\ge0, |\vartheta-\pi|\le 2\delta_\theta\}.
\end{equation}
For our map, $\Psi(\phi(z))=G(z)^2$, so
$\arg\Psi(\phi(z))=\arg(G(z)^2)\in(-p\pi,p\pi)$.
Because $p=1-\frac{2\delta_\theta}{\pi}$, we have
$(-p\pi,p\pi)\subset(-\pi+2\delta_\theta, \pi-2\delta_\theta)$. 
Therefore, $\Psi(\phi(z))$ avoids
$S_{\delta_\theta}$ so $\phi(z)$ avoids the corresponding wedges in the $w$-plane proving
$\phi(\mathbb D_R)\subset\Omega_{\rho,\delta_\theta}$.

For the asymptotics, note that as $\rho\downarrow0$ we have
$M=(1+\rho^{-2})^{1/(2p)}=\rho^{-1/p}(1+o(1))$ and $R = 1 + \Theta(1/M)$.
Furthermore, by maximum modulus principle, $\max_{z \in \mathbb{D}_r} |\phi(z)| = \max_{\theta \in [0,2\pi]} |\phi(re^{i\theta})|$ and
\begin{equation}
\begin{split}
    \max_{\theta \in [0,2\pi]} |\phi(re^{i\theta})| &\le \rho \sqrt{|G(re^{i\theta})|^2+1} \\
    &\le \rho \sqrt{\left(\frac{1+|\frac{r}{R}|^2}{1-|\frac{r}{R}|^2} \right)^{2p}+1}.
\end{split}
\end{equation}
Setting $R=1+\epsilon$, we have $\frac{1+|\frac{r}{R}|^2}{1-|\frac{r}{R}|^2}=2/\epsilon + O(1)$ as $\epsilon \to 0$. Since $\epsilon = \rho^{1/p} + O(\rho^{2/p})$,
\begin{equation}
\begin{split}
    \max_{\theta \in [0,2\pi]} |\phi(re^{i\theta})| &\le 2^p (1+O(\rho^{1/p})).
\end{split}
\end{equation}
\end{proof}

When applying the above mapping, we will always consider Hamiltonian ensemble that have a zero-free region containing all $|\Re \beta| > 0$. For this region, we now show that the disk-to-wedge mapping asymptotically gives truncation degree (for asymptotically large $n, \beta, 1/\epsilon$)
\begin{align}
    K = O\lr{\beta \log \frac{\beta n}{\epsilon}},
\end{align}
which is optimal given zeros on the imaginary axis.

\subsubsection{Optimality of wedge mapping}

Now, we show that as long as there are zeros on the imaginary axis that must be avoided, the mapping in \Cref{lem:wedge-avoiding-mapping} is asymptotically optimal.

\begin{proposition}\label{prop:optimality-wedge-mapping}
Let $0<\rho<1$ and $R>1$. Define regions
\begin{equation}
    \Omega_\rho:=\mathbb C\setminus\big(i[ \rho,\infty)\cup i(-\infty,-\rho]\big),\qquad
    \mathbb D_R:=\{z\in\mathbb C:\ |z|<R\}.
\end{equation}
If $\phi$ is holomorphic on $\mathbb D_R$, satisfies $\phi(\mathbb D_R)\subset\Omega_\rho$, and
$\phi(0)=0$, $\phi(1)=1$, then 
\begin{equation}\label{eq:main_imaginary_impossible}
    \log \left(\frac{R+1}{R-1}\right)
    \ge \tfrac12\log \big(1+\rho^{-2}\big).
\end{equation}
Therefore, asymptotically $R \le 1+2 \rho + O(\rho^2)$ as $\rho \to 0$. 
\end{proposition}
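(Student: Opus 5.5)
The plan is to reduce this to the same hyperbolic-metric (Schwarz–Pick) argument used in \Cref{prop:disk-strip-optimality}. First map the slit domain $\Omega_\rho$ into the right half-plane, then compare the hyperbolic length of the segment $[0,1]$ in $\mathbb D_R$ with the distance between the images of $0$ and $1$.

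\emph{Step 1: conformal reduction.} Use the same map as in \Cref{lem:wedge-avoiding-mapping}, namely $\Psi(w):=1+(w/\rho)^2$.
\begin{itemize}
\item For $w=iy$ with $|y|\ge\rho$ we get $\Psi(w)=1-y^2/\rho^2\le 0$.
\item Conversely, $\Psi(w)\in(-\infty,0]$ forces $(w/\rho)^2\in(-\infty,-1]$, i.e.\ $w\in i[\rho,\infty)\cup i(-\infty,-\rho]$.
\end{itemize}
Hence $\Psi$ maps $\Omega_\rho$ into $\mathbb C\setminus(-\infty,0]$, and $\Psi\circ\phi$ is holomorphic on $\mathbb D_R$ with values there. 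Define $g:=\sqrt{\Psi\circ\phi}$ using the principal square root. This is holomorphic because the principal branch is holomorphic on $\mathbb C\setminus(-\infty,0]$, so no monodromy issue arises. The function $g$ maps $\mathbb D_R$ into $H=\{\Re w>0\}$, with $g(0)=\sqrt{\Psi(0)}=1$ and $g(1)=\sqrt{1+\rho^{-2}}$.

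\emph{Step 2: Schwarz–Pick.} Exactly as in the proof of \Cref{prop:disk-strip-optimality}, compose with the Cayley map $C(w)=(w-1)/(w+1)$ to obtain $F=C\circ g:\mathbb D_R\to\mathbb D$. Schwarz–Pick then gives
\begin{equation}
\frac{|g'(z)|}{\Re g(z)}\le\frac{2R}{R^2-|z|^2}\qquad (z\in\mathbb D_R).
\end{equation}
Since $|g|\ge\Re g$, along $t\in[0,1]$ we have $\bigl|\tfrac{d}{dt}\Re\log g(t)\bigr|\le |g'(t)|/|g(t)|\le |g'(t)|/\Re g(t)$, where $\log g$ is the holomorphic branch on $H$. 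Integrating yields
\begin{equation}
\tfrac12\log(1+\rho^{-2})=\Re\log g(1)-\Re\log g(0)\le\int_0^1\frac{2R}{R^2-t^2}\,dt=\log\frac{R+1}{R-1},
\end{equation}
which is \eqref{eq:main_imaginary_impossible}.

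\emph{Step 3: asymptotics.} Exponentiating gives $(R+1)/(R-1)\ge\sqrt{1+\rho^{-2}}=\rho^{-1}(1+O(\rho^2))$. Solving for $R$ gives $R\le (s+1)/(s-1)$ with $s=\sqrt{1+\rho^{-2}}$, so $R\le 1+2/(s-1)=1+2\rho+O(\rho^2)$.

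The only point needing care is Step 1: one must check that $\Psi$ sends the complement of the two slits exactly into the slit plane, so that the principal square root is available and $g$ lands strictly in the open right half-plane. Once this is verified, the rest copies the strip argument verbatim.
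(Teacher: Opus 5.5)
Your proposal is correct and matches the paper's proof essentially step for step. The paper also sets $g=\sqrt{1+(\phi/\rho)^2}$ with the principal branch and checks that $1+(w/\rho)^2\in(-\infty,0]$ forces $w$ onto the slits; it then applies the Cayley map and Schwarz--Pick to get $|g'|/\Re g\le 2R/(R^2-|z|^2)$ and integrates along $[0,1]$. Your Step 3 goes slightly further by explicitly deriving $R\le 1+2\rho+O(\rho^2)$, which the paper only states.
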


\begin{proof}
We use the principal branch of the square root defined on
$\mathbb C\setminus(-\infty,0]$ so that $\Re\sqrt{z}>0$. Define the function
\begin{equation}
    g(z):=\sqrt{ 1+(\phi(z)/\rho)^2 }.
\end{equation}
The image of $g$ is contained in the right half-plane $H:=\{z:\Re z>0\}$.
More formally, for $w\in\Omega_\rho$ we claim $1+(w/\rho)^2\notin(-\infty,0]$. 
Indeed, if
$1+(w/\rho)^2\in(-\infty,0]$ then $(w/\rho)^2=-t$ with $t\ge1$ resulting in
purely imaginary $w$ satisfying $|\Im w| \geq \rho $, which is a contradiction.
Thus, $g$
is holomorphic on $\mathbb D_R$ and satisfies $\Re g(z)>0$.

Now we proceed to apply the Schwarz-Pick inequality.
Let $C(w)=(w-1)/(w+1)$ be the Cayley map $H\to\mathbb D$. 
Then $F:=C\circ g$ maps $\mathbb D_R$ into
$\mathbb D$, and the Schwarz-Pick inequality gives for $z\in \mathbb D_R$:
\begin{equation}
    |F'(z)| \le \frac{R}{R^2-|z|^2} \left(1-|F(z)|^2\right).
\end{equation}
Using $C'(w)=2/(w+1)^2$ and $1-|C(w)|^2=4 \Re w/|w+1|^2$ for $\Re w>0$, we obtain again for $z\in \mathbb D_R$:
\begin{equation}\label{eq:local}
  \frac{|g'(z)|}{\Re g(z)} \le \frac{2R}{R^2-|z|^2}.
\end{equation}
For a holomorphic branch of $\log g$
where $(\log g)'=g'/g$, we have for $t\in[0,1]$,
\begin{equation}
  \left|\frac{d}{dt}\Re\log g(t)\right|
  \le \left|\frac{d}{dt}\log g(t)\right|
  = \frac{|g'(t)|}{|g(t)|}
  \le \frac{|g'(t)|}{\Re g(t)}.
\end{equation}
Integrating and using $g(0)=1$, $g(1)=\sqrt{1+\rho^{-2}}$ yields
\begin{equation}
\begin{split}
    \log \big(\sqrt{1+\rho^{-2}}\big)
    &=\left|\Re\log g(1)-\Re\log g(0)\right|
    \le \int_0^1 \frac{|g'(t)|}{\Re g(t)} dt \\
    &\le \int_0^1 \frac{2R}{R^2-t^2} dt
    = \log \left(\frac{R+1}{R-1}\right),
\end{split}
\end{equation}
which is \eqref{eq:main_imaginary_impossible}. 
\end{proof}

\begin{corollary}\label{cor:wedge-optimal-trunc}
Let us implement Barvinok's interpolation on $Z_H(\beta)$ assuming that $Z_H(z)$ has zeros along the imaginary axis at points $z=ib$ for $|b|\ge b_0$ (i.e., beyond $\pm i b_0$) and $\|H\|=O(n)$. 
To avoid these zeros we estimate $F(z):=\log Z_H(\beta \phi(z))$
with $\phi$ as in \Cref{prop:optimality-wedge-mapping} setting $\rho:=b_0/\beta$ with $R=1+\Theta(1/\beta)$. 
Then for any $1<r<R$, $F$ is analytic on $\overline{\mathbb D_r}$, and by \Cref{lem:disk-analytic} the degree‑$K$ truncation of $F$ evaluated at $z=1$ attains error at most $\epsilon$ only if $K \ge \Omega_\beta(\beta \log(n/\epsilon))$,
asymptotically matching the guarantees from the wedge‑avoiding map of \Cref{lem:wedge-avoiding-mapping} in the $\delta_\theta=0$ case.
\end{corollary}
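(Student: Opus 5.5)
The plan is to combine the radius bound of \Cref{prop:optimality-wedge-mapping} with the Cauchy-type truncation analysis of \Cref{lem:disk-analytic}, read in reverse. There are three steps.

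\emph{Step 1: analyticity on $\overline{\mathbb D_r}$.} Since $\beta>0$ is real, $w\mapsto\beta w$ maps $\Omega_\rho$ onto $\Omega_{b_0}=\mathbb C\setminus\big(i[b_0,\infty)\cup i(-\infty,-b_0]\big)$ when $\rho=b_0/\beta$. By hypothesis $\Omega_{b_0}$ contains no zeros of $Z_H$. Hence for any holomorphic $\phi$ with $\phi(\mathbb D_R)\subset\Omega_\rho$, the function $Z_H(\beta\phi(z))$ is zero-free on the simply connected domain $\mathbb D_R$. A holomorphic branch $F(z)=\log Z_H(\beta\phi(z))$ with $F(0)=n\log 2$ therefore exists, and it is analytic on $\overline{\mathbb D_r}$ for every $r<R$.

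\emph{Step 2: size of the admissible disk.} \Cref{prop:optimality-wedge-mapping} gives $\log\frac{R+1}{R-1}\ge\frac12\log(1+\rho^{-2})$, so $R\le 1+2\rho+O(\rho^2)=1+O(b_0/\beta)$. This matches the upper guarantee $R=1+\Theta(\rho^{1/p})$ of \Cref{lem:wedge-avoiding-mapping} at $\delta_\theta=0$, i.e.\ $p=1$. So every admissible interpolator, including the optimal one, has analyticity radius $R=1+\Theta(1/\beta)$, and any $r\in(1,R)$ satisfies $\log r=O(1/\beta)$.

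\emph{Step 3: the lower bound on $K$.} First, the sufficient degree from \Cref{lem:disk-analytic} is $K\approx\log\!\big(M_r/(\epsilon(1-1/r))\big)/\log r$. Here $M_r\ge|F(0)|=n\log 2$ and $\log r=O(1/\beta)$, so this degree is $\Omega(\beta\log(n/\epsilon))$; combined with $\|H\|=O(n)$ and the $2^p$ bound on $|\phi|$ in \Cref{lem:wedge-avoiding-mapping}, it is also $O(\beta\log(\beta n/\epsilon))$. This shows that the guarantee is tight for this method of analysis.

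For an intrinsic ``only if'', I would argue on the coefficients $a_k$ of $F$. Barvinok's algorithm has to avoid the zeros at $\pm ib_0$, and any interpolator squeezed to the optimal $R$ has $\beta\phi$ approaching these zeros near $|z|=R$. There $F$ has a logarithmic branch point, so the radius of convergence of $F$ is at most about $R$, and Cauchy--Hadamard gives $|a_k|\gtrsim R^{-k}$ along a subsequence. The amplitude of these coefficients should scale with the cumulant magnitude, roughly $n$, because $\log Z_H$ is extensive. Then the tail $\sum_{k>K}a_k$ at $z=1$ is of order $nR^{-K}$ and is not cancelled. Requiring $nR^{-K}\le\epsilon$ forces $K\ge\log(n/\epsilon)/\log R=\Omega(\beta\log(n/\epsilon))$.

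\emph{Main obstacle.} I expect Step 3's intrinsic lower bound to be the hardest part. It needs two things: that the nearest singularity of $F$ lies essentially on $|z|=R$ rather than further out, and that the tail does not cancel at $z=1$. My first attempt would be to place the singularity via the extremal (Schwarz--Pick equality) case of \Cref{prop:optimality-wedge-mapping}. Failing that, I would settle for the weaker but clearly valid statement: the lemma-based bound cannot certify accuracy $\epsilon$ with fewer than $\Omega_\beta(\beta\log(n/\epsilon))$ terms.
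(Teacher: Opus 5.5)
Your Steps 1--3 (through the first paragraph of Step 3) are the intended argument; the paper gives no separate proof and leaves this combination implicit. \Cref{prop:optimality-wedge-mapping} forces $\log r<\log R=O(1/\beta)$ for every admissible interpolator, which matches the $R=1+\Theta(\rho)$ that \Cref{lem:wedge-avoiding-mapping} achieves at $p=1$; since $M_r\ge |F(0)|=n\log 2$, the degree required by \Cref{lem:disk-analytic} is then $\Omega(\beta\log(n/\epsilon))$. The corollary's ``only if'' is meant in exactly this lemma-based sense, so you do not need the intrinsic lower bound on the Taylor coefficients, which as you note would rest on unproven singularity-location and non-cancellation claims.
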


\section{Bounded-degree Hamiltonians}
\label{app:hight}
In this section, we consider a Hilbert space of dimension $2^n$ corresponding to either $n$ spins/qubits or $2n$ fermionic Majorana modes. We will study $k$-local, degree $D$ Hamiltonians of the form $H=\sum_{a\in\mathcal{A}} h_a$ indexed by a set $\mathcal{A}$ under the following assumptions:
\begin{itemize}
    \item Locality: each $h_a$ is Hermitian and supported on at most $k$ sites, i.e., $|\supp h_a|\le k$. 
    For fermionic Hamiltonians, we also assume the size of the support $|\supp h_a|$ is even for all terms.
    \item Bounded norms: $\|h_a\|\le 1$ for all terms in the Hamiltonian.
    \item Bounded degree: each site belongs to at most $D$ terms from $\mathcal A$, i.e. for each site $i$, it holds that $\left| \left\{ a \in \mathcal{A}: i \in \supp h_a \right\} \right| \leq D$.
\end{itemize}
For example, for spin systems, the terms above may take the form of Pauli monomials $h_a\in\{I,X,Y,Z\}^{\otimes n}$.
In the fermionic setting, $h_a$ consists of Majorana monomials $h_a=\Gamma_{S_a}:=i^{|S_a|/2}\prod_{j\in S_a}\gamma_j$ on $2n$ Majorana modes.
In either the spin or fermionic case, the set of these monomials is an orthonormal basis of operators with respect to the normalized Hilbert-Schmidt inner product, e.g. for Majorana monomials $\{\Gamma_S:\ S\subseteq[2n]\}$:
\begin{equation}
\label{eq:HS-orthonormality}
    \frac{1}{2^n}\Tr\big(\Gamma_S^\dagger \Gamma_T\big)=1[S=T].
\end{equation}
We show a zero-free region for the partition function $Z_H(\beta)$ for Hamiltonians taking the above form.

\begin{theorem}\label{thm:bounded_degree}
Given a degree $D$, $k$-local Hamiltonian $H=\sum_{a\in\mathcal{A}} h_a$ where $\|h_a\|\leq 1$ as detailed above, there exists constant $C_{\max}=(2e)^{-1}$ such that
\begin{equation}
    Z_H(\beta)\neq 0 \quad\text{for all}\quad |\beta|< \frac{C_{\max}}{k(D-1)+1} , \quad C_{\max}:=\frac{1}{2e} \approx 0.18394 .
\end{equation}
Equivalently, on the disc $|\beta|<C_{\max}/(k(D-1)+1)$ there is a single-valued analytic branch of $f(\beta)=\log Z_H(\beta)$. 
\end{theorem}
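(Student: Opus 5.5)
The plan is to show that $\log Z_H(\beta)$ has a convergent Taylor (cluster) expansion on the stated disc. A convergent expansion means $Z_H$ is a nonvanishing exponential there, so it cannot have a zero. We work with the normalized trace $\tilde Z(\beta)=2^{-n}\Tr(e^{-\beta H})$ and expand
\begin{equation}
\tilde Z(\beta)=\sum_{r\ge0}\frac{(-\beta)^r}{r!}\sum_{a_1,\dots,a_r\in\mathcal A}2^{-n}\Tr(h_{a_1}\cdots h_{a_r}).
\end{equation}
We then pass to cumulants. The normalized trace factorizes over words whose support graph has several connected components. For qubits this holds because the trace is a tensor product over sites. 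For fermions it holds up to signs that disappear because every $h_a$ is even. Consequently $\log\tilde Z(\beta)=\sum_r \frac{(-\beta)^r}{r!}\kappa_r$, where the mixed cumulant $\kappa_r$ is a sum over ordered sequences $(a_1,\dots,a_r)$ whose multiset forms a connected cluster in the interaction (dual) graph $\mathcal G$. Two terms are adjacent in $\mathcal G$ when their supports overlap. Each term has at most $k$ sites and each site lies in at most $D$ terms, so every vertex of $\mathcal G$ has at most $k(D-1)$ neighbours. This is the origin of the quantity $k(D-1)+1$.

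The key steps, in order, are as follows.

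First, bound the joint cumulant of $r$ operators. Each operator has norm at most $1$, so every normalized trace of a product is at most $1$. The standard tree-graph (Penrose) bound then controls the cumulant by the number of spanning trees of the cluster. Alternatively, Möbius inversion over set partitions gives $|\kappa(h_{a_1},\dots,h_{a_r})|\le \sum_{\text{trees}}1$ on the connectivity graph. The more refined route, which gives the sharpest constant, uses the operator-valued tree-graph inequality of the kind used by Mann--Helmuth / Harrow--Mehraban--Soleimanifar. In that route the $r!$ from the exponential is partially cancelled.

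Second, count connected clusters of size $r$ that contain a fixed vertex of $\mathcal G$. On a graph of maximum degree $\Delta=k(D-1)$, the number of rooted connected subgraphs, or rooted spanning trees with labeled multiplicities, of size $r$ is at most $(e\Delta')^{r}$ for $\Delta'=\Delta+1$. This comes from the standard bound $\frac{1}{r}\binom{\Delta' r}{r-1}\le (e\Delta')^{r-1}$. Multiplicities from repeated terms are absorbed by the $1/r!$ and by an extra factor of $2$. Summing the root over $|\mathcal A|\le nD$ terms shows that the degree-$r$ coefficient of $\log \tilde Z$ is bounded by roughly
\begin{equation}
|c_r|\le nD\,\bigl(2e(k(D-1)+1)\bigr)^{r}.
\end{equation}

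Third, conclude. The series $\log\tilde Z(\beta)=\sum_r c_r\beta^r$ converges absolutely for $|\beta|<\frac{1}{2e(k(D-1)+1)}$. It therefore defines an analytic $f$ with $e^{f}=\tilde Z$ near $0$, and by the identity theorem this equality holds on the whole disc. Hence $Z_H=2^n e^{f}\neq0$ there, and $\log Z_H = n\log 2+f$ is a single-valued branch on the disc.

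The main obstacle is the constant: getting exactly $2e$ with the factor $k(D-1)+1$ rather than a cruder bound such as $e^2 kD$. It requires a careful combined bound on the cumulants and the cluster count, in which the $r!$ from the Taylor series cancels against the ordered-sequence count. I would first try the Kotecký--Preiss criterion with polymers given by connected subsets of $\mathcal G$ and weights $w(\gamma)\le(2|\beta|)^{|\gamma|}$, checking $\sum_{\gamma\not\sim\gamma_0}|w(\gamma)|e^{|\gamma|}\le|\gamma_0|$. If the trace factorization is awkward for noncommuting terms, I would instead fall back on the abstract cumulant tree bound.
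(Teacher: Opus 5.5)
Your final estimate $|c_r|\lesssim |\mathcal{A}|\,(2e(k(D-1)+1))^r$ is in fact correct. Bounding the Taylor coefficients of $\log\tilde Z$ through cumulants is also a legitimate alternative to the paper's argument. However, the derivation you give does not establish the estimate, and this is exactly the step that fixes the constant $1/(2e)$, as you acknowledge. There are three concrete problems.

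\emph{The cumulant bound.} Penrose's tree-graph inequality concerns Mayer coefficients for repulsive or stable pair potentials. It does not directly apply to cumulants of trace moments of noncommuting operators. What you can actually prove from trace factorization and $\|h_a\|\le 1$ is $|\kappa(a_1,\dots,a_r)|\le 2^{r-1}\tau(G_a)$, not $\tau(G_a)$ alone. Here $G_a$ is the graph on $[r]$ joining $i\neq j$ when $\supp h_{a_i}\cap\supp h_{a_j}\neq\emptyset$, and $\tau$ counts spanning trees.

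\emph{The counting.} Your counting uses the Fuss--Catalan bound for connected \emph{subsets} of the dual graph. But $\kappa_r$ is a sum over ordered sequences with repetition, and the cumulant bound carries $\tau(G_a)$, which can be as large as $r^{r-2}$. If you count clusters and then orderings, the $1/r!$ cancels the orderings but leaves $\tau(G_a)$ uncontrolled, so that accounting gives no positive radius. ``Absorbed by the $1/r!$ and an extra factor of $2$'' is precisely the missing argument.

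\emph{The fallback.} The Koteck\'y--Preiss fallback as you parametrize it ($|w(\gamma)|\le(2|\beta|)^{|\gamma|}$, $a(\gamma)=|\gamma|$) gives only a radius of roughly $1/(2e^2(k(D-1)+1))$. Even with the optimal $a(\gamma)=|\gamma|/2$ it gives about $1/(4e(k(D-1)+1))$. You also leave the polymer weights for noncommuting terms undefined.

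To close your route, proceed as follows.
\begin{itemize}
\item For $B\subseteq[r]$ set $M(B)=2^{-n}\Tr\prod^{\rightarrow}_{i\in B}h_{a_i}$ (product in increasing order of $i$), and define $\kappa(a)=\sum_{\pi}\mu(\pi,\hat 1)\prod_{B\in\pi}M(B)$.
\item Then $|M|\le 1$, and $M$ factorizes across index sets with no $G_a$-edges between them: commute the two groups past each other and apply the paper's disjoint-support factorization lemma.
\item The exponential formula gives $\log\tilde Z(\beta)=\sum_r\frac{(-\beta)^r}{r!}\sum_{a\in\mathcal{A}^r}\kappa(a)$.
\item The dependency-graph cumulant bound (Janson; F\'eray--M\'eliot--Nikeghbali) uses only these two properties. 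Group partitions by their refinement into $G_a$-connected blocks. The inner M\"obius sum is the linear coefficient of the chromatic polynomial of the quotient multigraph, so it is bounded by that multigraph's number of spanning trees. Then inject into pairs (spanning tree of $G_a$, subset of its edges). This gives $|\kappa(a)|\le 2^{r-1}\tau(G_a)$.
\item Counting pairs (sequence, labelled tree on $[r]$) gives $\sum_a\tau(G_a)\le|\mathcal{A}|\,r^{r-2}(k(D-1)+1)^{r-1}$. Hence $|c_r|\le|\mathcal{A}|\,(2(k(D-1)+1))^{r-1}r^{r-2}/r!$.
\item Since $r^{r-2}/r!\le e^r$, this yields exactly the radius $1/(2e(k(D-1)+1))$, and your step 3 finishes.
\end{itemize}
The factor $2$ is the $2^{r-1}$, and the $e$ comes from Cayley's formula plus Stirling, not from the Fuss--Catalan count.

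The paper instead Trotterizes, $e^{-\beta H}=\lim_m(\prod_a e^{-\beta h_a/m})^m$, and expands each factor as $I+B_{j,s}$, so every space--time event occurs at most once. This gives a hard-core gas of connected event sets with weights at most $T^{|\gamma|}$, $T\simeq|\beta|/m$, and closed neighbourhoods $\Delta_m\le m(k(D-1)+1)$. Koteck\'y--Preiss with $a(\gamma)=|\gamma|/2$ and the Fuss--Catalan functional equation then give $\Delta_m T\le\max_\theta \theta e^{-2\theta}=1/(2e)$ as $m\to\infty$, and Hurwitz's theorem transfers zero-freeness to $Z_H$.

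The paper's route avoids cumulant combinatorics entirely. The repaired cumulant route needs no Trotter limit and hands you explicit Taylor-coefficient bounds that feed directly into Barvinok's truncation.
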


This is an improvement over prior work focused on the spin setting from
Mann and Minko \cite{mann2024algorithmic} who show that $Z_H(\beta)$ is zero-free whenever $|\beta|\le (e^4D\binom{k}{2})^{-1} =  O(1/(k^2D))$. 
This also improves by a constant factor over the result in Ref. \cite{harrow2020classical} which proves zero-freeness whenever $|\beta|\le (e(2e-1)kD)^{-1}$. 
As far as we are aware, our bound on $|\beta|$ effectively subsumes all efficient regimes previously proven for other algorithms applied to quantum Hamiltonians of bounded degree and locality \cite{harrow2020classical,mann2021efficient,yao2022polynomial,rouze2024optimal,wild2023classical,mann2024algorithmic,bakshi2024high,bakshi2025dobrushin}. 

\begin{remark}[Optimality]
    The dependence $\beta = O(1/(kD))$ is related to the fact that $\Theta(kD)$ sites ``influence" a given qubit.
    Under complexity theoretic assumptions, the dependence of our bound on degree cannot be improved beyond a constant factor as \cite{sly2014counting} show that estimating the partition function of a classical degree $D$ Ising model is NP-Hard at $\beta = \Theta(1/D)$ corresponding to a computational hardness threshold established for models on trees with degree $D$ \cite{weitz2006counting} (see also \cite{patel2023near} and Theorem 9 of \cite{mann2024algorithmic} for related results). 
\end{remark}

To prove this result, we trotterize the dynamics and convert the partition function to a hard-core model. Then we use the Koteck\'y--Preiss (KP) bound to show that the model is zero-free for $|\beta|$ small enough. 
Our proof follows a similar strategy to that in a series of works \cite{park1982cluster,ueltschi2003cluster,mann2024algorithmic,nguyen2024high} specializing the technique to a specific Trotterization of the model and optimizing the choices for applying the KP bound.
Convergence criteria subsuming the KP bound exist, such as the Dobrushin or Fernandez-Procacci conditions \cite{fernandez2007cluster}. 
We were unable to improve the explicit bound below using those conditions however.

\subsection{Trotterization and factorization}
Fix an ordering of the elements in $\mathcal{A}=\{1,\dots,A\}$ and let $m\in\mathbb{N}$. 
Set the first-order Trotterization as
\begin{equation}
    U_m(\beta) := \left(\prod_{a\in\mathcal{A}} e^{-(\beta/m) h_a}\right)^{ m}.
\end{equation}
By the Trotter product formula, $e^{-\beta H} = \lim_{m\to\infty} U_m(\beta)$ \cite{trotter1959product}.

Let $\mathcal{U}:=[A]\times[m]$ be the set of \emph{events} in both space and time.
For each tuple $(j,s)\in\mathcal{U}$, we define
\begin{equation}\label{eq:Ba-def}
    B_{j,s} := e^{-(\beta/m) h_j} - I.
\end{equation}
Note that $B_{j,s}$ depends only on $j$, but we keep the index $s$ to indicate that $B_{j,s}$ and $B_{j,s'}$ correspond to different events for $s\neq s'$.
For any $\varepsilon_1 > 0$ there exists $m_0(\varepsilon_1,\beta)$ such that, for all $m\ge m_0$,
\begin{equation}\label{eq:T-bound}
    \|B_{j,s}\| \le e^{|\beta|/m}-1 \le (1+\varepsilon_1) \frac{|\beta|}{m} =: T.
\end{equation}
Our application of the KP condition will use $T$ to upper bound the magnitude of influence from terms relating to $B_{j,s}$.

For $F\subseteq\mathcal U$, define the ordered product
\begin{equation}\label{eq:ordered-prod}
    \prod_{(j,s)\in F}^{\rightarrow} B_{j,s}
     := \prod_{s=1}^m \prod_{j=1}^A \widetilde B_{j,s},
    \qquad
    \widetilde B_{j,s} := \begin{cases}
    B_{j,s}, & (j,s) \in F,\\
    I, & (j,s) \notin F.
    \end{cases}
\end{equation}
Note that $|F|$ counts multiplicity across time slices since $(j,s) \neq  (j,s')$ when $s\neq s'$.
Expanding the factor $e^{-(\beta/m)h_j}=I+B_{j,s}$ in each Trotter layer and multiplying layers yields
\begin{equation}\label{eq:Zm-factor}
    \Tr U_m(\beta)  =  2^n \sum_{F\subseteq \mathcal{U}} \zeta_m(F),\qquad \zeta_m(F):= \frac{1}{2^n}\Tr \left[\prod_{(j,s)\in F}^{\rightarrow} B_{j,s}\right].
\end{equation}

To bound $|\zeta_m(F)|$,
use \Cref{eq:T-bound} and submultiplicativity of the operator norm so that
\begin{equation}\label{eq:zeta-basic-bound}
    |\zeta_m(F)|
    \le \left\|\prod_{(j,s)\in F}^{\rightarrow} B_{j,s}\right\|
    \le T^{|F|}.
\end{equation}
We also note that the trace over disjoint $\zeta_m(F)$ and $\zeta_m(F')$ factorizes.
Namely, for $F\subseteq\mathcal U$, we define its spatial support as
\begin{equation}\label{eq:supp-F}
    \supp(F) := \bigcup_{(j,s)\in F}\supp(h_j).
\end{equation}
If $\supp(F)\cap\supp(G)=\emptyset$, then all operators in the two ordered products act on disjoint tensor components and thus
\begin{equation}\label{eq:zeta-factorization}
    \supp(F)\cap\supp(G)=\emptyset
    \quad\Longrightarrow\quad
    \zeta_m(F\cup G)=\zeta_m(F) \zeta_m(G).
\end{equation}
For sake of completeness, we formalize and prove this fact below.

\begin{lemma}[Factorization over disjoint supports] \label{lem:factorization_disjoint}
    Given $\zeta_m(F)$ as defined above for $F\subseteq\mathcal U$, assume one of the following settings holds:
    \begin{enumerate}
        \item[(i)] \textbf{(Spins)} each $h_a$ is supported on some subset of qubits,
        \item[(ii)] \textbf{(Fermions)} each local term $h_a$ consists of an \emph{even} number of Majorana operators (i.e., parity-preserving).
    \end{enumerate}
    Then, if $F,G \subseteq \mathcal U$ satisfy $\supp(F)\cap\supp(G)=\emptyset$, it holds that $\zeta_m(F\cup G) = \zeta_m(F) \zeta_m(G)$.
\end{lemma}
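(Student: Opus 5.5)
The plan is to reduce both cases to one algebraic fact: the ordered product over $F\cup G$ can be rewritten as (ordered product over $F$)$\cdot$(ordered product over $G$), and the normalized trace of a product of operators on disjoint supports factorizes. Write $P_F:=\prod^{\rightarrow}_{(j,s)\in F}B_{j,s}$, and similarly $P_G$ and $P_{F\cup G}$. Each $B_{j,s}=e^{-(\beta/m)h_j}-I$ is a power series in $h_j$ with no constant term, so it lies in the algebra generated by $h_j$ and is supported on $\supp(h_j)$. Since $\supp(F)\cap\supp(G)=\emptyset$, every factor $B_{j,s}$ with $(j,s)\in F$ has support disjoint from every factor $B_{j',s'}$ with $(j',s')\in G$. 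In particular $F\cap G=\emptyset$: any $h_j$ is nonidentity, so $\supp(h_j)\neq\emptyset$, and shared elements would force overlapping supports.

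\emph{Spins.} Operators on disjoint sets of qubits commute. So in the time-ordered product defining $P_{F\cup G}$ I will move all $G$-factors to the right past the $F$-factors, never swapping two $F$-factors or two $G$-factors with each other. This gives $P_{F\cup G}=P_F P_G$, with each product keeping its own internal ordering. Let $S=\supp(F)$ and $S'=\supp(G)$. Then $P_F=X_S\otimes I$ and $P_G=I\otimes Y_{S'}\otimes I$ in a tensor decomposition $\mathcal H_S\otimes\mathcal H_{S'}\otimes\mathcal H_{\text{rest}}$. Hence
\[
2^{-n}\Tr(P_FP_G)=\bigl(2^{-|S|}\Tr X_S\bigr)\bigl(2^{-|S'|}\Tr Y_{S'}\bigr),
\]
and the same normalization argument shows $\zeta_m(F)=2^{-|S|}\Tr X_S$ and $\zeta_m(G)=2^{-|S'|}\Tr Y_{S'}$.

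\emph{Fermions.} Because $h_j$ is a sum of even Majorana monomials, every power of $h_j$ is even, and so is $B_{j,s}$. Even operators built from disjoint sets of Majoranas commute, since each monomial swap produces the sign $(-1)^{|S||T|}=1$. The reordering step therefore goes through exactly as for spins and gives $P_{F\cup G}=P_FP_G$. It remains to show $2^{-n}\Tr(XY)=2^{-n}\Tr(X)\,2^{-n}\Tr(Y)$ for $X$ generated by Majoranas in $S$ and $Y$ generated by Majoranas in a disjoint $S'$. Expand $X=\sum_{A\subseteq S}x_A\Gamma_A$ and $Y=\sum_{B\subseteq S'}y_B\Gamma_B$. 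For disjoint $A,B$ we have $\Gamma_A\Gamma_B=\pm\Gamma_{A\cup B}$, which is traceless unless $A=B=\emptyset$, using the orthonormality \eqref{eq:HS-orthonormality} with $T=\emptyset$. Hence $2^{-n}\Tr(XY)=x_\emptyset y_\emptyset$, and by the same orthonormality $x_\emptyset=2^{-n}\Tr X$ and $y_\emptyset=2^{-n}\Tr Y$.

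The step needing most care is the fermionic one: there is no tensor-product factorization, so commutation relies on evenness. I also need to note that $\Gamma_A\Gamma_B$ for disjoint $A,B$ equals a nonzero multiple of the monomial $\Gamma_{A\cup B}$, with phase $\pm1$ or $\pm i$ from the $i^{|S|/2}$ normalization, which is still traceless when nonempty.
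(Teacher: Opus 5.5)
Your proof is correct and matches the paper's argument step for step. You commute the $G$-factors past the $F$-factors, using evenness in the fermionic case, to get $P_{F\cup G}=P_FP_G$. You then factorize the normalized trace via the tensor decomposition for spins, and via the Majorana-basis expansion that isolates $x_\emptyset y_\emptyset$ for fermions. Your remark that disjoint supports force $F\cap G=\emptyset$ (given nonempty $\supp(h_j)$) is a small point the paper leaves implicit but does need for $P_{F\cup G}=P_FP_G$.
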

\begin{proof}
Let us denote for simplicity,
\begin{equation}
    A_F:=\prod_{u\in F}^{\rightarrow}B_u,
    \qquad
    A_G:=\prod_{u\in G}^{\rightarrow}B_u.
\end{equation}
In case (i), operators supported on disjoint qubit sets commute.
In case (ii), even operators supported on disjoint Majorana sets also commute.
Thus, in either case every term in $A_F$ commutes with every term in $A_G$, so by swapping terms, 
\begin{equation}
    \prod_{u\in F\cup G}^{\rightarrow}B_u = A_F A_G.
\end{equation}
Now we show the normalized trace factorizes.
In case (i), we can choose a basis of the vector space so that $A_F=A_F^{(F)}\otimes I\otimes I$ and $A_G=I\otimes A_G^{(G)}\otimes I$, which implies 
\begin{equation}
    2^{-n}\Tr(A_F A_G) = \left(2^{-|\supp(F)|}\Tr A_F^{(F)}\right) \left(2^{-|\supp(G)|}\Tr A_G^{(G)}\right)
    = (2^{-n}\Tr A_F) (2^{-n}\Tr A_G).
\end{equation}
In case (ii), expanding $A_F$ and $A_G$ in the orthonormal Majorana basis (\Cref{eq:HS-orthonormality}), we write
\begin{equation}
    A_F=\sum_{X\subseteq \supp(F)} \alpha_X \Gamma_X,
    \qquad
    A_G=\sum_{Y\subseteq \supp(G)} \beta_Y \Gamma_Y.
\end{equation}
We have $\Gamma_X\Gamma_Y =(-1)^{s(X,Y)}\Gamma_{X\cup Y}$ where $s(X,Y) \in \{0,1\}$ and $s(\emptyset,\emptyset)=0$. 
Using $2^{-n}\Tr(\Gamma_S)=\mathbf 1[S=\emptyset]$, we obtain
\begin{equation}
    2^{-n}\Tr(A_F A_G) = \sum_{X,Y}(-1)^{s(X,Y)}\alpha_X\beta_Y 2^{-n}\Tr(\Gamma_{X\cup Y})
    =\alpha_\emptyset \beta_\emptyset
    =(2^{-n}\Tr A_F)(2^{-n}\Tr A_G).
\end{equation}
\end{proof}

\subsection{Event graph and polymers}

We embed a graph structure on the set of events $\mathcal{U}=[A]\times[m]$ by connecting $(j,s)$ and $(j',s')$ whenever their supports overlap. 
More formally, define the event graph $G_m=(\mathcal U,E_m)$ by putting an edge between
$(j,s)$ and $(j',s')$ if and only if $\supp(h_j) \cap \supp(h_{j'}) \neq \emptyset$
(time indices do not affect adjacency).
In particular, $(j,s)$ is adjacent to $(j,s')$ for all $s'\neq s$.
Let $N(u)$ denote the open neighborhood of $u\in\mathcal U$, and define the maximum closed neighborhood size
\begin{equation}\label{eq:closed_neighborhood_def}
    \Delta_m := \max_{u\in\mathcal U} \left|\{u\}\cup N(u)\right|.
\end{equation}
With locality bounded by $k$ and the degree bound $D$, each term $h_j$ overlaps with at most $k(D-1)$ other terms, so the number of terms overlapping $h_j$ including itself is at most $k(D-1)+1$. 
For each such term there are $m$ time slices. 
Thus
\begin{equation}\label{eq:Delta-hat-m}
    \Delta_m \le m \bigl(k(D-1)+1\bigr).
\end{equation}

Following terminology in the physics literature, a \emph{polymer} is a nonempty connected vertex set $\gamma\subseteq\mathcal U$ in the event graph $G_m$.
Let $\mathcal C_m$ denote the set of all polymers.
For a subset $F\subseteq\mathcal U$, let $F=\gamma_1\sqcup\cdots\sqcup\gamma_r$
be the decomposition of $F$ into connected components of the induced subgraph $G_m[F]$.
By definition of $G_m$, disconnected components have disjoint spatial supports, and \Cref{eq:zeta-factorization} gives
\begin{equation}\label{eq:zeta-components}
    \zeta_m(F)=\prod_{i=1}^r \zeta_m(\gamma_i).
\end{equation}

The event graph induces a graph on the vertex set $\mathcal C_m$ which we call the polymer graph $G(\zeta_m)$. 
$G(\zeta_m)$ has vertex set $\mathcal C_m$ with edges connecting polymers $\gamma\sim\gamma'$ if and only if their spatial supports overlap, i.e. $\supp(\gamma)\cap\supp(\gamma')\neq\emptyset$.
Let $\mathcal I(G(\zeta_m))$ contain the independent sets of $G(\zeta_m)$.
Then the map $F\mapsto\{\gamma_1,\dots,\gamma_r\}$ is a bijection between subsets $F\subseteq\mathcal U$ and independent sets $\Gamma\in\mathcal I(G(\zeta_m))$, with inverse $\Gamma\mapsto \bigcup_{\gamma\in\Gamma}\gamma$.
Thus, \Cref{eq:Zm-factor} can be written as a hard-core polymer model
\begin{equation}\label{eq:cluster_polymer_model}
    \Tr U_m(\beta) = 2^n \Xi_m(\beta), \qquad
    \Xi_m(\beta)
     := \sum_{\Gamma\in\mathcal I(G(\zeta_m))}\ \prod_{\gamma\in\Gamma}\zeta_m(\gamma).
\end{equation}
Note, that in physics literature, it is more common to denote adjacent polymers as incompatible with the relationship $\gamma \nsim \gamma'$ if the union of their supports overlap.
We will instead adopt the notation here to more clearly denote independent sets.

We will need to bound the number of polymers of a given size.  
To do this, we will use an exponential size bound on the number of connected sets of size $s$ which we share below.
\begin{lemma}[Connected-set counting]\label{lem:connected-count}
    Let $G$ be any finite graph with maximum degree $\Delta\ge 2$. Fix a vertex $u$. Let $N_s(u)$ be the number of connected vertex sets $S\subseteq V(G)$ with $u\in S$ and $|S|=s$. Then
    \begin{equation}
        N_s(u) \le \frac{1}{s} \binom{s\Delta}{s-1}.
    \end{equation}
\end{lemma}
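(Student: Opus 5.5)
We plan to prove the bound by an injection from connected sets containing $u$ into rooted subtrees of the infinite $\Delta$-ary rooted tree, and then to count such subtrees with the classical Fuss--Catalan formula. Given a connected set $S\ni u$ with $|S|=s$, we first fix an ordering of the neighbors of every vertex of $G$. We then run a breadth-first search of $G[S]$ from $u$ using this ordering. The result is a spanning tree $T_S$ of $G[S]$ rooted at $u$, and each child of a vertex $v$ is labeled by its position in $v$'s neighbor list. This turns $T_S$ into a subtree of the infinite rooted tree $\mathbb{T}_\Delta$ in which every node has $\Delta$ labeled child slots. The subtree has $s$ nodes and contains the root.

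The map $S\mapsto T_S$ is injective, because $S$ is recovered from the labeled tree by following the labels from $u$. Hence $N_s(u)$ is at most the number $t_s$ of subtrees of $\mathbb{T}_\Delta$ with $s$ nodes that contain the root.

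To compute $t_s$, we identify such subtrees with $\Delta$-ary trees: each node has $\Delta$ ordered slots, and each slot is either empty or occupied. The number of these with $s$ nodes is the Fuss--Catalan number
\begin{equation*}
t_s=\frac{1}{(\Delta-1)s+1}\binom{\Delta s}{s}.
\end{equation*}
This follows from Lagrange inversion applied to the generating function $T(x)=x(1+T(x))^\Delta$, which gives $[x^s]T=\frac{1}{s}[y^{s-1}](1+y)^{\Delta s}=\frac{1}{s}\binom{\Delta s}{s-1}$. The two expressions agree, since $\binom{\Delta s}{s-1}=\binom{\Delta s}{s}\frac{s}{(\Delta-1)s+1}$. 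Therefore $N_s(u)\le t_s=\frac1s\binom{s\Delta}{s-1}$, which is the claimed bound.

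The main point requiring care is the injectivity step: we need a canonical choice of spanning tree, so that distinct sets $S$ give distinct labeled trees. The BFS with a fixed neighbor order provides this. Any child label can be reused freely in the count, which can only overcount, so the inequality still holds. As an optional refinement, one could use $\Delta-1$ slots for non-root vertices, but this is not needed for the stated bound. The hypothesis $\Delta\ge2$ is used only to make the formula meaningful; for $\Delta=1$ the bound is trivial anyway.
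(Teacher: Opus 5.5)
Your proof is correct and follows essentially the same route as the paper: choose a spanning tree of the connected set rooted at $u$, then bound the number of such trees by the number of root-containing subtrees of the infinite $\Delta$-ary tree. The paper simply cites Lemma 2.1 of Borgs et al.\ for that subtree count, whereas you prove it yourself. Your injection is valid because the child labels are positions in the neighbor lists of $G$, so $S$ can be decoded from the labeled tree. Your Lagrange-inversion computation from $T=x(1+T)^\Delta$ is the same Fuss--Catalan relation the paper uses later in the Koteck\'y--Preiss argument.
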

\begin{proof}
Note that every connected vertex set has a spanning tree. Lemma 2.1 of \cite{borgs2013left} shows that the number of subtrees of $s$ nodes containing $u$ is at most $\frac{1}{s} \binom{s\Delta}{s-1}$.
Note, Ref. \cite{borgs2013left} also show that $N_s(u)  \le  \frac{(e\Delta)^{ s-1}}{2}$ for all $s\ge 2$.
\end{proof}

\subsection{Koteck\'y--Preiss bound}

We will apply the KP convergence criteria which is formalized below.

\begin{theorem}[Kotecký-Preiss condition \cite{kotecky1986cluster}]\label{prop:KP}
Let $\mathcal{C}$ denote the set of polymers in the hardcore model of \eqref{eq:cluster_polymer_model}.
If there exists a function $a:\mathcal{C} \to[0,\infty)$ such that for all $\gamma \in \mathcal{C}$,
\begin{equation}\label{eq:KP}
\sum_{\gamma' \sim \gamma} |\zeta(\gamma')| e^{a(\gamma')} \le\ a(\gamma),
\end{equation}
then the cluster expansion for $\log\Xi_m$ converges absolutely and $\Xi_m(\beta) \neq 0$.
\end{theorem}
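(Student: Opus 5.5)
We prove this by the standard inductive (Dobrushin-type) argument on finite polymer sets. Since $m$ is fixed and $\mathcal U$ is finite, $\mathcal C_m$ is finite. It therefore suffices to treat the activities $z_\gamma:=\zeta_m(\gamma)$ as independent complex variables with $|z_\gamma|\le|\zeta_m(\gamma)|$. For any subset $S\subseteq\mathcal C_m$ we set
\begin{equation*}
\Xi_S(z):=\sum_{\Gamma\in\mathcal I(G(\zeta_m)[S])}\prod_{\gamma\in\Gamma}z_\gamma ,
\end{equation*}
so that $\Xi_\emptyset=1$ and $\Xi_{\mathcal C_m}=\Xi_m(\beta)$. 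Let $N[\gamma]$ denote the closed neighborhood of $\gamma$. Every polymer overlaps itself, so $\gamma\sim\gamma$ and $\gamma$ itself appears in the KP sum. The basic identity comes from splitting according to whether $\gamma\in\Gamma$:
\begin{equation*}
\Xi_S=\Xi_{S\setminus\{\gamma\}}+z_\gamma\,\Xi_{S\setminus N[\gamma]}.
\end{equation*}

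\textbf{Inductive claim.} For every $S$ and every $\gamma\in S$ we have $\Xi_S\neq0$ and
\begin{equation*}
\bigl|\log\bigl(\Xi_S/\Xi_{S\setminus\{\gamma\}}\bigr)\bigr|\le |z_\gamma|e^{a(\gamma)},
\end{equation*}
where $\log$ is the principal branch. The proof is by induction on $|S|$; the base case $|S|=1$ is contained in the general step. Assume the claim for all smaller sets.

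First, bound the ratio $r:=\Xi_{S\setminus N[\gamma]}/\Xi_{S\setminus\{\gamma\}}$. Enumerate the neighbors $\gamma'_1,\dots,\gamma'_\ell$ of $\gamma$ in $S\setminus\{\gamma\}$, and remove them one at a time from $S\setminus\{\gamma\}$. This writes $r$ as a telescoping product of $\ell$ ratios, each of the form $\Xi_{T\setminus\{\gamma'_i\}}/\Xi_T$ with $|T|<|S|$. By the induction hypothesis, each such ratio is nonzero and has modulus at most $\exp(|z_{\gamma'_i}|e^{a(\gamma'_i)})$. Hence, using the KP condition \eqref{eq:KP} with the $\gamma'=\gamma$ term separated out,
\begin{equation*}
|r|\le\exp\Bigl(\sum_{\gamma'\sim\gamma,\ \gamma'\neq\gamma}|z_{\gamma'}|e^{a(\gamma')}\Bigr)\le \exp\bigl(a(\gamma)-y\bigr),\qquad y:=|z_\gamma|e^{a(\gamma)}.
\end{equation*}

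Next, write $\Xi_S/\Xi_{S\setminus\{\gamma\}}=1+x$ with $x=z_\gamma r$. The bound on $r$ gives $|x|\le y e^{-y}\le 1/e<1$. Therefore $\Xi_S\neq0$, and
\begin{equation*}
|\log(1+x)|\le-\log(1-|x|)\le-\log(1-ye^{-y})\le y .
\end{equation*}
The last inequality is equivalent to $1-ye^{-y}\ge e^{-y}$, i.e.\ $e^y\ge1+y$, which always holds. This closes the induction.

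The only delicate point is the bookkeeping of the self-term $\gamma'=\gamma$ in the KP sum. Keeping it is exactly what supplies the slack $-y$ in the exponent, and that slack is what makes $-\log(1-ye^{-y})\le y$ close the induction. Using the cruder bound $|r|\le e^{a(\gamma)}$ would not suffice.

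\textbf{Conclusion.} Taking $S=\mathcal C_m$ and $z_\gamma=\zeta_m(\gamma)$ gives $\Xi_m(\beta)\neq0$. For absolute convergence, note that the argument applies to every $z$ in the closed polydisk $P:=\{|z_\gamma|\le|\zeta_m(\gamma)|\}$, because the KP hypothesis is monotone in the $|z_\gamma|$. So the polynomial $\Xi_{\mathcal C_m}(z)$ is nonvanishing on a neighborhood of $P$. Since $P$ is a product of disks, hence simply connected, $\log\Xi_{\mathcal C_m}$ is holomorphic there. Its multivariate Taylor series at $z=0$, which is the cluster expansion, therefore converges absolutely on $P$, and in particular at $z_\gamma=\zeta_m(\gamma)$. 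Alternatively, summing the telescoped bounds over an enumeration of $\mathcal C_m$ gives the explicit estimate $|\log\Xi_m|\le\sum_\gamma|\zeta_m(\gamma)|e^{a(\gamma)}$.
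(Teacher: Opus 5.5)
Your proof is correct, but it takes a different route from the paper, which gives no proof and simply imports the theorem from Kotecký--Preiss \cite{kotecky1986cluster}.

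\textbf{What you do differently.} You run the Dobrushin-style induction on the ratios $\Xi_S/\Xi_{S\setminus\{\gamma\}}$. This proves nonvanishing directly on the closed polydisk $P=\{|z_\gamma|\le|\zeta_m(\gamma)|\}$. You then read off absolute convergence of the cluster expansion from holomorphy of $\log\Xi$ in finitely many variables.

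\textbf{The self-term.} You are right that the term $\gamma'=\gamma$ must be kept in the KP sum. Without it the statement is false: a single polymer with activity $-1$ and $a=0$ satisfies the neighbor-only condition but has $\Xi=0$. Keeping it supplies exactly the slack that closes $-\log(1-ye^{-y})\le y$. The paper's application in the proof of the KP bound lemma does include this term, since it sums over $v\in\{u\}\cup N(u)$.

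\textbf{What each route buys.} The Kotecký--Preiss theorem, as usually stated, controls the cluster weights $\Phi(X)$ directly. It yields the per-polymer tail bound $\sum_{X\not\sim\gamma}|\Phi(X)|\le a(\gamma)$ over clusters $X$, uniformly in the size of the polymer system. Your route gives only qualitative absolute convergence, together with the global estimate $|\log\Xi_m|\le\sum_\gamma|\zeta_m(\gamma)|e^{a(\gamma)}$. That is all the statement asserts. It is also all the paper uses downstream: the proof of Theorem~\ref{thm:bounded_degree} needs only $\Xi_m\neq 0$ in order to apply Hurwitz. Your argument is more elementary and fully self-contained.

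\textbf{One step to tighten.} Nonvanishing on a \emph{neighborhood} of $P$ does not follow from monotonicity of the KP hypothesis, since that hypothesis may hold with equality. It follows instead from compactness of $P$ and continuity of the polynomial $\Xi$. Take the neighborhood to be a slightly larger open polydisk. That set is convex, so a holomorphic branch of $\log\Xi$ exists on it, and Cauchy estimates then give absolute convergence of its Taylor series on $P$.
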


\begin{lemma}\label{lem:KP_bound}
    In the setting above, with $\varepsilon_1\in(0,1)$ as in \Cref{eq:T-bound}, there exists a positive integer $m_0(\varepsilon_1,\beta)$ such that
    for $m > m_0(\varepsilon_1, \beta)$, the Kotecký-Preiss condition in \Cref{eq:KP} holds when 
    \begin{equation}\label{eq:smallness-y-new}
        (1+\varepsilon_1)|\beta|  <  \frac{C_{\max}}{k(D-1)+1}, \qquad C_{\max}:=\frac{1}{2e}\approx 0.18394.
    \end{equation}
\end{lemma}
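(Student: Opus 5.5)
We verify the Kotecký--Preiss condition \eqref{eq:KP} with the size-proportional weight $a(\gamma)=c\,|\gamma|$, where $c>0$ is a constant fixed at the end. The first reduction is to replace the sum over polymers $\gamma'$ adjacent to $\gamma$ by a sum over events. If $\gamma'\sim\gamma$, their spatial supports overlap. Hence some event $u'\in\gamma'$ has support meeting the support of some event $u\in\gamma$, i.e.\ $u'\in\{u\}\cup N(u)$ in the event graph $G_m$. It therefore suffices to show, for every fixed $u$,
\begin{equation*}
\sum_{u'\in\{u\}\cup N(u)}\ \sum_{\gamma'\ni u'} |\zeta_m(\gamma')|\,e^{c|\gamma'|}\ \le\ c ,
\end{equation*}
since summing over $u\in\gamma$ then gives at most $c|\gamma|=a(\gamma)$. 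By \eqref{eq:Delta-hat-m} the outer sum has at most $\Delta_m\le m(k(D-1)+1)$ terms.

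Next, bound the inner sum using \eqref{eq:zeta-basic-bound}, namely $|\zeta_m(\gamma')|\le T^{|\gamma'|}$ with $T=(1+\varepsilon_1)|\beta|/m$, together with \Cref{lem:connected-count}. The maximum degree of $G_m$ is at most $\Delta_m-1<\Delta_m$, so
\begin{equation*}
\sum_{\gamma'\ni u'}|\zeta_m(\gamma')|e^{c|\gamma'|}\le \sum_{s\ge1}\frac1s\binom{s\Delta_m}{s-1}(Te^c)^s .
\end{equation*}
Using $\binom{s\Delta}{s-1}\le (e\Delta)^{s-1}$, or more sharply the exact Fuss--Catalan generating function $\sum_s \frac1s\binom{s\Delta}{s-1}x^s$, set $x:=\Delta_m T e^c$. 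The $m$-dependence cancels because $\Delta_m T\le (1+\varepsilon_1)|\beta|(k(D-1)+1)=:y$. The left side of the KP condition is then at most $\sum_{s\ge1}\frac{(e\,y e^c)^s}{e}$, up to $O(1/m)$ corrections. These corrections are controlled by choosing $m_0$ large, since $\binom{s\Delta}{s-1}\le (s\Delta)^{s-1}/(s-1)!$ and $s^{s-1}/s!\le e^{s-1}$.

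The last step is to choose $c$ so that the condition reduces to a geometric series. We need
\begin{equation*}
\frac{1}{e}\cdot\frac{e y e^{c}}{1-e y e^{c}}\le c .
\end{equation*}
Taking $c=1$ or optimizing over $c$ gives a threshold of the form $y<C/e$ for some explicit constant $C$. With the cruder bound, $c=1$ requires $ey e/(1-e^2y)\le e$, i.e.\ $y\le 1/(e(1+e))$, which is weaker than claimed. To reach $C_{\max}=1/(2e)$ I will use the tree generating function exactly: $T(x)=\sum_s \frac1s\binom{s\Delta}{s-1}x^s$ satisfies $T=x(1+T)^{\Delta}$. As $\Delta\to\infty$ with $x\Delta=z$ fixed, this gives $T\approx$ the tree function, $T=ze^{T}$, which converges for $z\le 1/e$. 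The KP inequality becomes $T(ye^c)\le c$. Taking $c=T$ gives $ye^{c}e^{c}=c$, i.e.\ $y=c e^{-2c}$, which is maximized at $c=1/2$ with value $1/(2e)$, matching $C_{\max}$.

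The main obstacle is this final optimization: making the limiting tree-function identity rigorous uniformly in $m$, with $\varepsilon_1$ absorbing the finite-$\Delta_m$ corrections, while keeping strict inequality. I would handle it by working with the monotone bound $(1+T/\Delta)^\Delta\le e^T$, which gives $T\le x e^{T}$ directly, and then arguing via the fixed point. Once KP holds, \Cref{prop:KP} gives $\Xi_m(\beta)\neq0$ with a uniform bound on $\log\Xi_m$, which passes to the limit $m\to\infty$.
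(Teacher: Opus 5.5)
Your proposal is correct and is essentially the paper's proof: the same weight $a(\gamma)=\theta|\gamma|$, the same reduction to $\Delta_m\,\mathcal{T}_{\Delta_m}(Te^{\theta})\le\theta$ via the connected-set count, the Fuss--Catalan equation $\mathcal{T}_\Delta=x(1+\mathcal{T}_\Delta)^\Delta$ combined with $(1+z/\Delta)^\Delta\le e^{z}$, and the optimization of $\theta e^{-2\theta}$ at $\theta=1/2$ (you can simply drop the crude $(e\Delta)^{s-1}$ detour). The only small difference is in closing the fixed-point step: the paper evaluates the series exactly at $x_0=z_0(1+z_0)^{-\Delta_m}$ with $z_0=1/(2\Delta_m)$ and then uses monotonicity of a nonnegative-coefficient series, so there is no limiting argument and no finite-$\Delta_m$ correction to absorb into $\varepsilon_1$.
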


\begin{proof}
As a reminder, $N(u)$ denotes the (open) neighbor set of $u\in\cU$ in the event graph and $\Delta_m \le   m\bigl(k(D-1)+1\bigr)$ is the degree bound in the event graph.
Choose $a(\gamma)=\theta|\gamma|$ with the choice $\theta=\frac12$.
For any polymer $\gamma$,
\begin{equation}
\begin{split}
    \sum_{\gamma'\sim\gamma} |\zeta_m(\gamma')|e^{a(\gamma')}
    &\le \sum_{u\in \gamma} \sum_{v \in \{u\} \cup N(u)} \sum_{\gamma'\ni v}
    |\zeta_m(\gamma')| e^{\theta|\gamma'|} \\
    &\le \Delta_m |\gamma|  \sup_{v\in\cU}\sum_{\gamma'\ni v}
    |\zeta_m(\gamma')| e^{\theta|\gamma'|}.    
\end{split}
\end{equation}
For each $v\in\cU$, using $|\zeta_m(\gamma')|\le T^{|\gamma'|}$ and the connected set bound $N_s(v)\le \frac{1}{s}\binom{s\Delta_m}{s-1}$ in \Cref{lem:connected-count} gives
\begin{equation}
    \sum_{\gamma'\ni v}|\zeta_m(\gamma')|e^{\theta|\gamma'|}
     \le  \sum_{s\ge 1} N_s(v) (Te^\theta)^s 
     \le \sum_{s\ge 1} \frac{1}{s} \binom{s\Delta_m}{s-1} (Te^\theta)^s 
     = \mathcal{T}_{\Delta_m}(Te^\theta),
\end{equation}
where the power series $\mathcal{T}_{\Delta}(x)$ is defined as
\begin{equation}
    \mathcal{T}_{\Delta}(x) := \sum_{s\ge 1} \frac{1}{s} \binom{s\Delta}{s-1} x^s.
\end{equation}
Therefore
\begin{equation}
    \sum_{\gamma'\sim\gamma} |\zeta_m(\gamma')|e^{a(\gamma')}
     \le  \Delta_m |\gamma| \mathcal{T}_{\Delta_m}(Te^\theta),
\end{equation}
and the KP condition in \Cref{eq:KP} holds if
\begin{equation}\label{eq:KP-suffices-new}
    \mathcal{T}_{\Delta_m}(Te^\theta) \le \theta/\Delta_m  = \frac{1}{2\Delta_m}.
\end{equation}
We now use the Fuss-Catalan functional equation to evaluate $\mathcal{T}_{\Delta_m}(Te^\theta)$.
Following Ref. \cite{gessel2016lagrange}, the $n$-th order $p$ Fuss-Catalan number is $\frac{1}{pn+1}\binom{pn+1}{n}$. 
Setting $n=s$ and $p=\Delta$ obtains $\frac{1}{\Delta s+1}\binom{\Delta s+1 }{ s} = \frac{1}{s}\binom{s\Delta }{ s-1}$ as in our setting.
Let $c_{\Delta}(x):=1+\mathcal{T}_{\Delta}(x)$. 
Section 3.3 of Ref. \cite{gessel2016lagrange} shows that $c_{\Delta}$ satisfies $c_{\Delta}(x) = 1+x c_{\Delta}(x)^{\Delta}$ or equivalently,
\begin{equation}\label{eq:fuss-eq}
    \mathcal{T}_{\Delta}(x) = x\left(1+\mathcal{T}_{\Delta}(x)\right)^{\Delta}.
\end{equation}
Using the choice $\theta=\frac12$,
let $z_0:=\theta/\Delta_m=1/(2\Delta_m)$ and set $x_0:=z_0/(1+z_0)^{\Delta_m}$ so that $z_0=x_0(1+z_0)^{\Delta_m}$ solving \Cref{eq:fuss-eq}.
As the polynomial expansion of $\mathcal{T}_{\Delta_m}$ has nonnegative coefficients, $\mathcal{T}_{\Delta_m}(x) \le \mathcal{T}_{\Delta_m}(x_0)$ for all $x \le x_0$. 
This implies $\mathcal{T}_{\Delta_m}(x) \le z_0$ for all $x \leq x_0$.
Using $(1+z_0)^{\Delta_m} \le e^{\Delta_m z_0} = e^{1/2}$, we get
\begin{equation}
    x_0 \ge  z_0 e^{-1/2} =  \frac{1}{2\Delta_m}e^{-1/2}.
\end{equation}
Therefore, if $Te^{1/2}\le \frac{1}{2\Delta_m}e^{-1/2}$, then
$\mathcal{T}_{\Delta_m}(Te^{1/2})\le z_0$ and $\Delta_m \mathcal{T}_{\Delta_m}(Te^{1/2}) \le \frac12$, which is \Cref{eq:KP-suffices-new}.

We now show the bound on $T$ suffices to verify the KP condition $T\le \frac{1}{2e \Delta_m}$.
For $m$ large enough, \Cref{eq:T-bound} gives $T\le (1+\varepsilon_1)|\beta|/m$.
Combining with the bound on $\Delta_m$ gives
\begin{equation}
    \Delta_m T  \le  (1+\varepsilon_1)\bigl(k(D-1)+1\bigr)|\beta|.
\end{equation}
So if \Cref{eq:smallness-y-new} holds and $m$ is taken large enough, then $\Delta_m T<1/(2e)$ and the KP condition follows.
\end{proof}

\begin{remark}\label{rem:theta_choice}
In the proof of \Cref{lem:KP_bound} we choose $\theta=1/2$ which arises because it maximizes the equation $\theta e^{-2\theta}$.
More specifically, assume $\theta>0$ is a free parameter. 
Using the equation
$\mathcal{T}_\Delta(x)=x(1+\mathcal{T}_\Delta(x))^\Delta$ yields the condition
\begin{equation}
    \Delta_m T \le \theta e^{-\theta}\left(1+\frac{\theta}{\Delta_m}\right)^{-\Delta_m}.
\end{equation}
We have $\Delta_m\to\infty$ as $m\to\infty$, so the right-hand side
converges to $\theta e^{-2\theta}$ which is the quantity optimized to obtain $\theta=1/2$. 
\end{remark}

We are finally ready to prove \Cref{thm:bounded_degree}.

\begin{proof}[Proof of \Cref{thm:bounded_degree}]
Let $C_{\max}$ be as in \Cref{lem:KP_bound}.
For $\beta$ with $(1+\varepsilon_1)|\beta|<C_{\max}/(k(D-1)+1)$,
by \Cref{lem:KP_bound}, there is $m_0$ such that for all $m\ge m_0$ the polymer partition function $\Xi_m(\beta)$ is holomorphic and zero‑free in the open disk $(1+\varepsilon_1)|\beta|<C_{\max}/(k(D-1)+1)$.

Since in finite dimension, the Trotterization converges uniformly on compact $\beta$‑sets, $U_m(\beta)\to e^{-\beta H}$ in operator norm and thus $\Xi_m(\beta)\to \Xi(\beta):=2^{-n}\Tr e^{-\beta H}$. 
By Hurwitz's theorem, if $\Xi(\beta)$ has a zero in the neighborhood of $\beta$, then so must $\Xi_m(\beta)$ for $m$ large enough. Therefore, $Z_H(\beta)=2^n\Xi(\beta)\neq 0$ for the interior of the disk $(1+\varepsilon_1)|\beta|<C_{\max}/(k(D-1)+1)$.
Because $\varepsilon_1$ was arbitrary, sending $\varepsilon_1\to 0$ finishes the proof that $Z_H(\beta)\neq 0$ for all $|\beta|<C_{\max}/(k(D-1)+1)$. 
\end{proof}

\subsection{Comparison of explicit constants}

We compare several explicit sufficient conditions for the zero-freeness of the partition function $Z_H(\beta)=\Tr(e^{-\beta H})$ in the complex $\beta$-plane, under the normalization $\|h_a\|\le 1$ and the assumption that each site participates in at most $D$ interaction terms and each term acts on at most $k$ sites.
These comparisons are shown in \Cref{tab:beta-radius-comparison}. 
Explicit estimates can be hard to directly compare so for completeness, we describe our methodology below.

We compare directly to bounds in three works \cite{harrow2020classical,mann2024algorithmic,nguyen2024high}.
Comparisons to Refs. \cite{harrow2020classical,mann2024algorithmic} are straightforward as they are in the same setting as ours.
For Ref. \cite{harrow2020classical}, their proof gives a zero-free region whenever $|\beta| < \frac{1}{e(2e-1)kD}$. Technically, in their exposition, they give the simplified bound $|\beta| < (5ekD)^{-1}$, but a careful look at their proof obtains the slightly improved bound. 
In Ref. \cite{mann2024algorithmic}, a sufficient condition for convergence is their stated result $|\beta| < \frac{1}{e^4D\binom{k}{2}}$ which we use directly.

We also compare to results in Ref. \cite{nguyen2024high} which are stated in a more abstract setting.
Obtaining explicit constants can be challenging from their formulation, but we follow the methodology in \cite[Section 5.3]{nguyen2024high}. 
Improved bounds can likely be obtained if one uses a more specialized procedure.
We describe how we obtain bounds below for both lattice settings and general spin settings:
\begin{itemize}
    \item \textbf{Nearest-neighbor lattice specialization ($k=2$)}: 
    Nguyen-Fern\'andez \cite{nguyen2024high} derive for translation-invariant nearest-neighbor
    interactions on $\mathbb{Z}^d$ the sufficient condition (their Eq.~(5.36))
    \begin{equation}
        e^{|\beta| \|\Phi\|}-1 \le
        \frac{\zeta}{\bigl(1+2d \zeta\bigr)^2 (1+\zeta)^{4d-2}},
    \end{equation}
    optimized over $\zeta>0$ (their Eq.~(5.37), and numerically summarized in their Table~5.1).
    Under our normalization $\|\Phi\|\le 1$ and $D=2d$, so
    \begin{equation}
        |\beta| \le \log (1+b_d),
        \qquad
        b_d:=\max_{\zeta>0}\frac{\zeta}{(1+2d\zeta)^2(1+\zeta)^{4d-2}}.
    \end{equation}
    For $D=4$ (i.e. $d=2$) and $D=8$ (i.e. $d=4$) we use the values from Table~5.1 \cite{nguyen2024high}.
    For $D=16$ (i.e. $d=8$) we evaluate $b_d$ by optimizing the above formula at its zero derivative.
    \item \textbf{General spin models}:
    Nguyen-Fern\'andez (see \cite[Cor.~5.6]{nguyen2024high}) provide a sufficient condition of the form
    \begin{equation}
        |\beta| \, \|\Phi\|_\alpha e^{|\beta| \, \|\Phi\|_\alpha} \le \frac{\alpha}{4C_{\alpha/2}},
    \end{equation}
    where $\|\Phi\|_\alpha := \sup_{i \in [n]} \sum_{h_a: i \in \supp h_a} \|h_a\| e^{\alpha |\supp h_a|}$ is their interaction norm and
    $C_{\alpha/2}:=\sup_{i \in [n]} \sum_{h_a: i \in \supp h_a }e^{-(\alpha/2)|\supp h_a|}$.
    To compare to our bounded-degree $k$-local setting, we apply it
    assuming each term has locality exactly $k$, norm $\|h_a\|=1$, and each site is supported on $D$ terms.
    Then $\alpha>0$,
    \begin{equation}
        \|\Phi\|_\alpha = D e^{\alpha k},
        \qquad
        C_{\alpha/2} = D e^{-(\alpha/2)k}.
    \end{equation}
    Substituting this in gives the explicit sufficient condition
    \begin{equation}
        |\beta|
        \le
        \max_{\alpha>0} 
        \frac{1}{D e^{\alpha k}} W\left(\frac{\alpha e^{(\alpha/2)k}}{4D}\right),
    \end{equation}
    where $W(\cdot)$ is the Lambert-$W$ function (defined by $W(x)e^{W(x)}=x$).
    In \Cref{tab:beta-radius-comparison} we evaluate this bound by numerically optimizing over $\alpha$ for each $(k,D)$.
\end{itemize}

\begin{table}[t]
\centering
\small
\setlength{\tabcolsep}{6pt}
\begin{tabular}{l|ccc|ccc}
\hline
 & \multicolumn{3}{c|}{$k=2$} & \multicolumn{3}{c}{$k=3$}\\
Method & $D=4$ & $D=8$ & $D=16$ & $D=4$ & $D=8$ & $D=16$\\
\hline
Ours
& 0.0263 & 0.0123 & 0.00593
& 0.01839 & 0.00836 & 0.00400
\\
HMS \cite{harrow2020classical}
& 0.0104 & 0.0052 & 0.00259
& 0.00691 & 0.00346 & 0.00173
\\
MM \cite{mann2024algorithmic}
& $0.0046$ & $0.0023$ & $0.00114$
& $0.00153$ & $0.00076$ & $0.00038$
\\
NF (lattice) \cite{nguyen2024high}
& 0.0286 & 0.0129 & 0.00633
& -- & -- & --
\\
NF (general) \cite{nguyen2024high}
& $0.0051$ & $0.0013$ & $0.00035$
& $0.00351$ & $0.00091$ & $0.00023$
\\
\hline
\end{tabular}
\caption{
Comparison of explicit sufficient zero-free radii in the complex $\beta$-plane under $\|h_a\|\le 1$.
The Nguyen-Fern\'andez ``lattice'' row assumes translation-invariant nearest-neighbor interactions on $\mathbb{Z}^d$ (so $k=2$ and $D=2d$); the entry at $D=16$ (i.e. $d=8$) is computed from their nearest-neighbor criterion rather than read directly from their table.
The Nguyen-Fern\'andez ``general'' row is a crude (and likely pessimistic) specialization of their Cor.~5.6 to bounded-degree $k$-uniform interactions.
}
\label{tab:beta-radius-comparison}
\end{table}

\subsection{Zero-free versus separability thresholds}
\label{app:sep-vs-zf}

Here, we compare the zero-free regime for bounded-degree Hamiltonians with the separability threshold established in \cite{bakshi2024high}. To avoid ambiguity, we remind the reader of these two thresholds: 
\begin{itemize}
    \item The zero-free threshold $\beta_{\rm zero-free}$ indicates that $Z_H(z)\neq 0$ for every $k$-local, degree-$D$ Hamiltonian in the class whenever $|z|<\beta_{\rm zero-free}$.
    \item The universal separability threshold $\beta_{\rm sep}$ is a sufficient condition guaranteeing that the Gibbs state $\rho_\beta$ is separable for every Hamiltonian in the same class and all $0 \le \beta <\beta_{\rm sep}$.
\end{itemize}

Our \Cref{thm:bounded_degree} gives the former type of guarantee. Ref. \cite{bakshi2024high} gives the latter separability guarantee for Hamiltonians represented as sums of local Pauli terms, although their locality and degree parameters are defined on the interaction (dual) graph rather than in the conventions used here. 
Translated into our notation, their result implies the bound $\beta_{\rm sep} > 1/(100k^2D)$ assuming each term in the Hamiltonian is a Pauli.
As we will see below, for Hamiltonians where each term is a bounded $k$-local operator (not necessarily a Pauli), $\beta_{\rm sep}$ can decay exponentially with $k$.

The proposition below compares these two thresholds showing that the zero-free threshold can surpass the separability threshold especially at large locality $k$. We study the SWAP Hamiltonian only because it admits a simple closed-form threshold, but remark that many other Hamiltonians have the same behavior. 
For example, a $k$-local Hamiltonian $U P U^\dagger$ consisting of any $k$-local Pauli $P$ conjugated by a Haar random $2^k \times 2^k$ unitary $U$ also suffices to prove a similar result to the one below with high probability.

\begin{proposition}[SWAP Hamiltonian separability and zero-free threshold]\label{prop:swap}
Fix an even locality parameter $k=2m$. Consider $k$ qubits split into two registers
$A$ and $B$ of size $m=k/2$ qubits each, so that
$\dim A=\dim B = d_{\mathrm{loc}} = 2^{k/2}$.
Let $F$ denote the $k$ qubit swap operator acting on $A\otimes B$:
 \begin{equation}
F (|x\rangle_A\otimes |y\rangle_B) = |y\rangle_A\otimes |x\rangle_B.
 \end{equation}
Let the Hamiltonian be the single $k$-local term $H=F$ (so the degree is $D=1$ and $\|H\|=1$).
Then, for $\beta > \operatorname{arctanh}(2^{-k/2}) = \Theta(2^{-k/2})$, the Gibbs state $\rho_\beta$ is entangled. Since $H$ consists of a single $k$-local term, it has degree $D=1$, and \Cref{thm:bounded_degree} guarantees that $Z_H(z)\neq 0$ for all $|z|<1/(2e)$. In particular, for $k\ge 8$, the explicit choice $\beta=2^{1-k/2}$ lies inside this zero-free regime and already yields an entangled Gibbs state.
\end{proposition}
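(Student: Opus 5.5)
The plan is to diagonalize the SWAP operator, which makes the Gibbs state an explicit Werner state, and then certify entanglement with the swap operator itself as a witness. Since $F^2=I$, we can write $F=P_+-P_-$, where $P_\pm=(I\pm F)/2$ are the projectors onto the symmetric and antisymmetric subspaces of $A\otimes B$. Their ranks are $d_\pm = d(d\pm1)/2$, with $d=d_{\mathrm{loc}}=2^{k/2}$. This gives
\begin{equation}
    e^{-\beta F}=e^{-\beta}P_+ + e^{\beta}P_-,\qquad Z_H(\beta)=e^{-\beta}d_+ + e^{\beta}d_-,
\end{equation}
and hence
\begin{equation}
    \Tr(F\rho_\beta)=\frac{e^{-\beta}d_+-e^{\beta}d_-}{Z_H(\beta)}.
\end{equation}

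The witness step uses only the easy direction of Werner's characterization. For any product state $\sigma_A\otimes\sigma_B$ we have $\Tr\bigl(F(\sigma_A\otimes\sigma_B)\bigr)=\Tr(\sigma_A\sigma_B)\ge 0$, because both factors are positive semidefinite. By linearity, every separable state $\sigma$ on $A\otimes B$ satisfies $\Tr(F\sigma)\ge 0$. It therefore suffices to show $\Tr(F\rho_\beta)<0$. This holds exactly when
\begin{equation}
    e^{2\beta}>\frac{d_+}{d_-}=\frac{d+1}{d-1},
\end{equation}
that is, when $\beta>\tfrac12\log\frac{d+1}{d-1}=\operatorname{arctanh}(1/d)=\operatorname{arctanh}(2^{-k/2})$. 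This proves the first claim. The statement $\Theta(2^{-k/2})$ then follows from $\operatorname{arctanh}(x)=x+O(x^3)$.

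For the zero-free part, $H$ has a single term. Every qubit therefore lies in at most one term, so $D=1$, and $\|h_a\|=\|F\|=1$. With $D=1$ the factor $k(D-1)+1$ equals $1$, so \Cref{thm:bounded_degree} gives $Z_H(z)\neq0$ for $|z|<1/(2e)$.

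Finally, we check that $\beta=2^{1-k/2}$ lies in both regimes when $k\ge 8$. Set $x=2^{-k/2}\le 1/16$. Using $\operatorname{arctanh}(x)\le x/(1-x^2)<2x=\beta$, the state is entangled. Also $\beta\le 2^{-3}=0.125<1/(2e)\approx0.184$, so $\beta$ is in the zero-free disk.

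There is no real obstacle in this argument. The only points needing care are the algebra for the sign of $\Tr(F\rho_\beta)$ and confirming that the degree convention gives $D=1$ rather than $D=0$.
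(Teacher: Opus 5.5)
Your proposal is correct and follows essentially the same route as the paper. Both use the swap operator as an entanglement witness that is nonnegative on product states, compute $\Tr(F\rho_\beta)$ explicitly via $F^2=I$ to get the threshold $\tanh\beta>2^{-k/2}$, and then apply \Cref{thm:bounded_degree} with $D=1$ together with the checks $\operatorname{arctanh}(x)<2x$ and $2^{1-k/2}<1/(2e)$ for $k\ge 8$. The only cosmetic difference is that you diagonalize $F$ with the symmetric/antisymmetric projectors of rank $d(d\pm1)/2$, whereas the paper writes $e^{-\beta F}=\cosh\beta\, I-\sinh\beta\, F$ and uses $\Tr F=d$; both give the same expression.
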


\begin{proof}

To prove the entanglement threshold, note that for any product state $|a\rangle_A\otimes |b\rangle_B$,
\begin{equation}
    \langle a,b|F|a,b\rangle = |\langle a|b\rangle|^2 \ge 0.
\end{equation}
By convexity, for any separable mixed state $\sigma$, one has $\Tr(F\sigma)\ge 0$. Thus, if $\Tr(F\rho_\beta)<0$ then $\rho_\beta$ is necessarily entangled.

Since $F^2=\Id$, we have
\begin{equation}\label{eq:exp-swap}
    e^{-\beta F}=\cosh(\beta) \Id-\sinh(\beta) F.
\end{equation}
As a reminder, $\rho_\beta = e^{-\beta F}/\Tr(e^{-\beta F})$.
We now compute $\Tr(F\rho_\beta)$. Using \cref{eq:exp-swap} and $F^2=\Id$:
 \begin{equation}
    \Tr(F\rho_\beta)
    =
    \frac{\cosh\beta \Tr(F)-\sinh\beta \Tr(\Id)}{\cosh\beta \Tr(\Id)-\sinh\beta \Tr(F)}.
\end{equation}
Since $\Tr(\Id)=d_{\mathrm{loc}}^2$ and $\Tr(F)=d_{\mathrm{loc}}$, this simplifies to
\begin{equation}
    \Tr(F\rho_\beta) = \frac{\cosh\beta-d_{\mathrm{loc}}\sinh\beta}{d_{\mathrm{loc}}\cosh\beta-\sinh\beta}.
\end{equation}
For $\beta\ge 0$ the denominator is positive, so $\Tr(F\rho_\beta)<0$ iff
$\cosh\beta<d_{\mathrm{loc}}\sinh\beta$, i.e.\ $\tanh\beta>1/d_{\mathrm{loc}}=2^{-k/2}$.
Equivalently, $\rho_\beta$ is entangled whenever $\beta>\operatorname{arctanh}(2^{-k/2})$.
Since $\operatorname{arctanh}(x)=x+O(x^3)$ as $x\to 0$, we have $\operatorname{arctanh}(2^{-k/2})=\Theta(2^{-k/2})$.

Now consider the explicit choice $\beta=2^{1-k/2}$. We have $\beta>\operatorname{arctanh}(2^{-k/2})$ because $\operatorname{arctanh}(x)<2x$ for all $0<x<1/2$. For $k\ge 8$, one has $2^{1-k/2}<1/(2e)$, so \Cref{thm:bounded_degree} guarantees that the partition function is zero-free at this temperature. 

\end{proof}

\begin{remark}[Why this does not contradict death of entanglement]
    The death of entanglement theorem of \cite{bakshi2024high} does not directly apply to the Hamiltonian $H=F$ as the swap operator $F$ is not a $k$-local Pauli. If one expands $F$ into Pauli strings, the number of Pauli terms required is $2^k-1$ and the degree of the Hamiltonian become $\Theta(2^k)$, so the separability guarantee from \cite{bakshi2024high} becomes $\beta < O\left((k^2 2^k)^{-1} \right)$.
\end{remark}

\section{Stabilizer code Hamiltonians}
\label{app:codes}
Consider an $[[n,k,d]]$ quantum stabilizer code which is generated by $m:=n-k$ Pauli check matrices $C_1, \dots, C_{m}$ where each check matrix is an $n$-qubit tensor product of Pauli matrices $C_i=(-1)^{b_i} P_1^{(i)} \otimes P_2^{(i)} \otimes \cdots \otimes P_n^{(i)} \in \mathcal{P}_n$ with $b_i\in\{0,1\}$ \cite{gottesman1997stabilizer}. 
By construction, $C_1, \dots, C_{m}$ pairwise commute with each other.
We say $C_i$ has weight $w_i$ if $w_i$ of the $n$-many single qubit Pauli matrices are not equal to identity.
We define the Hamiltonian for this code as 
\begin{equation}\label{eq:stabilizer_hamiltonian}
    H=\sum_{i=1}^{m} C_i.
\end{equation}

\subsection{Estimating the partition function}

As $C_1, \dots, C_{m}$ are independent check matrices that pairwise commute and form a code, the eigenspaces of $H$ are isomorphic to those of the transverse field model $\sum_{i=1}^{m}Z_i$. 
In fact, eigenspaces $D_s$ of $H$ can be placed in correspondence with bitstrings $s \in \{-1,+1\}^{m}$ with eigenvalue $\sum_i s_i$ and spanned by states
\begin{equation}
    D_s=\left\{\ket{\phi}: C_i \ket{\phi} = s_i\ket{\phi} \, \forall i \in \{1,\dots,m\} \right\}.
\end{equation}
Thus, the Hamiltonian $H=\sum_{i=1}^{m} C_i$ has eigenvalues $-m+2j$ for $j\in\{0,\dots,m\}$ with corresponding eigenspace of dimension $2^{k}\binom{m}{j}$.
The complex valued partition function evaluates to
\begin{equation}
    Z_H(z)=2^{m+k}\cosh(z)^{m},
\end{equation}
and the Gibbs state $\rho_\beta := e^{\beta H} / \Tr[e^{\beta H}]$ at inverse temperature $\beta>0$ has energy
\begin{equation}
     \Tr[H \rho_\beta] = m\tanh(\beta).
\end{equation}

\begin{theorem}[Barvinok's interpolation for stabilizer Hamiltonians]
    Given an $[[n,k,d]]$ stabilizer code with stabilizer Hamiltonian $H$ of the form in \eqref{eq:stabilizer_hamiltonian} with $\beta > 0$, Barvinok's interpolation solves QPF with runtime $n^{O(\mathrm{poly}(\beta)\log(n/\epsilon))}$ returning $\widetilde{Z}$ such that $(1-\epsilon)Z_H(\beta) \le \widetilde{Z} \le (1+\epsilon)Z_H(\beta)$. 
\end{theorem}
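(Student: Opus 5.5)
The plan is to use the closed form $Z_H(z)=2^{n}\cosh(z)^{m}$ to read off the zero set exactly, and then apply Barvinok's interpolation through the wedge-avoiding map of \Cref{lem:wedge-avoiding-mapping}. Since $\cosh z=0$ if and only if $z=i(\pi/2+\pi j)$ for $j\in\mathbb Z$, all zeros of $Z_H$ lie on the imaginary axis at $|\Im z|\ge \pi/2$. After rescaling $z\mapsto \beta z$, the zeros of $Z_H(\beta\,\cdot)$ lie on $i[\rho,\infty)\cup i(-\infty,-\rho]$ with
\begin{equation}
\rho:=\frac{\pi}{2\beta}.
\end{equation}

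\emph{Small $\beta$.} If $\beta\le \pi/4$, say, then $Z_H$ is zero-free on the disk $|z|<\pi/2$. Here I would invoke \Cref{thm:disk-barvinok} directly with $R=\pi/2$ and $|\beta|\le R/2$. Using $\|H\|\le m\le n$, this gives $K=O(\log(n/\epsilon))$.

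\emph{Large $\beta$.} Here $\rho<1$. I would take \Cref{lem:wedge-avoiding-mapping} with a fixed half-angle, say $\delta_\theta=\pi/4$, so that $p=1/2$. Its map $\phi$ is holomorphic on $\mathbb D_R$ with $R=1+\Theta(\rho^{1/p})=1+\Theta(\beta^{-2})$, satisfies $\phi(0)=0$ and $\phi(1)=1$, and sends $\mathbb D_R$ into the complement of the two wedges $W^{(\pm)}_{\rho,\delta_\theta}$. These wedges contain the rays $\pm i[\rho,\infty)$, so $Z_H(\beta\phi(z))\neq 0$ on $\mathbb D_R$. Since $\mathbb D_R$ is simply connected, we can define
\begin{equation}
f(z):=\log Z_H(\beta\phi(z)),
\end{equation}
taking the branch with $f(0)=n\log 2$; this $f$ is holomorphic on $\mathbb D_R$. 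The Taylor coefficients of $f$ at $0$ are polynomial combinations of the Taylor coefficients of $\phi$ and the cumulants $\kappa_r$ of $H$, for $r\le K$. By \Cref{thm:disk-barvinok}, the cumulants cost $n^{O(K)}$ to compute because $H$ has at most $n$ Pauli terms. The coefficients of $\phi$ are explicit.

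\emph{Truncation bound.} I would apply \Cref{lem:disk-analytic} on a radius $r=1+(R-1)/3$. This requires a bound on $M_r=\sup_{|z|\le r}|f(z)|$, which I would get from Borel--Carath\'eodory on the intermediate radius $r'=1+2(R-1)/3$. Two inputs go in. First, $\Re f(z)\le n\log 2+m\beta|\phi(z)|$. Second, $\sup_{|z|\le r'}|\phi(z)|\le 2^{p}(1+o(1))$, by the asymptotics in \Cref{lem:wedge-avoiding-mapping} (the same maximum-modulus argument applies on $r'$). Together these give
\begin{equation}
M_r=O\!\left(\frac{n\beta}{R-1}\right)=O(n\beta^{3}).
\end{equation}
Since $\log r=\Theta(R-1)=\Theta(\beta^{-2})$, \Cref{lem:disk-analytic} then yields
\begin{equation}
K=O\!\left(\beta^{2}\log\frac{n\beta}{\epsilon}\right)
\end{equation}
for additive error $\log(1+\epsilon)$ in $\log Z$. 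This is a multiplicative $(1\pm\epsilon)$ error in $Z$, exactly as in the proof of \Cref{thm:disk-barvinok}. The total runtime is $n^{O(K)}=n^{O(\mathrm{poly}(\beta)\log(n/\epsilon))}$.

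\emph{Main obstacle.} The step I expect to require the most care is the uniform bound on $|f|$ near the unit circle. Two things must be checked there. First, the estimate on $|\phi|$ must hold on the Borel--Carath\'eodory radius $r'$, and not only on $(1+R)/2$. Second, the factors $1/(R-1)$ coming from Borel--Carath\'eodory must enter only inside the logarithm, so that they contribute $O(\log\beta)$ and do not spoil the $\mathrm{poly}(\beta)$ dependence. If $\delta_\theta\to 0$ were pushed, $K$ would improve to $O(\beta\log(\cdot))$, but a fixed wedge angle suffices for the stated bound.
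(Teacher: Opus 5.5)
Your proposal is correct and follows the paper's route. Like the paper, you read off the zeros of $2^n\cosh(z)^m$ on the imaginary axis at $|\Im z|\ge\pi/2$, compose with the wedge-avoiding map of \Cref{lem:wedge-avoiding-mapping}, and run the Barvinok truncation; your explicit Borel--Carath\'eodory estimate, giving $K=O(\beta^2\log(n\beta/\epsilon))$, just fills in what the paper handles by citing \Cref{thm:disk-barvinok}. (Two cosmetic fixes: for $\pi/4<\beta\le\pi/2$ you have $\rho\ge 1$, outside the lemma's hypothesis, so use $\min(\rho,1/2)$ in place of $\rho$; and on radius $r'$ the same estimate gives $|\phi|\le 3^p(1+o(1))$ rather than $2^p$, which changes nothing asymptotically.)
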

\begin{proof}
    For complex valued $z$, $Z_H(z)=2^{m+k}\cosh(z)^{m}$ which is zero when $\cosh(z)=0$.
    This occurs only on the imaginary axis at values $i\pi/2 + \{i\pi \ell: \ell \in \mathbb{Z}\}$. 
    Using the function $\phi$ in \Cref{lem:wedge-avoiding-mapping} which maps the disk to the wedge, we run Barvinok's interpolation on the composite function $f(z)=Z_H(\beta \phi(z))$ which is zero-free in the disk with radius $R=1+O(\mathrm{poly}(1/\beta))$. 
    Apply \Cref{thm:disk-barvinok} to conclude.
\end{proof}
\begin{remark}
    Given the explicit simple form $Z_H(z)=2^{n}\cosh(z)^{m}$, it is trivial to calculate the partition function $Z_H(\beta)$ in polynomial time in $n$ as opposed to the quasi-polynomial runtime stated above. 
    Similarly, derivatives of $\log Z_H(z)$ can also trivially be calculated in polynomial time in $n$.
    Nonetheless, we report runtimes in the quasi-polynomial fashion as this is what would be achieved when calculating derivatives from moments $\Tr[H^k]$ without relying on the code structure. 
\end{remark}

\subsection{Measuring local observables of thermal state}
The results above guarantee a zero-free region for the partition function $Z_H(\beta)$.
To measure observables of a Gibbs state $\rho_\beta$ with respect to a local operator $A$, we need to apply \Cref{lem:QMV_from_QPF} which requires proving a zero-free region for a perturbed Hamiltonian $H+\delta A$ for $\delta$ which can be inverse polynomially small in $n$.
For any local Pauli $A$, we show below that $Z_{H+\delta A}(z)$ has a zero-free region $\mathcal{D} = \left\{ z \in \mathbb{C} : |\mathrm{Im}(z)| \le \Omega(1) \right\} \cup \left\{ z \in \mathbb{C} : | \Re z| > O(1) \right\}$ which is the set of all points $z=a+ib$ where either $|a|=\Omega(1)$ is larger than some constant threshold or $|b|<\Omega(1)$ and is contained inside a rectangle around the origin.
\begin{proposition}[Zero-free region for perturbed stabilizer Hamiltonian] \label{prop:zero-free-perturbed-stabilizer}
    Given an $[[n,k,d]]$ stabilizer Hamiltonian (see \eqref{eq:stabilizer_hamiltonian}) of the form $H=\sum_{i=1}^{m} C_i$ which is $O(1)$ local (each check matrix $C_i$ has bounded weight), let $A$ be an $n$-qubit Pauli matrix of constant locality (constant weight) and operator norm $\|A\|=1$.
    Define the set of anticommuting checks 
    \begin{equation}
        S(A) := \{ i \in [m] : \{C_i, A\} = 0 \}, \, \text{ with }\, r := |S(A)|.
    \end{equation}
    Then, for any fixed $b_* >0$ such that $ b_* <\pi/2$, there exist constants $\delta_0 > 0$ and $R_0 > 0$ (depending only on $b_*$ and $r$ and independent of $n$) such that for all perturbations $|\delta| \le \delta_0$, the partition function
    \begin{equation}
        Z_{H+\delta A}(z) = \Tr\left(e^{z(H + \delta A)}\right)
    \end{equation}
    is non-zero for all $z$ in the set
    \begin{equation}
        \mathcal{D} = \left\{ z \in \mathbb{C} : |\mathrm{Im}(z)| \le b_* \right\} \cup \left\{ z \in \mathbb{C} : | \Re z| > R_0 \right\}.
    \end{equation}
\end{proposition}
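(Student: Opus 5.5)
The plan is to reduce $Z_{H+\delta A}$ to an explicit product of a $\cosh$-factor and a partition function on only $r$ effective qubits, and then to prove zero-freeness of the small factor by perturbation theory. I assume $\delta$ is real. If complex $\delta$ is intended, the large-$|\Re z|$ step below needs separate care, since the perturbed matrix is then non-normal.

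\textbf{Step 1 (Clifford reduction).} Since $C_1,\dots,C_m$ are independent and commuting, there is a Clifford unitary $V$ with $V C_i V^\dagger = Z_i$ (absorbing signs $(-1)^{b_i}$ if needed) for $i\le m$. The remaining $k$ qubits carry the logical degrees of freedom. Conjugation preserves the trace, so it suffices to study $\sum_i Z_i+\delta A'$ with $A' := VAV^\dagger$ a Pauli. The Pauli $A'$ anticommutes with $Z_i$ exactly for $i\in S(A)$. Hence $A' = \pm P\otimes Q$, where:
\begin{itemize}
\item $P$ is a Pauli on the $r$ qubits in $S(A)$ with an $X$ or $Y$ factor on each of them;
\item $Q$ is a Pauli on the remaining qubits, of $Z$-type (or identity) on the check qubits $j\notin S(A)$ and arbitrary on the logical qubits.
\end{itemize}

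\textbf{Step 2 (factorization).} All of $Q$, $Z_j$ ($j\notin S$), and $K:=\sum_{i\in S}Z_i$, $P$ act on complementary tensor factors, except that $Q$ and the $Z_j$ share qubits and commute. I will diagonalize $Q$ jointly with the $Z_j$, $j\notin S$, and sum over the joint eigenvalues $q=\pm1$ of $Q$. A short counting argument (on each fixed $Z_j$-sector, $Q$ has trace zero unless $Q$ is a product of $Z_j$'s, and that case is handled analogously) should give
\begin{equation*}
Z_{H+\delta A}(z)=c\,(2\cosh z)^{m-r}\sum_{q=\pm1} w_q(z)\,\Tr_{r}\!\left(e^{z(K+\delta q P)}\right),
\end{equation*}
where $c>0$ and $w_q$ are nonnegative weights, or simple $\cosh$-factors, independent of $\delta$. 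Some $Z_j$ in $P$'s complement anticommutes with $P$, so conjugating by a single $Z_{i_0}$, $i_0\in S$, fixes $K$ and sends $P\mapsto -P$. Hence both $q$-traces are equal, and
\begin{equation*}
Z_{H+\delta A}(z) = (\text{nonvanishing off } i\pi/2+i\pi\mathbb Z)\cdot g_\delta(z),\qquad g_\delta(z):=\Tr_r e^{z(K+\delta P)},
\end{equation*}
with $g_0(z)=(2\cosh z)^r$. The prefactor $(\cosh z)^{m-r}$ vanishes only on the imaginary axis at $|\Im z|\ge\pi/2>b_*$, so it is nonzero on $\mathcal D$. Everything now rests on the fixed $2^r$-dimensional function $g_\delta$, whose dimension is independent of $n$.

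\textbf{Step 3 (zero-freeness of $g_\delta$).} I split $\mathcal D$ into two regions.
\begin{itemize}
\item \emph{Large $|\Re z|$.} The spectrum of $K$ is $\{-r,-r+2,\dots,r\}$, with simple top eigenvalue $r$ and simple bottom eigenvalue $-r$, and gap $2$. By Weyl's inequality, for $|\delta|\le\delta_0<1/2$ the matrix $K+\delta P$ has a simple top eigenvalue $\lambda_+$ separated from the rest of the spectrum by at least $1$, and similarly a simple bottom eigenvalue $\lambda_-$. For $\Re z>0$,
\begin{equation*}
|g_\delta(z)e^{-z\lambda_+}-1|\le (2^r-1)e^{-\Re z}.
\end{equation*}
This is less than $1$ once $\Re z>R_0:=\log 2^r$, uniformly in $\Im z$. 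The case $\Re z<-R_0$ is symmetric, using $\lambda_-$.
\item \emph{The strip.} The set $\{|\Re z|\le R_0,\ |\Im z|\le b_*\}$ is compact. On it, $|g_0|=|2\cosh z|^r\ge (2\cos b_*)^r\cdot(\text{const})>0$, because $|\cosh(a+ib)|^2=\sinh^2a+\cos^2 b$. Moreover, $|g_\delta(z)-g_0(z)|\le 2^r\,\|e^{z(K+\delta P)}-e^{zK}\|$, which is at most $C(r,R_0,b_*)|\delta|$ by Duhamel's formula. Choosing $\delta_0$ small makes this smaller than $\min|g_0|$.
\end{itemize}
All constants depend only on $r$ and $b_*$.

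The main obstacle is Step 2: correctly bookkeeping the action of $Q$ on the logical and unaffected check qubits so that the trace genuinely factorizes with $\delta$-independent positive weights. This is where the constant weight of $A$, and hence the finiteness of $r$, enters.
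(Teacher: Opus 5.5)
Your proof is correct, and its skeleton matches the paper's. Both reach the exact factorization $Z_{H+\delta A}(z)=2^k(2\cosh z)^{m-r}\,g_\delta(z)$, where your $g_\delta$ is the paper's $S_r(z,\delta)$. Both then handle a compact rectangle by perturbing off $g_0=(2\cosh z)^r$, using $|g_0|\ge(2\cos b_*)^r$, and handle $|\Re z|>R_0$ by a dominant-exponential bound.

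The execution differs at each step:
\begin{itemize}
\item \emph{Paper.} It works with the eigenprojectors $P_s$ of $H$ grouped by $s_R$, and asserts $A\cong I_{2^k}\otimes K$ with $K$ flipping $s_S$. It then diagonalizes the $2\times 2$ blocks $\lambda Z+\delta X$ to get the closed form $S_r=\sum_s\cosh\bigl(z\sqrt{\lambda_S(s)^2+\delta^2}\bigr)$. From this it obtains an $O(\delta^2)$ perturbation bound, via a derivative in $u=\delta^2$, and an explicit frequency gap $\Delta$ for large $|\Re z|$.
\item \emph{Your route.} You use a Clifford normal form plus conjugation by $Z_{i_0}$, $i_0\in S(A)$. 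This disposes of the logical part $Q$ of $A$ transparently. The paper's identification $A\cong I_{2^k}\otimes K$ silently requires choosing bases of $P_{(s_R,s_S)}\mathcal H$ and $P_{(s_R,-s_S)}\mathcal H$ matched through $A$. You then apply Weyl's inequality, which avoids computing the spectrum and gives a clean $R_0=r\log 2$, uniform in $|\delta|<1/2$.
\item Your Duhamel bound is only $O(\delta)$ rather than $O(\delta^2)$, which is immaterial.
\end{itemize}
The paper's closed form buys more explicit information: evenness in $\delta$ and exact frequencies. Your Step 3 is shorter and uses only $\|P\|\le 1$.

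A few small corrections:
\begin{itemize}
\item The bookkeeping you flag as the main obstacle disappears. Once the two $q$-traces coincide, summing over the joint eigenvalues of $Q$ and $\{Z_j\}_{j\notin S(A)}$ simply reproduces $\Tr e^{z\sum_{j\notin S(A)}Z_j}=2^k(2\cosh z)^{m-r}$. So no weights need to be computed.
\item Your description of the weights is also off. Suppose $Q$ is trivial on the logical qubits and equals $Z$ on $t\ge 1$ check qubits. Then the $q$-sector weight is $2^{k-1}(2\cosh z)^{m-r-t}\bigl[(2\cosh z)^t+q(2\sinh z)^t\bigr]$, which is neither nonnegative nor a pure $\cosh$ factor.
\item The anticommuting $Z_{i_0}$ lies in the support of $P$, not its complement; your next clause states this correctly.
\item The symmetry argument needs $r\ge 1$. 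The paper's proof skips the case $r=0$ as well.
\end{itemize}
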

\begin{proof}
    Set $R:=[m]\setminus S(A)$, so $|R|=m-r$.
    The projectors onto eigenbases of $H$ can be decomposed into a set of orthogonal projectors $\{P_s\}_{s\in\{\pm1\}^m}$ indexed by bitstrings $s\in\{\pm1\}^m$ where $C_i P_s = s_i P_s$ and $\mathrm{dim}(P_s)=2^k$.
    Then in this basis, $H=\sum_s \lambda(s) P_s$ with $\lambda(s):=\sum_{i=1}^m s_i$.
    Let us write bitstrings as $s=(s_R, s_S)$ so that $s_R$ is the portion of the bitstring in the set $R$ and $s_S$ lies in $S(A)$. 
    Note that $AP_{(s_R, s_S)}=P_{(s_R, -s_S)}A$.
    Let $\mathcal H$ be a Hilbert space of dimension $2^k$.
    Fix $s_R\in\{\pm1\}^{|R|}$ and consider the vector space which is the direct sum:
    \begin{equation}
        \mathcal{H}(s_R) := \bigoplus_{s_S\in\{\pm1\}^{r}} P_{(s_R,s_S)}\mathcal H
         \cong \mathbb{C}^{2^k}\otimes\mathbb{C}^{2^r}.
    \end{equation}
    On $\mathcal{H}(s_R)$, we have that
    \begin{equation}
        H \cong \left(\sum_{i\in R} s_{R,i}\right) \Id_{2^k}\otimes \Id_{2^r} + \Id_{2^k}\otimes D,
    \end{equation}
    where $D=\mathrm{diag}(\{\lambda_S(s_S)\}_{s_S})$ with $\lambda_S(s_S):=\sum_{i\in S(A)} s_{S,i}$.
    Furthermore, on $\mathcal{H}(s_R)$, we have that
    \begin{equation}
        A \cong \Id_{2^k}\otimes K, \qquad K\ket{s_S} = \ket{-s_S}.
    \end{equation}
    In particular, for each pair $\{s_S,-s_S\}$ the restriction of $D+\delta K$ to the two dimensional subspace $\mathrm{span}\{\ket{s_S},\ket{-s_S}\}$ is the $2\times 2$ matrix $\lambda Z +\delta X$ with $\lambda=\lambda_S(s_S)$.
    It holds that
    \begin{equation}
        \begin{split}
            Z_{H+\delta A}(z) &= \sum_{s_R\in\{\pm1\}^{|R|}} \Tr_{\mathcal{H}(s_R)} \left( \exp\left[ z\left(\sum_{i\in R} s_{R,i}\right) \Id_{2^k}\otimes \Id_{2^r} + z\Id_{2^k}\otimes (D+\delta K) \right] \right) \\
            &= \sum_{s_R\in\{\pm1\}^{|R|}} e^{z\sum_{i\in R} s_{R,i}} \Tr\left( \exp\left[ z\Id_{2^k}\otimes (D+\delta K) \right] \right) \\
            &= 2^{|R|}(\cosh z)^{|R|} 2^k \Tr\left( \exp\left[ z (D+\delta K) \right] \right) .
        \end{split}
    \end{equation}
    $\cosh(z)$ has zeros only on the imaginary axis for $ib$ such that $|b|\geq \pi/2$. 
    We thus look at the zeros in the term $S_r(z, \delta):=\Tr\left( \exp\left[ z (D+\delta K) \right] \right)$.
    $D+\delta K$ is a block diagonal matrix of $2^{r-1}$ $2 \times 2$ blocks spanned by $\mathrm{span}\{\ket{s_S},\ket{-s_S}\}$.
    Therefore:
    \begin{equation}
        S_r(z, \delta) := \Tr\left( \exp\left[ z (D+\delta K) \right] \right) =  \sum_{s_S\in\{\pm\}^{r}} \cosh\left( z\sqrt{\lambda_S(s_S)^2+\delta^2 } \right).
    \end{equation}
    To identify the zeros of $S_r(z, \delta)$ we will split the analysis into two regions: (1) fixing an arbitrary $a_0 > 0$, we consider the rectangle $\mathcal{R} = \{ z : | \Re z| \le a_0, |\mathrm{Im}(z)| \le b_* \}$; (2) we consider the region where $\{ z: | \Re z| \ge a_0\}$.

    For the first region, we will use Rouché's Theorem.
    At $\delta=0$, the function $S_r$ simplifies to $S_r(z, 0) = (2\cosh z)^r$. On $\mathcal{R}$, this is bounded away from zero since $\min_{z \in \mathcal{R}} |S_r(z, 0)| \ge (2 \cos b_*)^r > 0$.
    To bound the perturbation, define the function $f(u) = \cosh(z\sqrt{\lambda^2 + u})$ for $u \in \mathbb{C}$. 
    The derivative with respect to $u$ is $f'(u) = \frac{z}{2} \frac{\sinh(z\sqrt{\lambda^2+u})}{\sqrt{\lambda^2+u}}$. 
    This function is holomorphic everywhere (the singularity at $\sqrt{\lambda^2+u}=0$ is removable). 
    Since $\mathcal{R}$ is compact, there exists a constant $K$ such that for sufficiently small $|u| \leq \delta_0^2$, $|f'(u)| \le K$.
    Summing over all $2^r$ strings $s_S$ and choosing $0 < \delta < \delta_0$, this gives
    \begin{equation}
        |S_r(z, \delta) - S_r(z, 0)| \le \sum_{s} \left| \cosh(z\sqrt{\lambda_S(s)^2+\delta^2}) - \cosh(z|\lambda_S(s)|) \right| \le 2^r K \delta^2.
    \end{equation}
    Choose $\delta_0$ such that $2^r K \delta_0^2 < (2 \cos b_*)^r$. 
    By Rouché's theorem, $S_r(z, \delta)$ has the same number of zeros as $S_r(z, 0)$ inside $\mathcal{R}$. 
    Since $(2\cosh z)^r$ has no zeros in $\mathcal{R}$, $S_r(z, \delta)$ is non-vanishing in $\mathcal{R}$.

    We now consider the second region of large $|\Re z|>a_0$. We will set $a_0$ large enough but still independent of $n$ so that $S_r(z, \delta)$ is nonzero in this region. 
    For ease of notation, define $\omega_s(\delta) := \sqrt{\lambda_S(s)^2 + \delta^2}$. 
    The function $S_r(z, \delta)$ is then equal to
    \begin{equation}
        S_r(z, \delta) = \sum_{s \in \{\pm 1\}^r} \frac{1}{2} \left( e^{z \omega_s(\delta)} + e^{-z \omega_s(\delta)} \right).
    \end{equation}
    Let $\omega_{\max} := \max_s \omega_s(\delta) = \sqrt{r^2 + \delta^2}$. 
    We factor out the dominant term $e^{\Re(z) \omega_{\max}}$, obtaining
    \begin{equation}
        S_r(z, \delta) = \frac12 e^{z \omega_{\max}} \lr{N_{\max} + \sum_{\omega_s < \omega_{\max}} e^{-z(\omega_{\max} - \omega_s)}} + \frac{1}{2}\sum_s e^{-z\omega_s},
    \end{equation}
    where $N_{\max} \ge 1$ is the degeneracy of the maximal term.
    Let $\Delta$ denote the difference between $\omega_\mathrm{max}$ and the second-largest $\omega_s$; this is a positive constant depending only on $r$ and $\delta_0$. For now, assume $\Re(z) > 0$. Since $|e^{-z(\omega_{\max}-\omega_s)}| \leq e^{-\Re(z)\Delta}$ and $|e^{-z\omega_s}| \leq e^{-\Re(z)|\delta|}$, we have
    \begin{align}
        |S_r(z, \delta)| &\geq \frac12 e^{\Re(z) \omega_{\max}} \lr{ N_{\max} - \sum_{\omega_s < \omega_\mathrm{max}} e^{-\Re(z)(\omega_{\max} - \omega_j)}} - \frac{1}{2}\sum_s e^{-\Re(z)\omega_s} \\
        &\geq \frac12 e^{\Re(z) \omega_{\max}} \lr{ N_{\max} - 2^r e^{-\Re(z)\Delta}} - 2^r e^{-\Re(z)|\delta|}.
    \end{align}
    Since $\Delta$ and $|\delta|$ are constant, there exists $R_0 > 0$ such that for all $\Re(z) > R_0$,
    \begin{equation}
        |S_r(z,\delta)| \geq \frac{N_\mathrm{max}}{8}e^{\Re(z)\omega_\mathrm{max}} > 0,
    \end{equation}
    i.e., that $S_r(z,\delta)$ is nonzero.
    By symmetry $z \to -z$, the same holds for $\Re(z) < -R_0$. (Note that if all $\omega_s$ are equal, then one can set $R_0=0$.)
    Set $a_0>R_0$ to conclude.
\end{proof}

\begin{corollary}[Local observables of stabilizer Hamiltonian]
    Let $\rho_\beta$ be the Gibbs state of a stabilizer Hamiltonian $H$ at inverse temperature $\beta>0$. 
    Let $A$ be a local observable with bounded norm $\|A\|=O(1)$.
    \Cref{prop:zero-free-perturbed-stabilizer} guarantees that the zeros of the perturbed model $H+\delta A$ are contained in a wedge which can be avoided using the mapping in \Cref{lem:wedge-avoiding-mapping}.
    Therefore, following \Cref{lem:QMV_from_QPF}, $\Tr[\rho_\beta A]$ can be evaluated to additive accuracy $\epsilon$ with runtime $n^{O(\mathrm{poly}(\log(1/\epsilon),\log(n), \beta))}$.
\end{corollary}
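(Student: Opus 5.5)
The corollary follows by combining \Cref{prop:zero-free-perturbed-stabilizer} with \Cref{lem:wedge-avoiding-mapping}, \Cref{thm:disk-barvinok} and \Cref{lem:QMV_from_QPF}.

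\textbf{Reduction to Paulis.} First reduce to the case where $A$ is a single constant-weight Pauli. A local observable with $\|A\|=O(1)$ expands into $O(1)$ Pauli strings with bounded coefficients, by orthonormality of the Pauli basis. Estimating each $\Tr[\rho_\beta P]$ to accuracy $\epsilon/O(1)$ and summing therefore suffices.

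\textbf{Zero-free region for the perturbed partition function.} For a single Pauli $P$, \Cref{lem:QMV_from_QPF} requires solving QPF for $Z_H(\beta)$ and $Z_{H+\delta P}(\beta)$ to precision $O(\delta^2)$, with $\delta=\Theta(\epsilon)$. Since $\epsilon$ is eventually small, we may take $|\delta|\le\delta_0$. \Cref{prop:zero-free-perturbed-stabilizer} then gives a zero-free set containing the strip $|\Im z|\le b_*$ and the half-planes $|\Re z|>R_0$, with $b_*,\delta_0,R_0$ independent of $n$.

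The main obstacle is that this region is not literally the complement of two wedges at $\pm i\rho$. Any zeros lie in the two vertical half-strips $\{|\Re z|\le R_0,\ |\Im z|> b_*\}$.

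\textbf{Fitting the region to the wedge map.} The plan is to choose a wedge half-angle $\delta_\theta\in(0,\pi/2)$ and an apex height $\rho_0$ such that $W^{(\pm)}_{\rho_0,\delta_\theta}$, scaled by $\beta$, contains these half-strips.
\begin{itemize}
\item A wedge with apex $i\rho_0$ and half-angle $\delta_\theta$ contains $\{|\Re z|\le R_0,\ \Im z\ge \rho_0+R_0\cot\delta_\theta\}$.
\item So take $\rho_0<b_*$, and fill the finite remaining rectangle $|\Re z|\le R_0$, $b_*<\Im z<\rho_0+R_0\cot\delta_\theta$ by a slight modification.
\item The simplest modification is to replace the apex by one at height $\rho_0 = b_*/2$ and choose $\delta_\theta$ close to $\pi/2$, namely $\cot\delta_\theta \le b_*/(2R_0)$. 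Then every point with $|\Re z|\le R_0$ and $\Im z\ge b_*$ lies in the wedge. The cone condition $|\Re z|\le (\Im z-\rho_0)\tan\delta_\theta$ holds because $(\Im z - b_*/2)\tan\delta_\theta\ge (b_*/2)(2R_0/b_*)=R_0$.
\end{itemize}
Thus all zeros are covered with constant parameters, at the cost of a small exponent $p=1-2\delta_\theta/\pi$ depending only on $b_*,R_0$, hence on $r$.

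\textbf{Applying the wedge map.} In the variable $z$ with physical temperature $\beta z$, the rescaled apex is $\rho=\rho_0/\beta$. \Cref{lem:wedge-avoiding-mapping} gives $\phi$ holomorphic on $\mathbb D_R$ with $\phi(0)=0$, $\phi(1)=1$, image avoiding the zeros of $z\mapsto Z_{H+\delta P}(\beta z)$, and
\[
R=1+\Theta\bigl((\rho_0/\beta)^{1/p}\bigr),\qquad |\phi|\le 2^p(1+o(1)) \ \text{on}\ \mathbb D_{(1+R)/2}.
\]
Apply the argument of \Cref{thm:disk-barvinok} to $f(z)=\log Z_{H+\delta P}(\beta\phi(z))$:
\begin{itemize}
\item Borel--Carath\'eodory bounds $\sup|f|$ by $O\bigl((n+\beta\|H\|)/(R-1)\bigr)$ on an intermediate disk, which is $\mathrm{poly}(n,\beta)$.
\item \Cref{lem:disk-analytic} then gives truncation order $K=O\bigl(\beta^{1/p}\log(n\beta/\epsilon)\bigr)$.
\end{itemize}

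\textbf{Computing the coefficients and total cost.} The Taylor coefficients of $f$ are computed from the moments $\Tr((H+\delta P)^j)$ for $j\le K$ composed with the polynomial-truncated series of $\phi$. Each moment costs $m^{j}\mathrm{poly}(n)$ since all terms are Pauli monomials. The total cost is $n^{O(K)}=n^{O(\mathrm{poly}(\beta,\log n,\log(1/\epsilon)))}$. Combining with the $\epsilon\to\delta^2$ loss of \Cref{lem:QMV_from_QPF} changes only constants in the logarithm, which gives the stated runtime.
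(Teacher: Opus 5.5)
Your proposal is correct and follows the same route as the paper: zero-freeness from \Cref{prop:zero-free-perturbed-stabilizer}, avoidance of the zeros via the wedge map of \Cref{lem:wedge-avoiding-mapping}, Barvinok truncation as in \Cref{thm:disk-barvinok}, and the finite-difference step of \Cref{lem:QMV_from_QPF}. You also make explicit two steps the paper leaves implicit. First, you reduce a general local $A$ to $O(1)$ Pauli strings, which is needed because the proposition is stated only for Paulis. Second, you choose apex $b_*/2$ and $\tan\delta_\theta\ge 2R_0/b_*$, which shows the proposition's zero set $\{|\Re z|\le R_0,\ |\Im z|>b_*\}$ actually lies inside the wedges. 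Like the paper, you rely on $r=|S(A)|=O(1)$ so that $p$, and hence the $\beta^{1/p}$ factor in $K$, is a constant.
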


\subsection{Entangled phase}

Since Barvinok's interpolation can estimate the partition function at any inverse temperature $\beta$, an important question is at what threshold $\beta_c$ does the Gibbs state of a stabilizer code Hamiltonian become entangled (i.e., not separable). 
A separable state is a mixed state $\rho$ on $n$ qubits which can be written as a convex combination $\rho = \sum_i^{N_{sep}} p_i \sigma_i$ of tensor products of single qubit states.
Namely, $\rho = \sum_{i=1}^{N_{sep}} p_i \sigma_i$ is a decomposition of $\rho$ into finitely many product states $\sigma_i=\sigma_i^{(1)} \otimes \cdots \sigma_i^{(n)}$ and $p \in \mathbb{R}_+^{N_{sep}}$ is a distribution ($\sum_i p_i=1$) that denotes the weights in the mixture.

Refs. \cite{anshu2020circuit,anshu2023nlts,bravyi2025much} quantify the entanglement and circuit depth of low energy states in terms of the code parameters $n, k, d$. 
In our Gibbs state setting, we will show that there is a temperature threshold for the separable states for any $n$ qubit stabilizer code below.
In particular, our aim is to obtain simple bounds for the separability threshold $\beta_{\rm sep}$. 
We remark that potentially stronger and more explicit results (e.g., in terms of growing circuit depth or Von Neumann entropies) can be obtained by leveraging the findings in \cite{anshu2020circuit,anshu2023nlts,bravyi2025much}.

We first prove a general result bounding product state performance via weights of the code and then instantiate it for different models.
\begin{theorem}[Separability threshold of stabilizer Hamiltonians]\label{thm:stabilizer_separbility}
    Denote the set of $n$ qubit separable states as $\mathcal{S}^{\rm sep}_n$. Given an $[[n,k,d]]$ stabilizer Hamiltonian (see \eqref{eq:stabilizer_hamiltonian}) of the form $H=\sum_{i=1}^{m} C_i$ where check matrix $C_i$ has weight $w_i$, define $n$ vectors $v_1, \dots, v_n \in \mathbb{R}_+^3$ whose entries are indexed by the three single-qubit Paulis $X,Y,Z$ and are equal to
    \begin{equation}
        [v_a]_P = \sum_{i:[C_i]_a=P} \frac{1}{w_i}, \quad a \in [n], P \in \{X,Y,Z\}.
    \end{equation}
    Then the maximum energy achievable by separable states is bounded by
    \begin{equation}
        \sup_{\rho \in \mathcal{S}^{\rm sep}_n} \Tr[H\rho] \le \sum_{a=1}^n \|v_a\|.
    \end{equation}
\end{theorem}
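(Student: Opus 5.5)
By convexity and linearity of $\rho\mapsto\Tr[H\rho]$, the supremum over $\mathcal{S}^{\rm sep}_n$ is attained on pure product states, since every separable state is a convex combination of product states. So it suffices to bound $\Tr[H\sigma]$ for $\sigma=\sigma^{(1)}\otimes\cdots\otimes\sigma^{(n)}$ with each $\sigma^{(a)}$ a single-qubit state with Bloch vector $r_a\in\mathbb{R}^3$, $\|r_a\|\le 1$. For a check $C_i=(-1)^{b_i}\bigotimes_a P^{(i)}_a$ with support $\supp C_i$ of size $w_i$, the expectation factorizes:
\begin{equation}
    \Tr[C_i\sigma]=(-1)^{b_i}\prod_{a\in\supp C_i}\langle P^{(i)}_a\rangle_{a},\qquad \langle P\rangle_a:=[r_a]_P\in[-1,1].
\end{equation}

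The key step is to bound each product by an average using AM--GM on absolute values:
\begin{equation}
    \Tr[C_i\sigma]\le\prod_{a\in\supp C_i}|[r_a]_{P^{(i)}_a}|\le\frac{1}{w_i}\sum_{a\in\supp C_i}|[r_a]_{P^{(i)}_a}|.
\end{equation}
This is valid because the geometric mean of $w_i$ numbers in $[0,1]$ is at most their arithmetic mean. Summing over $i$ and exchanging the order of summation, we group by site $a$ and by the Pauli $P\in\{X,Y,Z\}$ that $C_i$ applies there:
\begin{equation}
    \Tr[H\sigma]\le\sum_{a=1}^n\sum_{P\in\{X,Y,Z\}}\Bigl(\sum_{i:[C_i]_a=P}\frac{1}{w_i}\Bigr)|[r_a]_P|=\sum_{a=1}^n\langle v_a,|r_a|\rangle,
\end{equation}
where $|r_a|$ denotes the entrywise absolute value.

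Finally, Cauchy--Schwarz together with $\||r_a|\|=\|r_a\|\le1$ gives $\langle v_a,|r_a|\rangle\le\|v_a\|$, which yields the claimed bound. Taking the supremum over product states, and hence over separable states, finishes the argument.

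There is no serious obstacle here. The only subtle point is the AM--GM step, which needs absolute values to handle the sign $(-1)^{b_i}$ and negative Bloch components. One might alternatively try bounding the product by a single factor, but AM--GM is what produces the $1/w_i$ weights appearing in the definition of $v_a$.
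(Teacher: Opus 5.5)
Your proposal is correct and follows the paper's proof essentially step for step. Both arguments reduce to product states by convexity, bound each check's expectation by the product of absolute Bloch components, and use AM--GM (with the factors in $[0,1]$, so the product is at most the arithmetic mean) to produce the $1/w_i$ weights; both then regroup by site and Pauli and finish with Cauchy--Schwarz, which is exactly the paper's per-qubit maximization at $r^{(a)}=v_a/\|v_a\|$.
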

\begin{proof}
    Any single qubit state can be written in its Bloch sphere representation as $\frac{1}{2}I + \frac{1}{2}(r_XX+r_YY+r_ZZ)$ where $r_X^2+r_Y^2+r_Z^2\le1$. 
    For any product state $\rho=\sigma_i^{(1)} \otimes \cdots \sigma_i^{(n)}$, denote the Bloch sphere representation of qubit $i$ with the vector $r^{(i)}=(r_X^{(i)},r_Y^{(i)},r_Z^{(i)})$
    Upper bounding $\Tr[H\rho]$ over all product states gives a bound on $\sup_{\rho \in \mathcal{S}^{\rm sep}_n} \Tr[H\rho]$ by convexity since any separable state is a convex mixture of product states.

    Write each check matrix $C_i=(-1)^{b_i} P_1^{(i)} \otimes P_2^{(i)} \otimes \cdots \otimes P_n^{(i)}$ explicitly as a tensor product of its $n$ single qubit Pauli matrices.
    Note that $\Tr[C_i \rho] = \prod_{j:[n], P_j^{(i)} \in \{X,Y,Z\}}\, r_{P_j^{(i)}}^{(j)}$. 
    Summing over all $C_i$, the maximum energy of a product state can be determined by solving the following optimization:
    \begin{equation}\label{eq:optimization_stabilizer_product}
    \begin{split}
        \max_{r^{(1)}, \dots r^{(n)}}& \; \sum_{i=1}^m\, \prod_{j:[n], P_j^{(i)} \in \{X,Y,Z\}}\, \left|r_{P_j^{(i)}}^{(j)}\right|,   \\
        \text{s.t. }& \; (r_X^{(j)})^2+(r_Y^{(j)})^2+(r_Z^{(j)})^2\le1 \quad \forall j \in [n].
    \end{split}
    \end{equation}
    Use the AM-GM inequality to obtain
    \begin{equation}
        \prod_{j:[n], P_j^{(i)} \in \{X,Y,Z\}}\, \left|r_{P_j^{(i)}}^{(j)}\right| \leq \left( \frac{1}{w_i}\sum_{j:[n], P_j^{(i)} \in \{X,Y,Z\}}\, \left|r_{P_j^{(i)}}^{(j)}\right| \right)^{w_i} \leq \frac{1}{w_i}\sum_{j:[n], P_j^{(i)} \in \{X,Y,Z\}}\, \left|r_{P_j^{(i)}}^{(j)}\right|.
    \end{equation}
    In the last inequality above we use the fact that the argument inside the parentheses is bounded in magnitude by $1$.
    Apply the equation above to upper bound \eqref{eq:optimization_stabilizer_product}, and with some reordering, we get
    \begin{equation}
    \begin{split}
        \max_{r^{(1)}, \dots r^{(n)}}& \; \sum_{a=1}^n\,\sum_{P \in \{X,Y,Z\}}\, \sum_{i:[C_i]_a=P} \frac{\left| r_{P}^{(a)} \right|}{w_i},   \\
        \text{s.t. }& \; (r_X^{(j)})^2+(r_Y^{(j)})^2+(r_Z^{(j)})^2\le1 \quad \forall j \in [n].
    \end{split}
    \end{equation}
    In the optimization above, we can optimize each qubit separately.
    Namely, for qubit $a$, we optimize $\sum_{P \in \{X,Y,Z\}} \left|r_{P}^{(a)}\right| (\sum_{i:[C_i]_a=P} \frac{1}{w_i} )$ subject to $(r_X^{(a)})^2+(r_Y^{(a)})^2+(r_Z^{(a)})^2\le1$. 
    This is maximized by setting the vector $r^{(a)}=v_a/\|v_a\|$ where as a reminder $[v_a]_P = \sum_{i:[C_i]_a=P} \frac{1}{w_i}$.
    Summing over all qubits $a \in [n]$, this provides an upper bound of $\sum_{a=1}^n \|v_a\|$ to the maximum energy over product states completing the proof.
\end{proof}

\begin{corollary}[Toric code]
    In the $2$-dimensional toric code, stabilizers are all weight $4$ and each qubit has two Pauli $X$ operators and two Pauli $Z$ operators applied to it.
    Applying \Cref{thm:stabilizer_separbility}, $v_a=(1/2,0,1/2)$ for each qubit $a \in [n]$, separable states achieve at most an energy $n/\sqrt{2}$.
    Since the Gibbs state $\rho_\beta$ has energy $m\tanh(\beta)$, the Gibbs state is entangled at all $\beta > \tanh^{-1}(1/\sqrt{2}) + o(1)\approx 0.8813$. 
\end{corollary}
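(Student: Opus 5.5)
The plan is to apply \Cref{thm:stabilizer_separbility} directly and then compare with the Gibbs energy. For the 2D toric code on an $L\times L$ torus there are $n=2L^2$ qubits (one per edge). The checks are the $L^2$ vertex operators $A_v=\prod X$ and the $L^2$ plaquette operators $B_p=\prod Z$, each of weight $w_i=4$. Every edge touches exactly two vertices and borders exactly two plaquettes. Hence each qubit $a$ appears in two $X$-type checks and two $Z$-type checks, and no $Y$-type checks. This gives
\begin{equation}
[v_a]_X = \tfrac14+\tfrac14=\tfrac12,\qquad [v_a]_Y=0,\qquad [v_a]_Z=\tfrac12,
\end{equation}
so $\|v_a\|=1/\sqrt2$ and $\sup_{\rho\in\mathcal S^{\rm sep}_n}\Tr[H\rho]\le n/\sqrt2$.

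The second step compares this with the thermal energy $\Tr[H\rho_\beta]=m\tanh\beta$ computed above (with the paper's convention $\rho_\beta\propto e^{\beta H}$). Here I must be careful with $m$. The $2L^2$ check operators satisfy two relations, $\prod_v A_v=\prod_p B_p=I$, so only $m=n-2$ of them are independent. That matches the $[[2L^2,2,L]]$ parameters.

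This is the one subtle step. The formula $m\tanh\beta$ was derived for $H$ a sum of $m$ independent generators, while the physical toric code Hamiltonian sums all $2L^2$ checks. I will handle it as follows.
\begin{itemize}
\item If $H$ uses $m=n-2$ independent generators, then $\Tr[H\rho_\beta]=(n-2)\tanh\beta$. The separability bound still applies, since dropping two checks only decreases the $v_a$ entries and hence lowers the bound.
\item If $H$ uses all $2L^2$ checks, the energy differs from $n\tanh\beta$ by a correction that is exponentially small in $n$ at fixed $\beta$. That correction is absorbed into the $o(1)$.
\end{itemize}
In either case $\Tr[H\rho_\beta]=n\tanh\beta-O(1)$.

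Entanglement follows once $n\tanh\beta - O(1) > n/\sqrt2$, that is, once $\tanh\beta>1/\sqrt2+O(1/n)$. This is exactly $\beta>\tanh^{-1}(1/\sqrt2)+o(1)$. Indeed, if $\rho_\beta$ were separable it would satisfy $\Tr[H\rho_\beta]\le n/\sqrt2$, a contradiction. Numerically, $\tanh^{-1}(1/\sqrt2)=\tfrac12\log\frac{1+1/\sqrt2}{1-1/\sqrt2}=\log(1+\sqrt2)\approx0.8814$, matching the stated value.
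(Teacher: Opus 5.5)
Your proposal is correct and takes the same route as the paper. Each edge lies in two $X$-type and two $Z$-type weight-$4$ checks, which gives $v_a=(1/2,0,1/2)$; \Cref{thm:stabilizer_separbility} then bounds separable energies by $n/\sqrt{2}$, and you compare this with $m\tanh\beta$. The paper absorbs the difference between $m=n-2$ independent generators and the full set of $2L^2$ checks into its $o(1)$ term; your separate treatment of both cases, including the exponentially small energy correction for the full set, just makes that step explicit.
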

\begin{corollary}[Any code with $d\ge 2$]
    Given an $[[n,k,d]]$ stabilizer code with check matrices $C_1, \dots, C_{m}$, let $\langle C_1, \dots, C_m \rangle$ be the stabilizer group of the code generated by matrix products from the check matrices. 
    Assume generators have maximum weight $w_{\rm max}$ so the Hamiltonian $H=\sum_{i=1}^m C_i$ is $w_{\rm max}$-local.
    If the weight of any nontrivial Pauli matrix in the stabilizer group is at least $2$ and distance $d \geq 2$, then it holds that 
    \begin{equation}
        \sup_{\rho \in \mathcal{S}^{\rm sep}_n} \Tr[H\rho] \le m w^*, \qquad w^*=1 - \frac{2-\sqrt{2}}{w_{\rm max}}\left(1+\frac{k}{m}\right).
    \end{equation}
    Therefore, the Gibbs state $\rho_\beta$ is entangled at all $\beta > \tanh^{-1}(w^*)$.
\end{corollary}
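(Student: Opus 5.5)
The plan is to apply \Cref{thm:stabilizer_separbility} and show that $\sum_{a=1}^n \|v_a\|\le m w^*$. The proof has three steps: compute an exact $\ell_1$ budget, gain a fixed amount on every qubit when passing from $\ell_1$ to $\ell_2$, and then compare with the Gibbs energy. Throughout, $n=m+k$ because the $m$ checks are independent generators.

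\emph{Step 1 (the $\ell_1$ budget).} The entries of each $v_a$ are nonnegative. Hence
\begin{equation}
\sum_a \|v_a\|_1=\sum_a\sum_{i:\,[C_i]_a\neq I}\frac{1}{w_i}=\sum_{i=1}^m \frac{w_i}{w_i}=m .
\end{equation}
Using only $\|v_a\|\le\|v_a\|_1$ gives the trivial bound $m$. The improvement must therefore come from qubits on which $v_a$ has at least two nonzero entries.

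\emph{Step 2 (every qubit sees at least two distinct Pauli types).} Fix a qubit $a$ and suppose that all generators acting nontrivially on $a$ do so with a single Pauli $P$, or that no generator acts on $a$ at all. Then the weight-one operator $P_a$ (in the second case, any single-qubit Pauli) commutes with every $C_i$, so it lies in the normalizer of the stabilizer group. Since $d\ge2$, a weight-one element of the normalizer must belong to the stabilizer group up to phase. This contradicts the hypothesis that every nontrivial element of the stabilizer group has weight at least $2$. Consequently each $v_a$ has at least two nonzero entries, and each nonzero entry is at least $1/w_{\max}$.

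\emph{Step 3 (the $\ell_2$ versus $\ell_1$ gain).} This is the main calculation.

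For $x\ge y>0$, the function $x\mapsto x+y-\sqrt{x^2+y^2}$ is increasing in $x$, with derivative $1-x/\sqrt{x^2+y^2}>0$. At $x=y$ it equals $(2-\sqrt2)y$. Therefore $\sqrt{x^2+y^2}\le x+y-(2-\sqrt2)\min(x,y)$.

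For $v_a$ with a third entry $z\ge0$, apply the triangle inequality $\|(x,y,z)\|\le\|(x,y)\|+z$, taking $x,y$ to be two of its nonzero entries. This gives
\begin{equation}
\|v_a\|\le\|v_a\|_1-\frac{2-\sqrt2}{w_{\max}}.
\end{equation}

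Summing over all $n=m+k$ qubits and using Step 1,
\begin{equation}
\sum_a\|v_a\|\le m-(m+k)\frac{2-\sqrt2}{w_{\max}}=m w^* .
\end{equation}
Together with \Cref{thm:stabilizer_separbility}, this is the claimed bound.

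\emph{Step 4 (entanglement of the Gibbs state).} For the Gibbs state $\rho_\beta\propto e^{\beta H}$ we have $\Tr[H\rho_\beta]=m\tanh\beta$. Whenever $\beta>\tanh^{-1}(w^*)$, this energy exceeds $m w^*$, which is the supremum over separable states. Hence $\rho_\beta$ cannot be separable in that range.

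The step needing the most care is Step 2. One must check that the distance hypothesis (a weight-one operator commuting with all checks is not a nontrivial logical) and the stabilizer-weight hypothesis (it is not a stabilizer either) jointly exclude it from the normalizer, and that this also covers qubits on which no generator acts.
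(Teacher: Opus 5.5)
Your proposal is correct and follows essentially the same route as the paper. Both arguments use the same three ingredients: the $\ell_1$ budget $\sum_a\|v_a\|_1=m$; the weight-one argument showing every qubit carries two distinct Pauli types, each with entry at least $1/w_{\max}$; and the per-qubit saving of $(2-\sqrt2)/w_{\max}$, summed over $n=m+k$ qubits.

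The differences are cosmetic. You absorb the third entry with the triangle inequality $\|(x,y,z)\|\le\|(x,y)\|+z$, whereas the paper uses $\sqrt{x^2+y^2+z^2}\le\sqrt{(x+z)^2+y^2}$ before applying the same two-variable estimate. Your Step 2 also states explicitly the use of the stabilizer-weight hypothesis, which the paper leaves implicit.
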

\begin{proof}
    Since $d \ge 2$ and no single qubit Pauli matrices are in the stabilizer group, there exist two generators $C_a,C_b$ whose single-qubit Paulis on site $j$ anticommute (e.g. $X$ and $Z$ on $j$) for all sites $j \in [n]$. In fact, we can show this by looking at two cases:
    (1) If every generator acts as the identity on qubit $j$, then every single-qubit $P_j$ commutes with all generators and is a weight-$1$ logical operator, contradicting $d \geq 2$. 
    (2) Suppose every generator acts as either the identity or some particular Pauli $Q_j$; then $Q_j$ commutes with every generator and is again a weight-$1$ logical operator. 
    We conclude that at least two distinct non-identity Paulis (which anticommute) must exist.
    
    Thus, for vectors $v_a$ as defined in \Cref{thm:stabilizer_separbility}, each $v_a$ must have at least two entries greater than $1/w_{\rm max}$. 
    For any nonnegative set of three numbers $(x,y,z)$ with $y$ the second largest coordinate, the inequality
    \begin{equation}
        \sqrt{x^2+y^2+z^2} \le \sqrt{(x+z)^2+y^2} \le (x+z)+(\sqrt{2}-1)y
    \end{equation}
    holds. 
    The second step is the bound $\sqrt{u^2+v^2}\le u+(\sqrt{2}-1)v$ for $u\ge v\ge 0$. 
    Applying this to $v_a$ and summing over $a$ gives
    \begin{equation}
        \sum_{a=1}^n \|v_a\|
        \le
        \sum_{a=1}^n \sum_{P}[v_a]_P - (2-\sqrt{2})\sum_{a=1}^n y_a
        \le
        m-\frac{2-\sqrt{2}}{w_{\max}} n,
    \end{equation}
    because $\sum_{a,P}[v_a]_P = m$ and $y_a\ge 1/w_{\max}$ for all $a$. Applying \Cref{thm:stabilizer_separbility} then gives the stated bound on $\sup_{\rho\in\mathcal{S}^{\rm sep}_n}\Tr(H\rho)$. 
    The entanglement threshold follows from $\Tr(H\rho_\beta)=m\tanh\beta$.
\end{proof}

\section{SYK model}
\label{app:syk}

\begin{definition}[SYK model]
    Given $N$ Majorana fermions, i.e., operators $\psi_i$ satisfying 
    \begin{align}
        \{\psi_i,\psi_j\} = \delta_{ij}, \qquad \psi_i = \psi_i^\dagger,
    \end{align}
    the $q$-local SYK Hamiltonian ensemble is given by
    \begin{align}
        H = i^{q/2} \sum_{1 \leq i_1 < \cdots < i_q \leq N} J_{i_1\cdots i_q}\psi_{i_1} \cdots \psi_{i_q}
    \end{align}
    for i.i.d. coefficients
    \begin{align}
        J_{i_1\cdots i_q} \sim \cN\lr{0, \frac{(q-1)! J^2}{N^{q-1}}}.
    \end{align}
    We will also commonly use $\cJ^2 = qJ^2/2^{q-1}$.
\end{definition}

Note that this is normalized such that with probability $1-o(1)$, an SYK Hamiltonian satisfies $\norm{H} = \Theta(N)$. In the standard SYK analysis \cite{maldacena2016remarks,khramtsov2021spectral}, the partition function is computed via the path integral
\begin{align}\label{eq:sykeuclid}
    Z = \int D\psi_i \exp[-\int_0^\beta d\tau\lr{\frac{1}{2}\sum_j \psi_j \partial_\tau \psi_j + i^{q/2} \sum_{1 \leq i_1 < \cdots < i_q \leq N} J_{i_1\cdots i_q}\psi_{i_1} \cdots \psi_{i_q}}].
\end{align}
In particular, the quenched free energy $\EE{\log Z}$ can be computed via the replica trick at constant $\beta$ by computing $\EE{Z^n}$ for integer $n$ via the saddle point method in the $N\to\infty$ limit. The saddle point equations can be solved in the large-$q$ limit under the replica symmetric ansatz, which is known to hold due to replicon stability and strong numerical agreement with exact diagonalization at finite $N$. Rigorously, the annealed and quenched free energies are known to be asymptotically equal at constant $\beta$, which is consistent with the assumption of replica symmetry.

In this work, we use a similar path integral computation to evaluate complex $\beta$ and control the complex zeros of the partition function. However, rather than computing the Euclidean path integral of \eqref{eq:sykeuclid}, we require control over complex $\beta$. In particular, we will use Jensen's formula in a similar manner to recent work showing the application of Barvinok's formula for the Sherrington-Kirkpatrick model~\cite{bencs2025zeros}. To show zero-freeness, it will suffice to compute the annealed expectation
\begin{align}
    \E \left|Z(\beta)\right|^2,
\end{align}
which we do by the saddle point method. We obtain the following result from the path integral computation, which states that the SYK model is zero-free for all $|\beta|=O(1)$ satisfying
\begin{align}
    \Re \beta \neq 0 \quad \text{or} \quad |\Im \beta| < \frac{1.3}{\cJ}.
\end{align}
\begin{result}[Zero-free region of the SYK model] \label{res:SYK_zero_free_region}
    Define the zero-free region 
    \begin{equation}
        S_R:=\mathbb{D}_R \setminus  \left\{a+ib\, :\, a=o(1),|b|\geq \frac{1.3}{\cJ}+o(1)\right\} \subset \mathbb{C}.
    \end{equation}
    For any $R=O(1)$, with probability $1-o(1)$,  $Z(\beta) \neq 0$ for all $\beta \in S_R$.
\end{result}

The zero-free region above guarantees a quasi-polynomial runtime for Barvinok's interpolation when evaluating the partition function of the SYK model at any $\beta=O(1)$. 
Furthermore, due to equivalences between estimating the quantum mean value (QMV, \Cref{problem:QMV}) and the quantum partition function (QPF, \Cref{problem:QPF}) detailed earlier, local observables and the energy of the Hamiltonian $H$ with respect to the Gibbs state $\rho_\beta$ can also be estimated efficiently in quasi-polynomial time up to any additive error $\epsilon=O(1/\mathrm{poly}(N))$. 

\begin{theorem}[Barvinok's interpolation for SYK] \label{thm:Barvinok_SYK}
    Let $H$ denote the Hamiltonian over $N$ Majoranas drawn from the SYK ensemble and fix $\beta > 0$. 
    Assuming \Cref{res:SYK_zero_free_region} rigorously holds, Barvinok's interpolation solves QPF with runtime $N^{O(\log(N/\epsilon)\mathrm{poly}(\beta))}$ returning $\widetilde{Z}$ such that $(1-\epsilon)Z_H(\beta) \le \widetilde{Z} \le (1+\epsilon)Z_H(\beta)$ with probability $1-o(1)$ over the disorder of the Hamiltonian. 
\end{theorem}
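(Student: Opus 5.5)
The plan is to reduce the statement to \Cref{thm:disk-barvinok} by composing $\log Z_H$ with the wedge-avoiding interpolator of \Cref{lem:wedge-avoiding-mapping}, exactly as was done for stabilizer codes. Condition on the event of probability $1-o(1)$ on which \Cref{res:SYK_zero_free_region} holds for a fixed constant radius $R_\beta$, chosen as $R_\beta = 3\beta$. Intersect it with the event $\|H\| = O(N)$, which also has probability $1-o(1)$ by the normalization remark after the SYK definition. The number of terms is $m=\binom{N}{q}=\poly{N}$ for constant $q$, and each term is a fermionic monomial, so the moment step of \Cref{thm:disk-barvinok} applies.

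Next we instantiate the map. Rescale by $\beta$ and consider $F(z):=\log Z_H(\beta\phi(z))$, with $\phi$ from \Cref{lem:wedge-avoiding-mapping}. We take $\rho := b_0/\beta$ with $b_0$ a constant slightly below $1.3/\cJ$, so that the excluded strip around the imaginary axis starts above $b_0$. We also fix a small constant half-angle $\delta_\theta$, giving $p=1-2\delta_\theta/\pi$. Then $\beta\phi(\mathbb D_{R'})$ lies in $\mathbb C\setminus\beta(W^{(+)}\cup W^{(-)})$. Here $R'=1+\Theta(\rho^{1/p})=1+\Theta(\beta^{-1/p})$.

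The main obstacle is checking that this image lies in $S_{R_\beta}$. Two things must hold. The first is boundedness: by the final statement of \Cref{lem:wedge-avoiding-mapping}, on $\mathbb D_r$ with $r=(1+R')/2$ we have $|\beta\phi|\le \beta 2^p(1+o(1))<R_\beta$. The second is wedge avoidance: the forbidden set of \Cref{res:SYK_zero_free_region} consists of points $a+ib$ with $a=o(1)$ and $|b|\ge 1.3/\cJ+o(1)$. For large $N$ this set sits inside the scaled wedges $\beta W^{(\pm)}$, because within bounded modulus such a wedge around $\pm i b_0$ contains a neighborhood of width $\Theta(\delta_\theta)$ of the ray $\{ib: |b|\ge b_0'\}$ for any $b_0'>b_0$. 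Hence $Z_H(\beta\phi(z))\neq0$ on $\mathbb D_{R'}$, and $F$ is analytic there with the branch fixed by $F(0)=N\log 2$ (up to the fermionic normalization).

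Then apply the Cauchy/Borel--Carathéodory argument of \Cref{thm:disk-barvinok} to $F$ in place of $\log Z_H$. For this we bound $\Re F\le N\log 2+|\beta\phi|\,\|H\|=O(\beta N)$ on $\mathbb D_{R'}$. With the disk ratio $R'/1$ giving $1/(1-1/R')=O(\beta^{1/p})$, this yields
\[
K=O\!\left(\beta^{1/p}\log\frac{\beta N}{\epsilon}\right)=O\!\left(\mathrm{poly}(\beta)\log\frac{N}{\epsilon}\right).
\]
The Taylor coefficients of $F$ at $0$ are obtained from the cumulants of $\log Z_H$ composed with the power series of $\phi$, which is explicit and whose first $K$ coefficients are computable in $\poly{K}$ time. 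Computing these cumulants requires $\Tr(H^r)$ for $r\le K$, at cost $m^K\poly{N,K}=N^{O(K)}$. This gives the runtime $N^{O(\log(N/\epsilon)\mathrm{poly}(\beta))}$ and the $(1\pm\epsilon)$ guarantee, with probability $1-o(1)$ over the disorder.
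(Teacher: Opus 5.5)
Your proposal is correct and follows the paper's proof: compose $\log Z_H$ with the wedge-avoiding map of \Cref{lem:wedge-avoiding-mapping} and invoke \Cref{thm:disk-barvinok}, while supplying details the paper omits, notably taking $\delta_\theta>0$ so that the scaled wedges contain the zeros with $\Re\beta=o(1)$ rather than exactly $0$. One slip to fix: $\phi$ is unbounded on $\mathbb D_{R'}$ (it blows up at $z=\pm R'$), so zero-freeness of $Z_H(\beta\phi(z))$ and the bound $\Re F=O(\beta N)$ only follow on the smaller disk $\mathbb D_r$ with $r=(1+R')/2$, where you checked boundedness; since $r-1=\Theta(\beta^{-1/p})$, running the Borel--Carath\'eodory/Cauchy step on $\mathbb D_r$ gives the same $K$.
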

\begin{proof}
    \Cref{res:SYK_zero_free_region} guarantees that the zeros of $Z_H(\beta)$ are only on the imaginary axis for $n$ large enough at values  $ib$ with $|b|>\Omega(1)$. 
    Using the function $\phi$ in \Cref{lem:wedge-avoiding-mapping} which maps the disk to the wedge, we run Barvinok's interpolation on the composite function $f(z)=Z_H(\beta \phi(z))$ which is zero-free in the disk with radius $R=1+O(\mathrm{poly}(1/\beta))$. 
    Apply \Cref{thm:disk-barvinok} to conclude.
\end{proof}
\begin{corollary}[Local observables of SYK]
    Given a local observable $A$ with bounded norm $\|A\|=O(1)$, its expectation $\Tr[\rho_\beta A]$ on the Gibbs state $\rho_\beta$ can be evaluated to additive accuracy $\epsilon$ also with runtime $N^{O(\mathrm{poly}(\log(1/\epsilon),\log(N), \beta))}$.
    In fact, by equivalences between estimating the quantum mean value (QMV, \Cref{problem:QMV}) and the quantum partition function (QPF, \Cref{problem:QPF}) detailed earlier, $\Tr[\rho_\beta A]$ can be estimated by running QPF upon a perturbed Hamiltonian $H'=H+\lambda A$ for $\lambda = o(1)$. 
    This perturbation does not change the zero-free region in the replica calculations to leading order (see \Cref{sec:SYK_local_observables}).
\end{corollary}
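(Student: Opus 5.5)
The plan is to reduce the corollary to \Cref{lem:QMV_from_QPF} together with the Barvinok machinery already used in \Cref{thm:Barvinok_SYK}, applied now to the perturbed Hamiltonian $H' = H + \lambda A$.

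\textbf{Step 1 (reduction).} Take $\lambda = \delta = \Theta(\epsilon)$. By \Cref{lem:QMV_from_QPF}, it suffices to solve QPF for both $Z_H(\beta)$ and $Z_{H+\delta A}(\beta)$ to multiplicative precision $O(\delta^2)=O(\epsilon^2)$. The first of these is handled directly by \Cref{thm:Barvinok_SYK}. The finite-difference error is controlled by $\partial_\lambda^2 \log Z$ at real $\beta$, which is bounded by $O(\beta^2\|A\|^2)$ since $A$ is bounded; this is the standard argument of \cite{bravyi2021complexity}.

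\textbf{Step 2 (zero-freeness of the perturbed model).} This is the main obstacle. We need that, with probability $1-o(1)$, $Z_{H+\lambda A}(z)\neq 0$ on the same wedge-type region $S_R$ of \Cref{res:SYK_zero_free_region}, possibly with slightly shrunken constants.

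The route I would try first is to rerun the Jensen-formula argument. Under $H\to H+\lambda A$, the $G,\Sigma$ action acquires a term from a single local operator. Integrating out the fermions on the support of $A$ yields an $O(1)$ correction to $nS$, but this correction is subleading relative to the $O(n)$ action. Hence:
\begin{itemize}
\item the dominant saddle is unchanged to leading order;
\item $\tfrac1n\log\E|Z_{H+\lambda A}|^2$ still converges to the same harmonic function off the imaginary axis;
\item the Jensen integral again gives zero expected zeros in the required disks.
\end{itemize}

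As a more rigorous fallback near the real axis, I would use a Rouché/Hurwitz-type argument in the spirit of \Cref{prop:zero-free-perturbed-stabilizer}. Write
\[
Z_{H+\lambda A}(z) = Z_H(z)\,\big\langle e^{-z(H+\lambda A)}e^{zH}\big\rangle .
\]
By a Duhamel expansion with $\lambda=o(1)$ and $|z|=O(1)$, the ratio $Z_{H+\lambda A}/Z_H$ stays close to $1$ on compact subsets of the zero-free region, provided $Z_H$ is bounded away from zero there. That lower bound is exactly what the concentration of $\tfrac1n\log|Z_H|$ around the annealed saddle provides.

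\textbf{Step 3 (interpolation).} Apply the wedge map $\phi$ of \Cref{lem:wedge-avoiding-mapping} with $\rho = b_0/\beta$, where $b_0$ is the (perturbed) imaginary-axis gap, roughly $1.3/\cJ$. Then invoke \Cref{thm:disk-barvinok} on $f(z)=\log Z_{H+\lambda A}(\beta\phi(z))$ with the inputs below.
\begin{itemize}
\item Disk radius: $R = 1+\Theta(\rho^{1/p})$.
\item Norm: $\|H+\lambda A\| = O(N)$.
\item Moments: $\Tr((H+\lambda A)^r)$ are computable in $m^r\,\mathrm{poly}(N)$ time, since $A$ is a fermionic/Pauli monomial sum.
\end{itemize}
This gives truncation degree $K = O(\mathrm{poly}(\beta)\log(N/\epsilon^2))$, hence runtime $N^{O(\mathrm{poly}(\log(1/\epsilon),\log N,\beta))}$. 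A union bound over the two QPF instances keeps the success probability at $1-o(1)$.
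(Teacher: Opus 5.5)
Your outline matches the paper's: \Cref{lem:QMV_from_QPF}, then rerun the annealed replica/Jensen computation for $H+\lambda A$, then the wedge map with \Cref{thm:disk-barvinok}. However, Step 2 as written does not close.

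\Cref{lem:jensen-via-wedge} is applied to the \emph{unnormalized} quantity $\log\E|f/f(0)|^2$, not to $\frac1N\log\E|Z|^2$. Write $\log\E|Z_{H+\lambda A}(w)/Z(0)|^2 = N h(w) + e(w)$ with $h$ harmonic. The $Nh$ part integrates to zero, so the expected zero count in the wedge-avoiding region is bounded by $\frac{1}{2\log(r/r_1)}\cdot\frac{1}{2\pi}\int_0^{2\pi} e(\phi(re^{i\theta}))\,d\theta$. Consequently:
\begin{itemize}
\item Convergence of $\frac1N\log\E|Z|^2$ to the same harmonic function only gives an $o(N)$ bound.
\item The ``$O(1)$ correction to $nS$'' you describe only gives $O(1)$ expected zeros. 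The paper explicitly warns that non-harmonic $O(1)$ subleading terms can produce $O(1)$ zeros.
\end{itemize}
Neither yields zero-freeness with probability $1-o(1)$ via Markov. The missing ingredient is that the correction \emph{vanishes with $\lambda$}. At $\lambda=0$ the action is exactly the unperturbed one, and the paper argues the change in $NS$ is $O(\lambda)$. That is, $\E|Z(\beta,\lambda)|^2 = C_\lambda\,\E|Z(\beta,0)|^2$ with $|\log C_\lambda| = O(\lambda)$. The Jensen remainder is then $O(\lambda)/\log(r/r_1) = o(1)$ for constant $\beta$, so the expected number of zeros is $N_Z + o(1)$.

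Your Rouch\'e/Duhamel fallback also fails at constant distance from the real axis. Duhamel plus H\"older gives at best
$|Z_{H+\lambda A}(z)-Z_H(z)| \le |\lambda||z|\|A\|e^{|\lambda||z|\|A\|}Z_H(\Re z)$,
since the only trace-norm control available is $\|e^{-szH}\|_{1/s} = Z_H(\Re z)^{s}$. For $|\Im z| = \Theta(1)$, the ratio $|Z_H(z)|/Z_H(\Re z)$ is the modulus of the characteristic function of an extensive energy distribution. It is exponentially small in $N$; this is the early-time decay of the spectral form factor. So the ratio $Z_{H+\lambda A}/Z_H$ is only controlled when $\lambda = e^{-\Omega(N)}$. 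A concentration lower bound of the form $|Z_H(z)|\ge e^{Nf(z)-o(N)}$ cannot compensate, because the deficit $Z_H(\Re z)/|Z_H(z)|$ is itself $e^{\Theta(N)}$.

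The stabilizer argument of \Cref{prop:zero-free-perturbed-stabilizer} works only because $(\cosh z)^{|R|}$ factors out exactly, leaving an $N$-independent function of $r=O(1)$ checks. SYK admits no such factorization, which is why the paper stays with the annealed Jensen computation and its $O(\lambda)$ remainder.
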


\begin{remark}[Resolving instance-to-instance fluctuations]
    We note that fluctuations in the energy of the Gibbs state and its observables are also $\Omega(1/\mathrm{poly}(N))$ so the error guarantees of Barvinok's interpolation can estimate observables at values surpassing concentration thresholds. For example, consider fluctuations in the variance of the energy which are influenced by fluctuations in the Euclidean norm of the coefficients $\|J\|=\sqrt{\sum_{1 \leq i_1 < \cdots < i_q \leq N} J_{i_1\cdots i_q}^2}$. Namely, upon considering a normalized ensemble drawn as $H'=\frac{H}{\|J\|}$, then $H$ is drawn as $H=c H'$ for $c$ drawn from the square root of a Chi-squared distribution. The variance in $c$ is $\Omega(1/\mathrm{poly}(N))$ which also implies corresponding $\Omega(1/\mathrm{poly}(N))$ bounds in the variance of the energy.
\end{remark}

\subsection{Complex zeros from Jensen's formula}
\label{app:sykjensen}

Our starting point for showing zero-freeness is Jensen's formula, which has previously been used to show zero-freeness for permanents and the Sherrington-Kirkpatrick model~\cite{eldar2018approximating,bencs2025zeros}.
\begin{theorem}[Jensen's formula] \label{thm:jensens-formula}
    Let $\Omega \subset \C$ be an open set that contains the disk $\overline{\D(0,R)}$ for some $R > 0$. Let $Z:\Omega\to\C$ be an analytic function such that $Z(0) \neq 0$. Then
    \begin{align}
        \sum_{\beta\in \D(0,R):Z(\beta)=0} \log \frac{R}{|\beta|} = \frac{1}{2\pi} \int_{-\pi}^\pi d\theta \log \left|\frac{Z(Re^{i\theta})}{Z(0)}\right|.
    \end{align}
\end{theorem}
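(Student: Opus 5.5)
We prove Jensen's formula in the standard way: first treat the zero-free case via the mean value property of harmonic functions, then factor out the zeros with Blaschke factors. Since $\overline{\D(0,R)}$ is compact and $Z \not\equiv 0$ (because $Z(0)\neq 0$), $Z$ has finitely many zeros $\beta_1,\dots,\beta_N$ in $\D(0,R)$, counted with multiplicity. We first assume there are no zeros on the boundary circle $|\beta|=R$ and handle boundary zeros at the end.

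\emph{Step 1 (zero-free case).} If $Z$ has no zeros on $\overline{\D(0,R)}$, then $Z$ is nonvanishing on a slightly larger disk $\D(0,R+\eta)\subset\Omega$, which is simply connected. Hence $\log Z$ has a holomorphic branch there, and $\log|Z| = \Re\log Z$ is harmonic. The mean value property gives
\begin{align}
\log|Z(0)| = \frac{1}{2\pi}\int_{-\pi}^{\pi} d\theta\, \log|Z(Re^{i\theta})|,
\end{align}
which is the formula with an empty left-hand side.

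\emph{Step 2 (removing interior zeros).} Define the Blaschke factors $B_j(\beta) := \frac{R(\beta-\beta_j)}{R^2-\bar\beta_j \beta}$. Each $B_j$ is holomorphic near $\overline{\D(0,R)}$, vanishes only at $\beta_j$, and satisfies $|B_j|=1$ on $|\beta|=R$. Set $g := Z/\prod_j B_j$. Then $g$ extends holomorphically and without zeros to a neighborhood of the closed disk, and $|g|=|Z|$ on the circle. Applying Step 1 to $g$ gives
\begin{align}
\frac{1}{2\pi}\int d\theta \log|Z(Re^{i\theta})| = \log|g(0)| = \log|Z(0)| - \sum_j \log|B_j(0)|.
\end{align}
Since $|B_j(0)| = |\beta_j|/R$, the last sum equals $-\sum_j \log\frac{R}{|\beta_j|}$. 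Rearranging yields the formula.

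\emph{Step 3 (boundary zeros).} This is the main technical point. If $Z$ has zeros on $|\beta|=R$, the integrand has logarithmic singularities, and the left-hand sum, being over the open disk, excludes these zeros, whose terms $\log(R/|\beta|)$ would vanish anyway. One approach is to apply the formula for radii $r\uparrow R$ and take the limit. A cleaner approach is to factor out each boundary zero $\beta_0 = Re^{i\theta_0}$ and reduce to the identity $\frac{1}{2\pi}\int \log|1 - e^{i(\theta-\theta_0)}|\,d\theta = 0$, which follows from the integrable singularity together with the mean value property of $\log|1-z|$ on $|z|=r<1$, followed by dominated convergence as $r\to1$.
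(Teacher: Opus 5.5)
The paper does not prove this statement. It quotes Jensen's formula as a classical fact and uses it only through the counting inequality in the proof of \Cref{lem:jensen-via-wedge}. Your argument is the standard textbook proof, and it is correct.

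The Blaschke factors $B_j(\beta)=R(\beta-\beta_j)/(R^2-\bar\beta_j\beta)$ have the properties Step 2 needs:
\begin{itemize}
\item their poles sit at $R^2/\bar\beta_j$, strictly outside the closed disk;
\item they are unimodular on $|\beta|=R$;
\item they satisfy $|B_j(0)|=|\beta_j|/R$, where $\beta_j\neq 0$ because $Z(0)\neq 0$.
\end{itemize}
So Step 2 goes through exactly as you wrote it, and Step 1 is the mean value property on a slightly larger zero-free disk inside $\Omega$.

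Two remarks.

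First, you are right to count zeros with multiplicity. The identity needs this: read literally, with the sum indexed by the zero set, it fails already for $Z(\beta)=(\beta-a)^2$ with $0<|a|<R$. It is also how the paper uses the formula, since its zero counts $N_{\mathbb D_{r_1}}(F)$ are taken with multiplicity.

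Second, in Step 3 the factoring route is the one to actually carry out. The $r\uparrow R$ alternative needs the same kind of estimate to pass the limit under the integral, so it does not save work. Concretely:
\begin{itemize}
\item Write $Z(\beta)=h(\beta)\prod_k(1-\beta/\beta_k)$ over the boundary zeros $\beta_k=Re^{i\theta_k}$. Then $h$ is holomorphic near the closed disk, has no zeros on the circle, has the same interior zeros as $Z$, and $h(0)=Z(0)$. The normalization on the right-hand side is therefore unchanged, and Step 2 applies to $h$.
\item On the circle, $\log|1-\beta/\beta_k|=\log|1-e^{i(\theta-\theta_k)}|$. Dominated convergence for $\log|1-re^{i\theta}|$ as $r\to 1$ is justified by $\tfrac12|1-e^{i\theta}|\le|1-re^{i\theta}|\le 2$ for $0\le r\le 1$.
\item The lower bound follows from $|1-e^{i\theta}|\le|1-re^{i\theta}|+(1-r)$ together with $1-r\le|1-re^{i\theta}|$. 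The dominating function $\log 2+\bigl|\log|2\sin(\theta/2)|\bigr|$ is integrable.
\item Alternatively, $\int_0^{\pi/2}\log\sin u\,du=-\tfrac{\pi}{2}\log 2$ gives $\frac{1}{2\pi}\int_{-\pi}^{\pi}\log|1-e^{i\theta}|\,d\theta=0$ directly, with no limiting argument.
\end{itemize}
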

We can use Jensen's formula to count the number of zeros in a disk of smaller radius $r < R$ via the inequality (for a particular instance of disorder)
\begin{align}
    \#\{\beta \in \D(0,r) : Z(\beta) = 0\} \leq \lr{\log \frac{R}{r}}^{-1} \sum_{\beta \in \D(0,r):Z(\beta)=0} \log \frac{R}{|\beta|} = \lr{\log \frac{R}{r}}^{-1} \frac{1}{2\pi} \int_{-\pi}^\pi d\theta \log \left|\frac{Z(Re^{i\theta})}{Z(0)}\right|.
\end{align}
We wish to move from the instance-specific quantity to a statement that holds with high probability over the ensemble. The key insight of~\cite{eldar2018approximating} is that instead of computing the quenched quantity $\E \log \left|\frac{Z(Re^{i\theta})}{Z(0)}\right|$, it suffices by Jensen's inequality to compute an annealed expectation since
\begin{align}
    \E \log \left|\frac{Z(Re^{i\theta})}{Z(0)}\right| \leq \frac{1}{2} \log \E \left|\frac{Z(Re^{i\theta})}{Z(0)}\right|^2.
\end{align}
The above analysis on a disk correctly describes the glass transition of the SK model~\cite{bencs2025zeros}: it is replica symmetric for all real $0 < \beta < 1$ and has few complex zeros in the $|\beta| < 1$ disk. However, although the SYK model is replica symmetric for all (arbitrarily large) constant real $\beta$, it does \emph{not} have a zero-free disk of arbitrary constant radius. Indeed, the spectral form factor is known to vanish at constant value, corresponding to zeros in constant imaginary $\beta$. We thus require a more careful analysis for the SYK model.

We will ultimately show zero-freeness in a wedge in the complex plane which avoids the zeros that fall close to the imaginary axis.
This requires the use of Jensen's formula adapted to use the disk to wedge mapping of \Cref{lem:wedge-avoiding-mapping}, which we make precise here.

\begin{lemma}[Jensen's formula for wedge]\label{lem:jensen-via-wedge}
    Let $\phi$ be as in \Cref{lem:wedge-avoiding-mapping} with radius $R>1$, so $\phi:\mathbb D_R\to\mathbb C$ is holomorphic, $\phi(0)=0$, $\phi(1)=1$, and $\phi(\mathbb D_R)$ avoids the forbidden wedges. 
    Let $f$ be holomorphic in $\phi(\overline{\mathbb D_r})$ for some $1<r<R$ and
    let $N_{\phi(\mathbb D_{r_1})}(f)$ be equal to the number of zeros of $f$ (with multiplicity) in the set $\phi(\mathbb D_{r_1})$ for any $0<r_1<r$. Assume that $f(0) \neq 0$.
    Then,
    \begin{equation}\label{eq:jensen-phi}
        \mathbb{E} [N_{\phi(\mathbb D_{r_1})}(f)] \le
        \frac{1}{2\log(r/r_1)}\cdot \frac{1}{2\pi}\int_0^{2\pi}
        \log \mathbb{E}\left|\frac{f\left( \phi(re^{i\theta} )\right)}{f(0)}\right|^2 \,d\theta.
    \end{equation}
    Furthermore, if $\log \mathbb{E}\left[ \left|\frac{f(\beta)}{f(0)}\right|^2 \right]$ is harmonic on $\phi(\mathbb D_r)$, then $\mathbb{E} [N_{\phi(\mathbb D_{r_1})}(f)] = 0$.
\end{lemma}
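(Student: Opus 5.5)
The plan is to pull $f$ back to the disk through $\phi$ and apply the classical Jensen formula (\Cref{thm:jensens-formula}) to the composite $F(z):=f(\phi(z))$. Since $f$ is holomorphic on $\phi(\overline{\mathbb D_r})$ and $\phi$ is holomorphic on $\mathbb D_R\supset\overline{\mathbb D_r}$, the composite $F$ is holomorphic on an open neighborhood of $\overline{\mathbb D_r}$. Because $\phi(0)=0$, we have $F(0)=f(0)\neq 0$. Each zero $w$ of $f$ in $\phi(\mathbb D_{r_1})$ equals $\phi(z)$ for at least one $z\in\mathbb D_{r_1}$, and that $z$ is a zero of $F$. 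The multiplicity of $F$ at $z$ is at least the multiplicity of $f$ at $w$, since it is multiplied by the local order of $\phi-w$ at $z$. Hence, pointwise for each disorder instance,
\[
N_{\phi(\mathbb D_{r_1})}(f)\le \#\{z\in\mathbb D_{r_1}: F(z)=0\}.
\]

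Next comes the counting step used earlier in the section. Every zero $z\in\mathbb D_{r_1}$ contributes $\log(r/|z|)\ge \log(r/r_1)>0$ to the left-hand side of Jensen's formula on the disk of radius $r$. Zeros in $\mathbb D_r\setminus\mathbb D_{r_1}$ contribute nonnegative amounts. This gives
\[
\#\{z\in\mathbb D_{r_1}:F(z)=0\}\le \frac{1}{\log(r/r_1)}\cdot\frac{1}{2\pi}\int_0^{2\pi}\log\left|\frac{F(re^{i\theta})}{F(0)}\right|d\theta .
\]
Take expectations over the disorder and exchange $\mathbb E$ with the $\theta$-integral by Fubini/Tonelli. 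Justifying this exchange needs integrability of $\log|F|$ on the circle, which holds since the log-singularities are integrable and $F$ is bounded. Then apply Jensen's inequality for the concave logarithm pointwise in $\theta$:
\[
\mathbb E\log|X|=\tfrac12\mathbb E\log|X|^2\le \tfrac12\log\mathbb E|X|^2 .
\]
This yields \eqref{eq:jensen-phi} with the stated factor $\frac{1}{2\log(r/r_1)}$.

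For the harmonic claim, set $u(\beta):=\log\mathbb E|f(\beta)/f(0)|^2$. By assumption $u$ is harmonic on $\phi(\mathbb D_r)$; I would also assume it is continuous up to $\phi(\overline{\mathbb D_r})$, and say so explicitly. Then $u\circ\phi$ is harmonic on $\mathbb D_r$, because the composition of a harmonic function with a holomorphic map is harmonic. By the mean value property on circles of radius $r'<r$, letting $r'\to r$, the integral equals $2\pi\,u(\phi(0))=2\pi\,u(0)=2\pi\log 1=0$. The bound then forces $\mathbb E[N]\le 0$, and since $N\ge 0$, $\mathbb E[N]=0$.

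The main obstacle is measure-theoretic care rather than a conceptual difficulty. Three points need attention. First, one must ensure $f(0)$ is a.s. nonzero and that $\mathbb E|f\circ\phi|^2<\infty$; for partition functions, $f(0)=2^n$ is deterministic. Second, one must handle the boundary-limit step in the harmonic case, which is cleanest if harmonicity is assumed on a neighborhood of $\phi(\overline{\mathbb D_r})$. Third, one must note that noninjectivity of $\phi$ only helps the inequality.
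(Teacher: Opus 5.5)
Your proposal is correct and follows the paper's proof essentially step for step. You pull back to $F=f\circ\phi$, bound $N_{\phi(\mathbb D_{r_1})}(f)$ by the multiplicity-counted zeros of $F$ in $\mathbb D_{r_1}$, apply Jensen's formula on radius $r$ with the $\log(r/r_1)$ lower bound, take expectations using concavity of the logarithm, and settle the harmonic case by the mean-value property of $u\circ\phi$. Your measure-theoretic remarks go beyond what the paper checks, and your extra continuity-up-to-the-boundary assumption is avoidable: apply the inequality with any $r'\in(r_1,r)$ in place of $r$, since $u\circ\phi$ is then harmonic on a neighborhood of $\overline{\mathbb D_{r'}}$.
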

\begin{proof}
Define $F(z):=f(\phi(z))$, which is analytic on $\mathbb D_r$.
Let $\{\beta_j\}$ be the multi-set of zeros of $f$ in $\phi(\mathbb D_{r_1})$, counted with multiplicity. 
For each $\beta_j$ choose $z_j\in\mathbb D_{r_1}$ with $\phi(z_j)=\beta_j$. Locally write $f(\beta)=(\beta-\beta_j)^{m_j}g(\beta)$ with $g(\beta_j)\neq 0$. Then
$F(z)=f(\phi(z))=(\phi(z)-\beta_j)^{m_j}g(\phi(z))$ shows that $F$ has a zero at $z_j$ of multiplicity at least $m_j$. 
Therefore
\begin{equation}\label{eq:count-ineq}
    N_{\phi(\mathbb D_{r_1})}(f) \le N_{\mathbb D_{r_1}}(F),
\end{equation}
where $N_{\mathbb D_{r_1}}(F)$ counts zeros of $F$ with multiplicity.
Jensen's formula (\Cref{thm:jensens-formula}) gives
\begin{equation}
    \sum_{|z_k|<r}\log\frac{r}{|z_k|} = \frac{1}{2\pi} \int_0^{2\pi} \log\left| \frac{F(re^{i\theta})}{F(0)} \right|\,d\theta.
\end{equation}
Since $\log\frac{r}{|z_k|}\ge \log\frac{r}{r_1}$ for all $|z_k|<r_1$, we get
\begin{equation}
    N_{\mathbb D_{r_1}}(F) \log\frac{r}{r_1} \le \frac{1}{2\pi} \int_0^{2\pi} \log\left| \frac{F(re^{i\theta})}{F(0)} \right|\,d\theta.
\end{equation}
Taking expectations, using $\log(|x|) = \frac12 \log(|x|^2)$, and applying Jensen's inequality to the logarithm yields
\begin{equation}
    \mathbb{E}[N_{\mathbb D_{r_1}}(F)] \le \frac{1}{2\log(r/r_1)} \frac{1}{2\pi} \int_0^{2\pi} \log \mathbb{E}\left[ \left|\frac{F(re^{i\theta})}{F(0)}\right|^2 \right]\,d\theta.
\end{equation}
Substitute $F(z)=f(\phi(z))$ and use the bound in \Cref{eq:count-ineq} to obtain \Cref{eq:jensen-phi}.

For the last claim, let $u(\beta):=\log \mathbb{E}\left[ \left|\frac{f(\beta)}{f(0)}\right|^2 \right]$ and assume $u$ is harmonic on $\phi(\mathbb D_r)$. 
Since $\phi$ is holomorphic on $\mathbb D_r$, $u\circ\phi$ is harmonic on $\mathbb D_r$. 
Applying the mean-value property of harmonic functions then gives $\frac{1}{2\pi}\int_0^{2\pi} u ( \phi(re^{i\theta}) )\,d\theta = (u \circ \phi)(0)= 0$.
\end{proof}

In the analysis below, we will compute
\begin{align}
    \log \E \left|\frac{Z(\beta)}{Z(0)}\right|^2
\end{align}
via the saddle point method and find that the leading-order action, of size $\Theta(N)$, is harmonic on a wedge region. While subleading $O(1)$ terms that are not harmonic may exist and produce $O(1)$ zeros, we argue that no such non-harmonic remainder exists. This is supported numerically (\Cref{app:num}), and thus we report our results in the main text assuming that the leading-action saddle point fully describes the physics. However, we note that if non-harmonic subleading terms exist, then the resulting $O(1)$ zeros in the wedge region can be avoided by trying a sufficiently large number of root-avoiding tubes as in~\cite{eldar2018approximating}.

\subsection{Complex zeros of SYK}

\subsubsection{Solving the saddle point equations}
We compute $\E |Z(\beta)|^2$ here via the path integral using the standard notation
\begin{align}
    \cJ^2 = \frac{qJ^2}{2^{q-1}}.
\end{align}
In the Euclidean path integral computation of $\EE{Z^n}$~\cite{maldacena2016remarks,kitaev2018soft}, one begins with the path integral
\begin{align}
    Z = \int D\psi_i \exp[-\int_0^\beta d\tau\lr{\frac{1}{2}\sum_j \psi_j \partial_\tau \psi_j + i^{q/2} \sum_{1 \leq i_1 < \cdots < i_q \leq N} J_{i_1\cdots i_q}\psi_{i_1} \cdots \psi_{i_q}}]
\end{align}
with antiperiodic boundary conditions $\psi_i(\beta) = -\psi_i(0)$. After making $n$ copies and evaluating the expectation over disorder via the Gaussian MGF, one inserts the identity
\begin{align}
    1 = \int DG \, D\Sigma \exp\left[-\frac{N}{2} \sum_{ab}\int_0^\beta d\tau_1\,d\tau_2\, \Sigma_{ab}(\tau_1,\tau_2) \lr{G_{ab}(\tau_1,\tau_2) - \frac{1}{N}\sum_i \psi_i^a(\tau_1) \psi_i^b(\tau_2)}\right].
\end{align}
This gives
\begin{align}
    \EE{Z^n} &= \int DG\, D\Sigma\, D\psi_i^a \exp\Bigg[-\frac{1}{2}\sum_a \int_0^\beta d\tau\sum_j \psi_j^a \partial_\tau \psi_j^a + \frac{J^2 N}{2q} \sum_{ab} \int_0^\beta d\tau_1 \, d\tau_2\, G_{ab}(\tau_1,\tau_2)^q \nonumber\\
    &\qquad - \frac{N}{2} \sum_{ab}\int_0^\beta d\tau_1\,d\tau_2\, \Sigma_{ab}(\tau_1,\tau_2) \lr{G_{ab}(\tau_1,\tau_2) - \frac{1}{N}\sum_i \psi_i^a(\tau_1) \psi_i^b(\tau_2)}\Bigg].
\end{align}
To rewrite the expectation in the form
\begin{align}
    \EE{Z^n} = \int DG \, D\Sigma\, \exp[-N S_\mathrm{Euclidean}[G,\Sigma]],
\end{align}
we do a Gaussian integral over $D\psi_i^a$. Assuming the replica diagonal solution $G_{a\neq b} = \Sigma_{a \neq b} = 0$, this gives action
\begin{align}
    S_\mathrm{Euclidean}[G,\Sigma] &= - \log \pf (\partial_\tau - \hat\Sigma_{aa}) + \frac{1}{2}\int_0^\beta d\tau_1\,d\tau_2\, \lr{\Sigma_{aa}(\tau_1,\tau_2) G_{aa}(\tau_1,\tau_2) - \frac{J^2}{q} G_{aa}(\tau_1,\tau_2)^q},
\end{align}
where $\pf$ denotes the Pfaffian in both time and replica indices, and $\hat \Sigma$ denotes the integral operator defined by the corresponding bilocal field as its kernel.

For complex $\beta$, a similar computation holds on $n=2$ copies denoted $L,R$. Since we are computing the second moment, our computation will be similar to that of the spectral form factor~\cite{khramtsov2021spectral}, which represents the special case of purely imaginary $\beta$. We write
\begin{align}
    \E |Z(\beta)|^2 = \int DG \, D\Sigma\, \exp[-N S[G, \Sigma]],
\end{align}
for replica symmetric action
\begin{align}\label{eq:S}
    S = - \log \pf\lr{\partial_\tau -\hat\Sigma_{aa}} + \frac{1}{2} \int_0^{|\beta|} d\tau_1\,d\tau_2\, \lr{\Sigma_{aa}(\tau_1,\tau_2) G_{aa}(\tau_1,\tau_2) - \frac{2^{q-1}\cJ^2}{q^2} s_{aa} G_{aa}(\tau_1,\tau_2)^q},
\end{align}
where
\begin{align}
    s_{LL} = \lr{\frac{\beta}{|\beta|}}^2, \quad s_{RR} = \lr{\frac{\bar\beta}{|\beta|}}^2.
\end{align}
We solve the saddle point equations
\begin{align}
    \partial_{\tau_1} G_{ab}(\tau_1,\tau_2) - \int d\tau\, \Sigma_{ac}(\tau_1, \tau) G_{cb}(\tau, \tau_2) &= \delta(\tau_1-\tau_2) \delta_{ab}\label{eq:sp1}\\
    \Sigma_{ab}(\tau_1,\tau_2) &= s_{ab} \frac{2^{q-1}\cJ^2}{q} G_{ab}(\tau_1,\tau_2)^{q-1},\label{eq:sp2}
\end{align}
under the RS ansatz $G_{a\neq b} = \Sigma_{a \neq b} = 0$. Note that if $|\beta|$ were superpolynomially large in $N$ (e.g., at the Thouless time), one may need to solve for the off-diagonal under a more general ansatz to describe phenomena such as the ramp in the spectral form factor~\cite{khramtsov2021spectral}.
We require periodic solutions of the form $G_{aa}(\tau_1, \tau_2)=G_{aa}(\tau_1-\tau_2)$.
Similarly to the Euclidean case, an analytical solution is only available in the large-$q$ limit; we thus make ansatz
\begin{align}
    G_{aa}(\tau) = G_{aa}^{(0)}(\tau) + \frac{1}{q}G_{aa}^{(1)}(\tau), \quad \Sigma_{aa}(\tau) = \Sigma_{aa}^{(0)}(\tau) + \frac{1}{q} \Sigma_{aa}^{(1)}(\tau).
\end{align}
We factor out the free fermion propagator
\begin{align}
    G_{LL}^{(0)}(\tau_1,\tau_2) = G_{RR}^{(0)}(\tau_1,\tau_2) = \frac{1}{2}\sgn(\tau_1-\tau_2)
\end{align}
and denote the $1/q$ correction by $g$,
\begin{align}
    G_{aa}(\tau_1,\tau_2) &= \frac{1}{2}\sgn(\tau_1-\tau_2)\lr{1 + \frac{g_{aa}(\tau_1,\tau_2)}{q}}.
\end{align}
Note that since $G_{aa}(\tau_1 + |\beta|, \tau_2) = -G_{aa}(\tau_1,\tau_2)$, we use the antiperiodic convention
\begin{align}
    \sgn(\tau+\beta) = -\sgn(\tau),
\end{align}
which requires $g$ to be periodic, i.e.,
\begin{align}\label{eq:gper}
    g_{aa}(\tau_1, \tau_2) = g_{aa}(\beta + \tau_1, \tau_2).
\end{align}
The second saddle point equation \eqref{eq:sp2} becomes in the large-$q$ limit:
\begin{align}
    \Sigma_{aa}(\tau_1,\tau_2) &= s_{aa}\cJ^2\frac{2^{q-1}}{q}\lr{G_{aa}^{(0)} + \frac{1}{q}G_{aa}^{(1)}}^{q-1} = \frac{s_{aa}\cJ^2\sgn(\tau_1-\tau_2) e^{g_{aa}(\tau_1,\tau_2)}}{q}.
\end{align}
Thus, $\Sigma_{aa}^{(0)}(\tau)=0$ and $\Sigma_{aa}^{(1)}(\tau)=s_{aa}\cJ^2\sgn(\tau) e^{g_{aa}(\tau)}$.
To analyze the first saddle point equation under the large-$q$ ansatz, we evaluate the first term in \eqref{eq:sp1} to obtain
\begin{align}
    \partial_{\tau_1}G^{(0)}(\tau_1,\tau_2) = \delta(\tau_1-\tau_2) \implies \partial_{\tau_1}G^{(1)}(\tau_1,\tau_2) - \int d\tau \,\Sigma^{(1)}(\tau_1,\tau) G^{(0)}(\tau, \tau_2) = 0.
\end{align}
Differentiating with respect to $\tau_2$ gives (using $\partial_{\tau_2}G^{(0)}(\tau,\tau_2) = -\delta(\tau-\tau_2)$)
\begin{align}
    0 = \partial_{\tau_1} \partial_{\tau_2} G^{(1)}(\tau_1,\tau_2) + \int d\tau \, \Sigma^{(1)}(\tau_1, \tau)\delta(\tau-\tau_2) = \partial_{\tau_1} \partial_{\tau_2} G^{(1)}(\tau_1,\tau_2) + \Sigma^{(1)}(\tau_1, \tau_2).
\end{align}
We combine the results from both of our saddle point equations to obtain
\begin{align}\label{eq:combsaddle}
    \frac{1}{2}\partial_{\tau_1} \partial_{\tau_2} \lr{\sgn(\tau_1-\tau_2) g_{aa}(\tau_1,\tau_2)} &= -s_{aa} \cJ^2 \sgn(\tau_1-\tau_2) e^{g_{aa}(\tau_1,\tau_2)}.
\end{align}
We restrict our attention to solutions satisfying translational symmetry, which further simplifies the saddle point equations to
\begin{align}
    \frac{1}{2}\partial_\tau^2 [\sgn(\tau)g_{aa}(\tau)] = s_{aa} \cJ^2 \sgn(\tau) e^{g_{aa}(\tau)}.
\end{align}
This is solved by
\begin{align}
    e^{g_{aa}(\tau)} = \frac{\tilde c_a^2}{s_{aa}\cJ^2 \cos^2\lr{\tilde c_a|\tau| + \tilde d_a}},
\end{align}
where we now must identify constants $\tilde c_a$ and $\tilde d_a$. Besides the initial antisymmetry constraints
\begin{align}
    G_{ab}(\tau_1,\tau_2) = -G_{ba}(\tau_2,\tau_1), \quad \Sigma_{ab}(\tau_1,\tau_2) = -\Sigma_{ba}(\tau_2,\tau_1)
\end{align}
required of $G,\Sigma$, the translational invariant solutions satisfy $G_{ab}(\tau_1,\tau_2) = G_{ab}(\tau_1-\tau_2)$ and similarly for $\Sigma$. Combined with the antiperiodicity constraint, this implies
\begin{align}
    G_{ab}(\tau) = G_{ba}(|\beta|-\tau).
\end{align}
We also require smoothness at $|\beta|/2$,
\begin{align}
    g'_{aa}(|\beta|/2) = 0,
\end{align}
and the free fermion limit
\begin{align}
    g_{aa}(0) = 0.
\end{align}
This gives solution
\begin{align}\label{eq:gaa}
    e^{g_{aa}(\tau)} = \lr{\frac{\cos(c_a/2)}{\cos\left[c_a \lr{\frac{1}{2} - \frac{\tau}{|\beta|}}\right]}}^2
\end{align}
for $c_a$ satisfying
\begin{align}
    c_a^2 = s_{aa} \lr{|\beta|\cJ}^2 \cos^2 \frac{c_a}{2}.
\end{align}

\subsubsection{Computing the action at the saddle point}
We now identify which saddle point solution dominates the action. Starting from the action \eqref{eq:S}, we expand the first term
\begin{align}
    - \log \pf(\partial_\tau - \hat \Sigma_{aa}) &= -\frac{1}{2} \Tr \log\lr{\partial_\tau - \hat\Sigma_{aa}} \\
    &= - \Tr \log(\partial_\tau) + \frac{1}{2q}\Tr\lr{G^{(0)} \cdot \Sigma^{(1)}} + \frac{1}{4q^2} \Tr\lr{G^{(0)} \cdot \Sigma^{(1)} \cdot G^{(0)} \cdot \Sigma^{(1)}} + O\lr{\frac{1}{q^3}}.
\end{align}
We similarly expand the remaining term in \eqref{eq:S}
\begin{align}
    \frac{1}{2} \int_0^{|\beta|} d\tau_1\,d\tau_2\, \lr{\Sigma_{aa}(\tau_1,\tau_2) G_{aa}(\tau_1,\tau_2) - \frac{2^{q-1}\cJ^2}{q^2} s_{aa} G_{aa}(\tau_1,\tau_2)^q}
\end{align}
using the large-$q$ result
\begin{align}
    G_{aa}(\tau_1,\tau_2)^q = \left[\frac{1}{2}\sgn(\tau_1-\tau_2)\lr{1 + \frac{g_{aa}(\tau_1,\tau_2}{q}}\right]^q = \frac{e^{g_{aa}(\tau_1,\tau_2)}}{2^q}
\end{align}
to obtain (using $G^{(0)}(\tau_1,\tau_2) = -G^{(0)}(\tau_2,\tau_1)$)
\begin{align}
    &\frac{1}{2}\int_0^{|\beta|}d\tau_1\,d\tau_2\,\Bigg[\frac{1}{q} \Sigma_{aa}^{(1)}(\tau_1,\tau_2) G_{aa}^{(0)}(\tau_1,\tau_2) + \frac{1}{q^2}\Bigg(\Sigma_{aa}^{(1)}(\tau_1,\tau_2) G_{aa}^{(0)}(\tau_1,\tau_2) g_{aa}(\tau_1,\tau_2) - \frac{s_{aa}\cJ^2}{2} e^{g_{aa}(\tau_1,\tau_2)}\Bigg)\Bigg]\\
    &= -\frac{1}{2q} \Tr\lr{G^{(0)} \cdot \Sigma^{(1)}} + \frac{1}{2q^2} \int_0^{|\beta|}d\tau_1\,d\tau_2\,\Bigg(\Sigma_{aa}^{(1)}(\tau_1,\tau_2) G_{aa}^{(0)}(\tau_1,\tau_2) g_{aa}(\tau_1,\tau_2) - \frac{s_{aa}\cJ^2}{2} e^{g_{aa}(\tau_1,\tau_2)}\Bigg).
\end{align}
Altogether, this gives action
\begin{align}\label{eq:Sq2}
    S &= - \Tr \log(\partial_\tau) + \frac{1}{4q^2} \Tr\lr{G^{(0)} \cdot \Sigma^{(1)} \cdot G^{(0)} \cdot \Sigma^{(1)}}\nonumber\\
    &\quad + \frac{1}{2q^2}\int_0^{|\beta|}d\tau_1\,d\tau_2\,\Bigg(\Sigma_{aa}^{(1)}(\tau_1,\tau_2) G_{aa}^{(0)}(\tau_1,\tau_2) g_{aa}(\tau_1,\tau_2) - \frac{s_{aa}\cJ^2}{2} e^{g_{aa}(\tau_1,\tau_2)}\Bigg)\Bigg] + O\lr{\frac{1}{q^3}}.
\end{align}
Finally, we integrate out $\Sigma^{(1)}$ in the action, rewriting
\begin{align}
    \E |Z(\beta)|^2 = \int DG \, D\Sigma\, \exp[-N S[G, \Sigma]] = \int Dg \exp[-NI[g]].
\end{align}
We apply \eqref{eq:combsaddle} and the definition of $G^{(0)}$, i.e.,
\begin{align}
    \Sigma^{(1)}(\tau_1, \tau_2) = -\frac{1}{2}\partial_{\tau_1} \partial_{\tau_2} \lr{\sgn(\tau_1-\tau_2) g_{aa}(\tau_1,\tau_2)}, \quad G^{(0)}(\tau_1,\tau_2) = \frac{1}{2}\sgn(\tau_1-\tau_2),
\end{align}
to obtain the identity (via integration by parts)
\begin{align}
    \int_0^{|\beta|} d\tau \,G^{(0)} (\tau_1,\tau) \Sigma^{(1)}(\tau, \tau_2) &= -\frac{1}{4}\int_0^{|\beta|} d\tau\, \sgn(\tau_1-\tau) \partial_\tau \partial_{\tau_2}[\sgn(\tau-\tau_2) g_{aa}(\tau-\tau_2)]\\
    &= -\frac{1}{4}\partial_{\tau_2}\Bigg\{\left[\sgn(\tau_1-\tau) \sgn(\tau-\tau_2) g_{aa}(\tau,\tau_2)\right]_0^{|\beta|}\nonumber\\
    &\qquad - \int_0^{|\beta|} d\tau\, \partial_\tau[\sgn(\tau_1-\tau)] \cdot [\sgn(\tau-\tau_2)g_{aa}(\tau,\tau_2)]\Bigg\}\\
    &= 0 - \frac{1}{2}\partial_{\tau_2} \int_0^{|\beta|} d\tau\, \delta(\tau_1-\tau) \sgn(\tau - \tau_2) g_{aa}(\tau, \tau_2)\\
    &= -\frac{1}{2}\partial_{\tau_2}[\sgn(\tau_1-\tau_2)g_{aa}(\tau_1,\tau_2)].
\end{align}
Above, we used the periodicity of $g_{aa}$ \eqref{eq:gper}
to show that the boundary term vanishes:
\begin{align}
    \left[\sgn(\tau_1-\tau) \sgn(\tau-\tau_2) g_{aa}(\tau,\tau_2)\right]_0^{|\beta|} &= [(-1)(+1)g(\beta,\tau_2)] - [(+1)(-1)g(0, \tau_2)] = 0.
\end{align}
This allows us to evaluate
\begin{align}
    &\frac{1}{2q^2} \lr{\frac{1}{2}\Tr\lr{G^{(0)} \cdot \Sigma^{(1)} \cdot G^{(0)} \cdot \Sigma^{(1)}} + \int_0^{|\beta|}d\tau_1\,d\tau_2\,\Sigma_{aa}^{(1)}(\tau_1,\tau_2) G_{aa}^{(0)}(\tau_1,\tau_2) g_{aa}(\tau_1,\tau_2)}\\
    &= \frac{1}{4q^2}\int_0^{|\beta|} d\tau_1\, d\tau_2\, \lr{-\frac{1}{2} \partial_{\tau_2}[\sgn(\tau_1-\tau_2)g_{aa}(\tau_1,\tau_2)]}\lr{-\frac{1}{2} \partial_{\tau_1}[\sgn(\tau_2-\tau_1)g_{aa}(\tau_2,\tau_1)]}\nonumber\\
    &\quad - \frac{1}{8q^2} \int_0^{|\beta|}d\tau_1\,d\tau_2\, [\partial_{\tau_1} \partial_{\tau_2} \lr{\sgn(\tau_1-\tau_2) g_{aa}(\tau_1,\tau_2)}] \sgn(\tau_1-\tau_2) g_{aa}(\tau_1,\tau_2)\\
    &= \frac{1}{16q^2} \int_0^{|\beta|} d\tau_1 \, d\tau_2\, \lr{-(\partial_{\tau_1} F)(\partial_{\tau_2} F) - 2 F \partial_{\tau_1} \partial_{\tau_2} F}
\end{align}
where we defined $F(\tau_1,\tau_2) = \sgn(\tau_1-\tau_2)g_{aa}(\tau_1,\tau_2)$ and applied $g_{aa}(\tau_1,\tau_2) = g_{aa}(\tau_2,\tau_1)$. Evaluating the first integral by parts, we find that
\begin{align}
    \int_0^{|\beta|} d\tau_1\, d\tau_2\, (\partial_{\tau_1} F)(\partial_{\tau_2} F) &= \int_0^{|\beta|} d\tau_2 \, \left[F\partial_{\tau_2} F\right]_{\tau_1=0}^{\tau_1=|\beta|} - \int_0^{|\beta|} d\tau_1\, d\tau_2\, F \partial_{\tau_1} \partial_{\tau_2} F
\end{align}
where the boundary term vanishes by periodicity of $g_{aa}(|\beta|, \tau_2) = g_{aa}(0, \tau_2)$:
\begin{align}
    \left[F\partial_{\tau_2} F\right]_{\tau_1=0}^{\tau_1=|\beta|} = g_{aa}(|\beta|, \tau_2) \partial_{\tau_2}\lr{\sgn(|\beta|-\tau_2) g_{aa}(|\beta|, \tau_2)} + g_{aa}(0, \tau_2) \partial_{\tau_2}\lr{\sgn(-\tau_2) g_{aa}(0, \tau_2)} = 0.
\end{align}
Hence, we have
\begin{align}
    &\frac{1}{2q^2} \lr{\frac{1}{2}\Tr\lr{G^{(0)} \cdot \Sigma^{(1)} \cdot G^{(0)} \cdot \Sigma^{(1)}} + \int_0^{|\beta|}d\tau_1\,d\tau_2\,\Sigma_{aa}^{(1)}(\tau_1,\tau_2) G_{aa}^{(0)}(\tau_1,\tau_2) g_{aa}(\tau_1,\tau_2)}\\
    &= -\frac{1}{16q^2}\int_0^{|\beta|} d\tau_1\, d\tau_2\, \sgn(\tau_1-\tau_2)g_{aa}(\tau_1,\tau_2)\partial_{\tau_1}\partial_{\tau_2}[\sgn(\tau_1-\tau_2)g_{aa}(\tau_1,\tau_2)]\\
    &= \frac{s_{aa} \cJ^2}{8q^2}\int_0^{|\beta|} d\tau_1\, d\tau_2\, g_{aa}(\tau_1,\tau_2) e^{g_{aa}(\tau_1,\tau_2)}
\end{align}
where we applied \eqref{eq:combsaddle} in the last line.
The final term of \eqref{eq:Sq2}
\begin{align}
    - \frac{1}{4q^2} \int_0^{|\beta|} s_{aa}\cJ^2 e^{g_{aa}(\tau_1,\tau_2)}
\end{align}
survives, producing action
\begin{align}
    I[g] &= -\Tr \log \partial_\tau - \frac{s_{aa}\cJ^2}{4q^2} \int_0^{|\beta|} d\tau_1\, d\tau_2\, \lr{1 - \frac{1}{2}g_{aa}(\tau_1,\tau_2)}e^{g_{aa}(\tau_1,\tau_2)}\\
    &=  -\Tr \log \partial_\tau - \frac{s_{aa}|\beta|\cJ^2}{2q^2} \int_0^{|\beta|/2} d\tau\,  \lr{1 - \frac{1}{2}g_{aa}(\tau)}e^{g_{aa}(\tau)}.
\end{align}
Finally, we plug in our translation-invariant solution \eqref{eq:gaa} for $g_{aa}$ to obtain
\begin{align}
    I[g] &= -\Tr \log \partial_\tau - \frac{s_{aa}}{2} \lr{\frac{|\beta|\cJ}{q}}^2 \lr{\frac{\sin c_a}{c_a} - \frac{1+ \cos c_a}{4}}, \quad c_a^2 = s_{aa}(|\beta|\cJ)^2 \cos^2 \frac{c_a}{2}.
\end{align}
Using the identity
\begin{align}
    \cos c_a = -1 + 2 \cos^2\frac{c_a}{2} = -1 + \frac{2c_a^2}{s_{aa}(|\beta|\cJ)^2}
\end{align}
we further simplify this to
\begin{align}
    I[g] &= -\Tr \log \partial_\tau - \frac{1}{2q^2} \lr{\frac{s_{aa}(|\beta|\cJ)^2\sin c_a}{c_a} - \frac{c_a^2}{2}} = -\Tr \log \partial_\tau + \frac{1}{2q^2} \lr{\frac{c_a^2}{2} - 2c_a \tan \frac{c_a}{2}}.
\end{align}
This is similar to the result in Eq. (3.24) of~\cite{khramtsov2021spectral}. Note that we assumed $\cos c_a/2 \neq 0$ due to the saddle point equation $c_a^2 = s_{aa}(|\beta|\cJ)^2 \cos^2 \frac{c_a}{2}$.
Since the term $-\Tr \log \partial_\tau$ simply gives $\exp[N \Tr \log \partial_\tau] = 2^N$, we can omit it from the remaining analysis.

\subsubsection{Identifying the leading saddle}
For a given choice of $\beta$, let $c_L, c_R$ be solutions to
\begin{align}\label{eq:clr}
    c_L^2 = (\beta \cJ)^2 \cos^2(c_L/2), \quad c_R^2 = (\bar\beta \cJ)^2 \cos^2(c_R/2).
\end{align}
Since
\begin{align}
    \overline{c_R}^2 = (\beta \cJ)^2 \cos^2 \frac{\overline{c_R}}{2},
\end{align}
we can always identify solutions $c_R$ by $c_R = \overline{c_L}$. Moreover, for each solution $c_L$, we note that $-c_L$ is also a solution; hence, we can fix the sign and look for solutions to
\begin{align}\label{eq:cl}
    c_L = \beta \cJ \cos \frac{c_L}{2}.
\end{align}
In Ref.~\cite{khramtsov2021spectral}, the leading saddle of the spectral form factor was correctly identified via a large-$|\beta|\cJ$ expansion. Solving \eqref{eq:cl} the in large-$|\beta| \cJ$ limit gives for integer $m$
\begin{align}
    \Re c_L = (2m+1)\pi + O\lr{\frac{1}{(|\beta|\cJ)^2}}, \quad \Im \frac{c_L}{2} = \frac{(2m+1)\pi}{\beta \cJ} + O\lr{\frac{1}{(|\beta|\cJ)^2}}.
\end{align}
The corresponding action is
\begin{align}
    -\frac{N}{q^2}\Re\lr{\frac{c_L^2}{2} - 2c_L \tan \frac{c_L}{2}} = -\frac{N}{q^2}\lr{4\Re \beta \cJ + 4 + \frac{(2m+1)^2\pi^2}{2} + O\lr{\frac{1}{|\beta| \cJ}}}.
\end{align}
Since $m=0$ maximizes the action, we obtain solution
\begin{align}\label{eq:cstar}
    c_* \approx \pi - O\lr{\frac{1}{\beta \cJ}}.
\end{align}
Although this does not maximize the action at all $\beta$, it is identified by~\cite{khramtsov2021spectral} as the correct saddle at all $\beta$ in the special case of the spectral form factor (Fig.~\ref{fig:sff}) under the above large-$|\beta|\cJ$ justification.

\begin{figure}
    \centering
    \includegraphics[width=\linewidth]{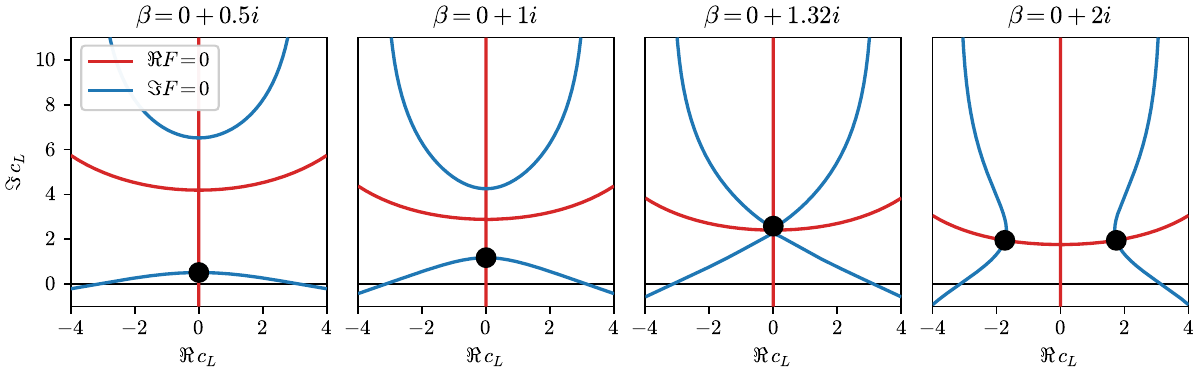}
    \caption{Red lines show the contour of $\Re(c_L - \beta \cJ \cos c_L/2) = 0$; blue lines show the contour of $\Im(c_L - \beta \cJ \cos c_L/2) = 0$; we set $\cJ = 1$. A black dot indicates the intersection at the $c_*$ branch identified as the leading saddle in~\cite{khramtsov2021spectral}. At $\beta \approx 1.32i$, the leading saddle has multiplicity 2, causing a complex zero in the partition function (as derived in \eqref{eq:132}). At $|\Im \beta| \gtrsim 1.32$, the two saddles can cancel each other in the action, leading to additional complex zeros.}
    \label{fig:sff}
\end{figure}

Note that the above argument holds for generic complex $\beta$ as well. We further extend the above argument to check that $c_*$ is the leading saddle for small $|\beta|$ as well: we find that solutions other than $c_*$ yield diverging partition functions as $|\beta| \cJ \to 0$. For $c_L/\cos(c_L/2)$ to vanish, we require that either $c_L \to 0$ or $|\cos c_L/2| \to \infty$. The first case,
\begin{align}
    c_* = \beta \cJ + O\lr{(|\beta| \cJ)^3},
\end{align}
corresponds to the previous solution $c_*$ in \eqref{eq:cstar} and yields the correct limit $\E |Z(\beta)|^2 \to 2^N$ as $\beta \to 0$. In contrast, all other solutions for $c_L$ cause $\E |Z(\beta)|^2$ diverge in the $\beta \to 0$ limit. Since
\begin{align}
    \left|\cos \frac{c}{2}\right| \leq \cosh \Im \frac{c_L}{2},
\end{align}
we must satisfy $\cosh \Im \frac{c_L}{2} \to \infty$ as $|\beta| \cJ \to 0$. Explicitly, these solutions are given by the solutions to (taking the case $\Im c_L > 0$)
\begin{align}
    \beta \cJ \approx 2 c_L e^{ic_L/2}.
\end{align}
These are solved to leading order by
\begin{align}
    c_L \approx -2iW_m\lr{\frac{i\beta \cJ}{4}}
\end{align}
for integer $m \neq 0$ and $W_m$ the Lambert $W$ function.
Expanding $W_m(z) \approx \log z + 2\pi i m$ for small $z$, we obtain
\begin{align}
    c_L \approx -2i \log \frac{i\beta \cJ}{4} + 4\pi m.
\end{align}
This satisfies $|\Im c_L| \to \infty$ as $|\beta| \cJ \to 0$. In particular, the action to leading order is
\begin{align}
    -\frac{N}{q^2}\Re\lr{\frac{c_L^2}{2} - 2c_L \tan \frac{c_L}{2}} = \frac{2N}{q^2} \log^2 \frac{4}{|\beta| \cJ},
\end{align}
which diverges in the small-$|\beta| \cJ$ limit. We conclude that the only physical solution as $|\beta| \cJ \to 0$ is the $c_*$ solution identified in \eqref{eq:cstar}.

\begin{figure}
    \centering
    \includegraphics[width=\linewidth]{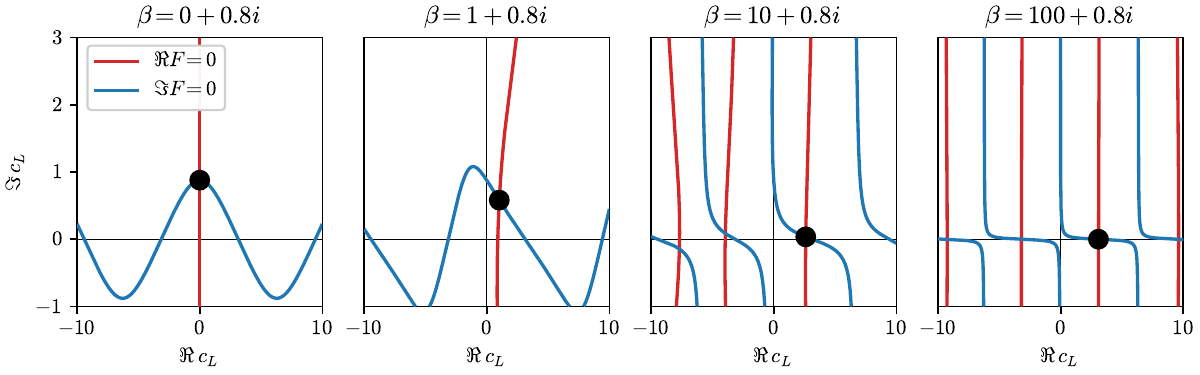}
    \caption{Red lines show the contour of $\Re(c_L - \beta \cJ \cos c_L/2) = 0$; blue lines show the contour of $\Im(c_L - \beta \cJ \cos c_L/2) = 0$; we set $\cJ = 1$. A black dot indicates the intersection at the $c_*$ branch identified as the leading saddle for all $\beta$ such that $|\Im \beta| \lesssim 1.32$. At $|\beta| \cJ \ll 1$, it corresponds to $c_* \approx \beta \cJ$; at $|\beta| \cJ \gg 1$, it approaches $c_* \approx \pi$.}
    \label{fig:contours}
\end{figure}

\subsection{Harmonic function and Jensen's formula}
We now show that for our choice of $c_*$,
\begin{align}
    \Re \lr{\frac{c_*^2}{2} - 2 c_* \tan \frac{c_*}{2}}
\end{align}
is harmonic on a region $\cR$ in order to apply Jensen's theorem. Since $c_* \to -c_*$ contributes identically to the action, we fix the sign and rewrite our condition on $c_L$ as $F(c_L, \beta \cJ) = 0$ for
\begin{align}
    F(c, b) = c - b \cos \frac{c}{2}.
\end{align}
The holomorphic implicit function theorem states that if
\begin{align}
    F(c_0, b_0) = 0 \quad \text{and} \quad \partial_c F(c_0, b_0) = 1 + \frac{b_0}{2} \sin \frac{c_0}{2} \neq 0
\end{align}
then there is a unique holomorphic solution $c(b)$ with $c(b_0) = c_0$. Let $c_*$ denote the branch characterized by $c_*(b) = b + O(b^3)$ near $b=0$; this is a unique holomorphic solution since
\begin{align}
    \partial_c F(0, 0) = 1 \neq 0.
\end{align}
We seek to identify region $\cR$ including the origin such that $F(c_*, b) = 0$ and $\partial_c F(c_*, b) \neq 0$ on all of $\cR$; by the above, $c_*$ extends to a single-valued holomorphic function on all of $\cR$. To show that $\Re\lr{\frac{c^2}{2} - 2c \tan \frac{c}{2}}$ is harmonic, we note that
\begin{align}
    H(c) = \frac{c^2}{2} - 2c \tan \frac{c}{2}
\end{align}
is holomorphic in $c$ away from the poles of $\tan c/2$, i.e., away from $c = (2k+1)\pi$ for integer $k$. However, if $c=(2k+1)\pi$ then $\cos(c/2) = 0$; then the condition $F(c, b) = 0$ would require $c_*=0$ for all $b$, which contradicts $c_*=(2k+1)\pi$. Hence, $\Re H(c_*)$ is harmonic on $\cR$.

We now find a region $\cR$ such that the pair of conditions $F(c_*, b) = 0, \partial_c F(c_*, b) = 0$ never occur simultaneously. We rewrite this condition as
\begin{align}
    F(c, b) = 0 \text{ and } \partial_c F(c, b) = 0 \quad \Longleftrightarrow \quad b = \frac{c}{\cos(c/2)} \text{ and } 1 + \frac{c}{2} \tan \frac{c}{2} = 0.
\end{align}
obtained by simplifying
\begin{align}
    \partial_c F(c, b) = 1 + \frac{b}{2}\sin\frac{c}{2} = 1 + \frac{1}{2} \frac{c}{\cos(c/2)} \sin \frac{c}{2} = 1 + \frac{c}{2} \tan \frac{c}{2}.
\end{align}
We begin be restricting the values of $c$ that can satisfy these conditions.

\begin{lemma}\label{lem:cpure}
    If $1 + \frac{c}{2} \tan \frac{c}{2} = 0$, then either $\Re c = 0$ or $\Im c = 0$.
\end{lemma}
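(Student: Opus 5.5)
The plan is to substitute $w := c/2 = x + iy$ with $x, y \in \mathbb{R}$ and use the explicit real/imaginary decomposition of $\tan$. The hypothesis $1 + \frac{c}{2}\tan\frac{c}{2} = 0$ becomes $w \tan w = -1$. In particular $w \tan w$ is real, so its imaginary part must vanish. I will then show that this vanishing forces $xy = 0$, which is exactly the claim $\Re c = 0$ or $\Im c = 0$.

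\textbf{Step 1 (imaginary-part condition).} First, $\tan w$ must be finite at any solution, i.e.\ $w$ is not a pole $w = \pi/2 + \pi\ell$. I will use the standard identity
\begin{equation}
    \tan(x+iy) = \frac{\sin 2x + i \sinh 2y}{\cos 2x + \cosh 2y}.
\end{equation}
Its denominator is strictly positive away from the poles: it is $\ge 0$, and it vanishes only when $y = 0$ and $\cos 2x = -1$.

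Multiplying by $w = x+iy$ and taking imaginary parts, the condition $\Im(w\tan w) = 0$ becomes
\begin{equation}
    x \sinh 2y + y \sin 2x = 0.
\end{equation}

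\textbf{Step 2 (ruling out $xy \neq 0$).} Suppose for contradiction that $x \neq 0$ and $y \neq 0$. Dividing by $xy$ gives
\begin{equation}
    \frac{\sinh 2y}{y} = -\frac{\sin 2x}{x}.
\end{equation}
I will compare the two sides using two elementary bounds:
\begin{itemize}
    \item for real $y \neq 0$, $\frac{\sinh 2y}{y} > 2$, since $\sinh t > t$ for $t > 0$ and $\sinh(2y)/y$ is even in $y$;
    \item for real $x \neq 0$, $\bigl|\frac{\sin 2x}{x}\bigr| < 2$, since $|\sin t| < |t|$ for $t \neq 0$.
\end{itemize}
So the left side exceeds $2$ while the right side has absolute value less than $2$, a contradiction. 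Hence $xy = 0$, i.e.\ $\Re c = 0$ or $\Im c = 0$.

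There is no real obstacle in this argument; the only care needed is the correct form of the $\tan$ decomposition and the exclusion of poles. The exclusion of poles is automatic, because the equation $w\tan w = -1$ presupposes that $\tan w$ is finite. The two trigonometric inequalities are strict, which is what makes the contradiction clean.
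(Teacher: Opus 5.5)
Your proof is correct and takes essentially the same route as the paper: both decompose $\tan(x+iy)$ into real and imaginary parts and, assuming $xy\neq 0$, get a contradiction from $|\sin 2x|<2|x|$ and $|\sinh 2y|>2|y|$. Your single equation $x\sinh 2y + y\sin 2x = 0$ is exactly the paper's ratio $\sin 2x/\sinh 2y = -x/y$ after cross-multiplying, and your write-up avoids the typo in the direction of the paper's final inequality.
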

\begin{proof}
    Let $z = \frac{c}{2} = x + iy$ so the condition is
    \begin{align}
        \tan z = -\frac{1}{z}.
    \end{align}
    Equating imaginary and real parts gives equations
    \begin{align}
        \frac{\sinh 2y}{\cos 2x + \cosh 2y} = \frac{y}{x^2 + y^2}, \quad \frac{\sin 2x}{\cos 2x + \cosh 2y} = - \frac{x}{x^2 + y^2}.
    \end{align}
    Assume $y \neq 0$. Dividing the two equations gives
    \begin{align}
        \frac{\sin 2x}{\sinh 2y} = -\frac{x}{y}.
    \end{align}
    At all $x,y \neq 0$, we have $\sinh 2y > 2|y|$ and $|\sin 2x| < 2|x|$. Hence,
    \begin{align}
        \left|\frac{x}{y}\right| = \left|\frac{\sin 2x}{\sinh 2y}\right| > \frac{2|x|}{2|y|} = \frac{|x|}{|y|}
    \end{align}
    which is a contradiction. We conclude that all solutions must have $x=0$ or $y=0$.
\end{proof}

Since $c$ is either purely real or purely imaginary by \Cref{lem:cpure}, then
\begin{align}
    b = \frac{c}{\cos c/2}
\end{align}
is purely real (if $c$ is real) or imaginary (if $c$ is imaginary). However, the branch $c_*$ corresponds to $c_* \approx b$ as $b \to 0$, so it satisfies $-\pi < c_* < \pi$. This satisfies $1 + \frac{c}{2} \tan \frac{c}{2} > 0$, leaving only the case of purely imaginary $c$. This means that we can choose $\cR$ to be the region that excludes the imaginary axis starting from the closest zero.
We compute where the first zero is by numerically solving
\begin{align}
    1 + \frac{c}{2} \tan \frac{c}{2} = 0.
\end{align}
Since $f(y) = y \tanh y$ satisfies $f'(y) > 0$, this has a unique solution satisfying $\Im c > 0$. We obtain $c_0 \approx 2.4 i$, corresponding to
\begin{align}\label{eq:132}
    b_0 = \frac{c_0}{\cos c_0/2} \approx 1.325 i.
\end{align}
Note that this corresponds to the first zero in the spectral form factor as computed in~\cite{khramtsov2021spectral}.
Combining all of the above, we conclude that
\begin{align}
    \log \E \left|\frac{Z(\beta)}{Z(0)}\right|^2 = \exp[-\frac{N}{q^2}\Re\lr{\frac{c_*^2}{2}-2c_*\tan\frac{c_*}{2}}]
\end{align}
is harmonic on the region
\begin{align}
    \cR = \left\{\beta \cJ\,:\, \Re \beta \cJ \neq 0 \text{ or } |\Im \beta| \cJ < 1.3\right\}.
\end{align}
That is, all the zeros lie on the imaginary axis and satisfy $|\Im \beta| \cJ > 1.3$. We can thus complete the SYK analysis applying Jensen's formula on $\cR$ via the disk-to-wedge mapping of \Cref{lem:wedge-avoiding-mapping} (see \Cref{lem:jensen-via-wedge}). 
This completes \Cref{res:SYK_zero_free_region}, leading to \Cref{thm:Barvinok_SYK}. 

\subsection{Local observables} \label{sec:SYK_local_observables}
Above, we showed the zero-free region of the SYK partition function $Z(\beta) = \Tr[e^{-\beta H}]$. Here, we show that perturbing $Z(\beta, \lambda) = \Tr[e^{-\beta(H + \lambda O)}]$ for any local observable $O$ and any $\lambda = o(1)$ preserves the zero-free region.
This implies that
\begin{align}
    \langle O \rangle_\beta = - \frac{1}{\beta} \partial_\lambda \log Z(\beta,\lambda)\Bigg|_{\lambda=0}
\end{align}
can be estimated by evaluating the estimate of $\log Z(\beta,\lambda)$ at small $\lambda$ and applying a finite difference with $\log Z(\beta, 0)$ to estimate the derivative, as shown in \Cref{lem:QMV_from_QPF}.

Since the computation with $H \to H +\lambda O$ is largely identical to the previous computation, we only provide a brief sketch. The path integral on two copies gives
\begin{align}
    |Z(\beta,\lambda)|^2 = \int D\psi_i^a \exp[-\sum_a s_{aa} \int_0^{|\beta|} d\tau \lr{\frac{1}{2}\sum_j \psi_j^a \partial_\tau \psi_j^a + i^{q/2} \sum_{1 \leq i_1 < \cdots < i_q \leq N} J_{i_1\cdots i_q}\psi_{i_1}^a \cdots \psi_{i_q}^a + \lambda O^a}].
\end{align}
Taking the expectation over disorder, we have
\begin{align}
    \E |Z(\beta, \lambda)|^2 &= \int D\psi_i^a \exp[-\sum_a s_{aa} \int_0^{|\beta|} \lr{\frac{1}{2}\sum_j \psi_j^a \partial_\tau \psi_j^a + \lambda O^a}] \nonumber\\
    &\quad \times \int DG\, D\Sigma \exp[-\frac{N}{2} \sum_a \int_0^{|\beta|} d\tau_1\,d\tau_2\, \lr{\Sigma_{aa}(\tau_1,\tau_2) G_{aa}(\tau_1,\tau_2) - \frac{2^{q-1}\cJ^2}{q^2} s_{aa} G_{aa}(\tau_1,\tau_2)^q}].
\end{align}
Assuming $O$ is $m$-local, the first integral can be evaluated via Gaussian integration over the $N-m$ fermions not in $O$; this gives the same Pfaffian term as before. The remaining $m$ fermions produce a different contribution to the action, but since $m=O(1)$ this difference is a vanishingly small correction. That is, for
\begin{align}
    \E |Z(\beta, \lambda)|^2 = \int DG \, D\Sigma\, \exp[-N S[G, \Sigma]],
\end{align}
we have action
\begin{align}
    S &= \frac{1}{2} \int_0^{|\beta|} d\tau_1\,d\tau_2\, \lr{\Sigma_{aa}(\tau_1,\tau_2) G_{aa}(\tau_1,\tau_2) - \frac{2^{q-1}\cJ^2}{q^2} s_{aa} G_{aa}(\tau_1,\tau_2)^q} \nonumber\\
    &\quad - \lr{1 - O\lr{\frac{\lambda}{N}}}\log \pf\lr{\partial_\tau -\hat\Sigma_{aa}} + O\lr{\frac{\lambda}{N}},
\end{align}
which is an $O(\lambda/N)$ correction to the previous action \eqref{eq:S} since $m = O(1)$. Hence, the saddle point equations are unchanged, yielding the same solutions $G_*, \Sigma_*$ as before. In particular, for any $m=O(1)$, the zero-free region is preserved since
\begin{align}
    \E |Z(\beta, \lambda)|^2 = C_\lambda \cdot \E |Z(\beta, 0)|^2
\end{align}
for some $C_\lambda$ satisfying $|\log C_\lambda| = O(\lambda)$. In Jensen's formula applied to the wedge \eqref{eq:jensen-phi}, $C_\lambda$ contributes a remainder term: if $N_Z$ is the expected number of zeros of the unperturbed partition function (at constant $\beta$), the expected number of zeros of the perturbed partition function is upper-bounded by $N_Z + O(\lambda) = N_Z + o(1)$. We confirm this result numerically in the following section, which checks that the zero-free region of $Z(\beta, 0)$ is preserved in $Z(\beta, \lambda)$.

\subsection{Numerical experiments}\label{app:num}
We fix a plotting rectangle
\begin{equation}
R=[a,b]+i[c,d]\subset\mathbb{C}
\end{equation}
and, after verifying that $Z(\beta)\neq 0$ for $\beta\in\partial R$, count the number of zeros in $R$ by numerically integrating
\begin{equation}
N_R \;=\; \frac{1}{2\pi i}\oint_{\partial R} \partial_\beta \log Z(\beta)\, d\beta,
\label{eq:arg-principle}
\end{equation}
via a Riemann sum along the edges of $\partial R$. We check that our integral error is sufficiently small to produce an integer answer for $N_R$: for example, when $\beta \in [-300, 300] + i[-300, 300]$, numerical integration gives $N_R = 263.9997$. From the phase/magnitude plot of $Z$, we measure a the phase differences around each $\beta$ in the plot and establish a guess for $\beta_\mathrm{seed}$ where a complex zero is located. We then refine smaller rectangles around $\beta_\mathrm{seed}$
\begin{equation}
R_{\text{loc}} = [\Re\beta_{\text{seed}}-\Delta_x,\Re\beta_{\text{seed}}+\Delta_x]
+ i[\Im\beta_{\text{seed}}-\Delta_y,\Im\beta_{\text{seed}}+\Delta_y]
\end{equation}
and iteratively apply Jensen's formula to obtain $N_{R_\mathrm{loc}}$. Once $N_{R_\mathrm{loc}}=1$, we apply Newton's method to solve $Z(\beta)=0$ via 
\begin{equation}
\beta_{n+1}=\beta_n-\frac{Z(\beta_n)}{\partial_\beta Z(\beta_n)}.
\end{equation}
This is repeated until the all $N_R$ zeros in $R$ are explicitly identified. In the main text, we show the square of $\beta \in [-30, 30] + i[-30, 30]$; below, we extend this to a side length of 600. We similarly numerically verified that the zero-free region of the partition function is preserved upon perturbing by a local observable.

\begin{figure}[H]
    \centering
    \includegraphics[width=0.49\linewidth]{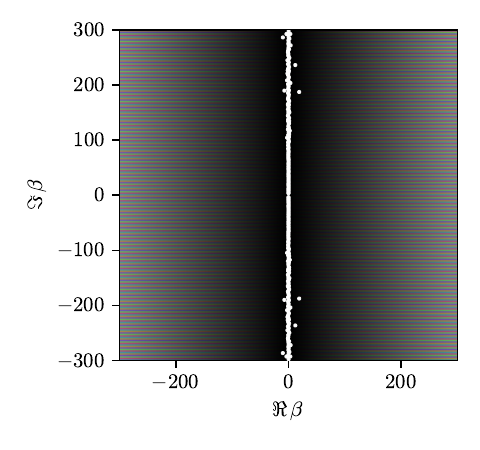}
    \caption{Complex zeros of $Z(\beta)$ obtained from exact diagonalization of a single $N=30, q=4$ SYK instance ($J=1$). Brightness indicates magnitude and hue indicates phase; zeros are indicated by white markers. Perturbing the Hamiltonian by a local observable (e.g., $H \to H + 0.01 i \psi_1 \cdots \psi_6$) yields a similar plot, where the complex zeros shift by a small but visually imperceptible amount.}
    \label{fig:sykzeros300}
\end{figure}

We repeat this procedure for several other models. In Fig.~\ref{fig:glasses}, we show two spin glasses known to have a glass transition that obstructs sampling algorithms~\cite{el2022sampling,el2025sampling,anschuetz2025efficient,zlokapa2025average}, and we numerically verify that complex zeros appear at similar values of $N$ as we take for the SYK model. In Fig.~\ref{fig:glasses}a, we show an Ising $p$-spin model
\begin{align}
    H = \sum_{1 \leq i_1 < \cdots < i_p \leq n} J_{i_1\cdots i_p} \sigma^Z_{i_1} \cdots \sigma^Z_{i_p}
\end{align}
for i.i.d. coefficients $J_{i_1\cdots i_p} \sim \cN(0, p! / 2N^{p-1})$ at $p=3$. This has dynamical and static 1RSB phases at constant temperature~\cite{montanari2003nature,ferrari2012two,talagrand2000rigorous,gamarnik2025shattering,alaoui2024near}. In Fig.~\ref{fig:glasses}b, we show a quantum Heisenberg model
\begin{align}
    H = \sum_{i < j} J_{ij} \vec S_i \cdot \vec S_j
\end{align}
for SU(2) spins $\vec S^2 = S(S+1)$ and Gaussian coefficients $J_{ij} \sim \cN(0,1/N)$. This has a full RSB phase transition at constant temperature~\cite{georges2001quantum,christos2022spin,kavokine2024exact} and hence a similar algorithmic obstruction. For both of these models, we see complex zeros emerging completely unlike the SYK model, which ultimately obstructs Barvinok's algorithm. Note that our SYK model analysis at $N=30$ corresponds to a Hilbert space of dimension $2^{15}$, and thus we take an $N=15$ quantum Heisenberg model.

We also consider a $3 \times 3$ 2D Fermi-Hubbard model with periodic boundary conditions
\begin{align}
    H = -t \sum_{\langle i,j \rangle} \sum_{\sigma \in \{\uparrow,\downarrow\}} \lr{c_{i\sigma}^\dagger c_{j\sigma} + c_{j\sigma} c_{i\sigma}^\dagger} + U\sum_i n_{i\uparrow}n_{i\downarrow},
\end{align}
where $n_{i\sigma} = c_{i\sigma}^\dagger c_{i\sigma}$. Since the number of orbitals is $2N_\mathrm{sites}$, the Hilbert space dimension is $2^{18}$. We choose $t=1, U=8, \mu=2$ and compute the resulting doping at $\beta=10$ via
\begin{align}
    \delta = 1 - \frac{\langle N \rangle}{N_\mathrm{sites}}, \quad \langle N \rangle = \frac{1}{Z} \sum_{N,\alpha} N e^{-\beta\lr{E_{N,\alpha}- \mu N}}
\end{align}
for $N = \sum_i \lr{n_{i\uparrow} + n_{i\downarrow}}$ and eigenvalues $E_{N,\alpha}$ corresponding to the $N$-particle subspace with eigenvalues indexed by $\alpha$. Note that $\delta = 0$ corresponds to half-filling; because $\delta > 0$ the hole-doped model has a sign problem. Fig.~\ref{fig:fh} suggests that a disk-to-strip map can be used to compute thermal expectations at similar temperatures to state-of-the-art experimental results~\cite{xu2025neutral}, and more tailored analytic continuations (e.g., patching disks~\cite{eldar2018approximating}) may yield classical algorithms with better parameters.

\begin{figure}[H]
    \centering
    \begin{minipage}{0.49\textwidth}
        \centering
        \includegraphics[width=\textwidth]{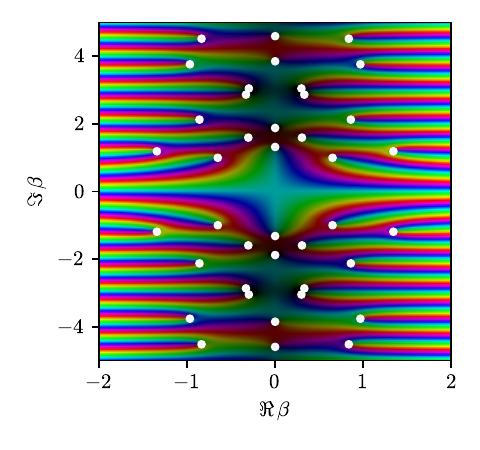}\\
        \small\textbf{(a)} Ising 3-spin model ($N=18$)
    \end{minipage}
    \hfill
    \begin{minipage}{0.49\textwidth}
        \centering
        \includegraphics[width=\textwidth]{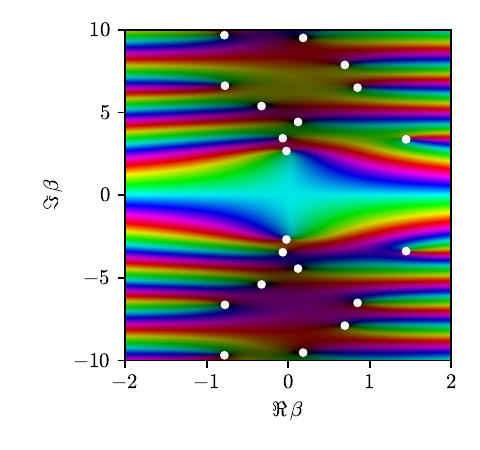}\\
        \small\textbf{(b)} Quantum Heisenberg model ($N=15$)
    \end{minipage}

    \caption{Complex zeros of $Z(\beta)$ for two models with a glass phase. Brightness indicates magnitude ($\gamma$-corrected for visual clarity) and hue indicates phase; zeros are indicated by white markers. In the Ising model, zeros appear outside a circle of constant radius. In the quantum Heisenberg model, a line of zeros appears parallel to a line of zeros around $\Re \beta = 0$ and with similar density. These are distinct from the SYK model (Fig.~\ref{fig:sykzeros} in the main text and Fig.~\ref{fig:sykzeros300} above), which exclusively has zeros on the $\Re \beta = 0$ axis for similar system sizes.}
    \label{fig:glasses}
\end{figure}

\begin{figure}[H]
    \centering
    \begin{minipage}{0.49\textwidth}
        \centering
        \includegraphics[width=\textwidth]{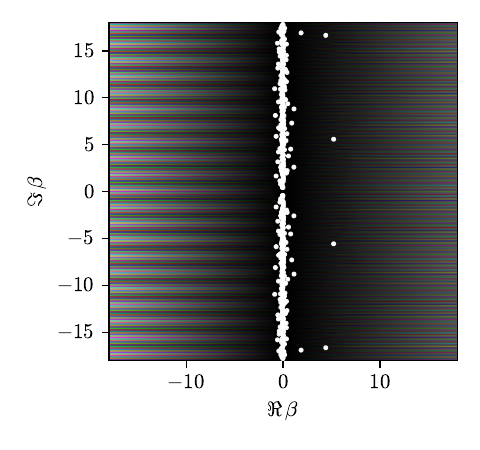}\\
        \small\textbf{(a)} $|\Re \beta|, |\Im \beta| \leq 18$
        \label{fig:fhbeta18}
    \end{minipage}
    \hfill
    \begin{minipage}{0.49\textwidth}
        \centering
        \includegraphics[width=\textwidth]{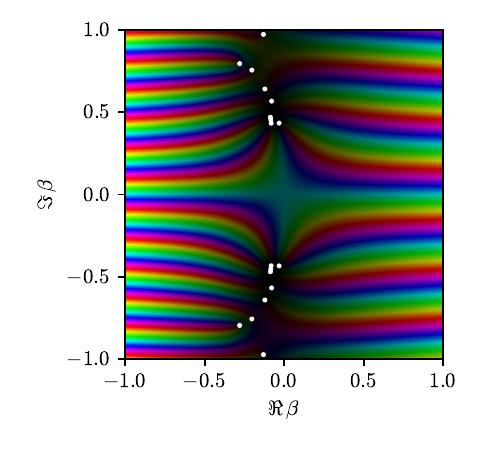}\\
        \small\textbf{(b)} $|\Re \beta|, |\Im \beta| \leq 1$
        \label{fig:fhbeta1}
    \end{minipage}

    \caption{Complex zeros of $Z(\beta)$ for the $3\times 3$ doped Fermi-Hubbard model with $U/t=8$ and $\delta \approx 0.1$ ($\mu=2$) obtained from exact diagonalization. Brightness indicates magnitude and hue indicates phase; zeros are indicated by white markers. A zero-free strip of height $|\Im \beta| \leq 0.4$ exists for all $-18 < \Re \beta < 18$.}
    \label{fig:fh}
\end{figure}

\section{Out-of-time-order correlators}
\label{app:otoc}
We consider an $n$-qubit system with Hilbert space $\mathcal{H}=(\mathbb{C}^2)^{\otimes n}$.
We consider the task of estimating OTOC function $f_{2\OTOCorder}(t)$ as formalized in \cite{google2025observation,king2025simplified} and defined below.
Let
\begin{equation}
B(s):=e^{iHs} B e^{-iHs},\qquad X_{2\OTOCorder}(s):=(B(s)M)^{2\OTOCorder},
\end{equation}
where $B$ and $M$ are single-qubit Pauli observables acting on distinct qubits $b\neq m$. 
At times, we will denote $X_{2\OTOCorder}(s)$ as simply $X(s)$ to simplify notation.
Throughout, $H=\sum_{e=1}^m h_e$ is a $\locality$-local Hamiltonian satisfying:
\begin{itemize}
\item (Locality) Each $h_e$ acts nontrivially on at most $\locality$ qubits.
\item (Bounded degree) Each qubit participates in at most $D$ local terms $h_e$.
\item (Uniform norm bound) $\|h_e\|\le J$ for all $e$.
\end{itemize}
We aim to approximate the OTOC function defined below for real valued $t$:
\begin{equation}
    f_{2\OTOCorder}(t) := \Tr \left[\rho X_{2\OTOCorder}(t)\right].
\end{equation}
Above, $\rho$ is a specified initial state.
Note that for real valued $t$, $|f_{2\OTOCorder}(t)|\leq 1$ and we need to approximate $f_{2\OTOCorder}(t)$ to $1/\operatorname{poly}(n)$ additive error to compete with quantum methods. 
This is a marked difference from partition functions $Z_H(\beta)$ where $|Z_H(\beta)|=\exp(\Theta(n))$ and relative error is often desirable (approximating $Z_H(\beta)$ to $1/\operatorname{poly}(n)$ additive error is a much harder problem).

We will run Barvinok's interpolation on the function $\log(2-f_{2\OTOCorder}(s))$, the constant $2$ being chosen by convenience to form a zero-free region.
Alternatively, we could interpolate directly the function $f_{2\OTOCorder}(s)$ using \Cref{lem:disk-analytic}, but we stick to the form $\log(2-f_{2\OTOCorder}(s))$ to frame our results in the zero-free region setting.
Namely, the zero-free region of $2-f_{2\OTOCorder}(s)$ will be closely related to the region in which $|f_{2\OTOCorder}(s)|$ is analytic and bounded. 
Outside the zero-free region, $f_{2\OTOCorder}(s)$ can grow as $|f_{2\OTOCorder}(s)| = \exp(\Omega(n))$, and interpolation will fail due to the exponential blow up.
We show below that $2-f_{2\OTOCorder}(s)$ is zero-free in a real valued strip. 
We will first show the main result regarding the zero-free region and relegate proofs of the bounds on the operators in OTOC which give this zero-free region to later sections.
\begin{lemma}[Zero-free strip for $2-f_{2\OTOCorder}$]\label{lem:strip-zero-free}
For fixed $\eta\in(0,1)$, define
\begin{equation}
    \sigma_\eta:=\frac{1-(2-\eta)^{-1/(2\OTOCorder)}}{2 \locality JD} .
\end{equation}
Then for all $s$ with $|\Im s|\le \sigma_\eta$,
\begin{equation}
|f_{2\OTOCorder}(s)|\le \|X_{2\OTOCorder}(s)\|\le (1-2 \locality JD|\Im s|)^{-2\OTOCorder}\le 2-\eta,
\end{equation}
so $2-f_{2\OTOCorder}(s)\neq 0$ throughout the horizontal strip $\{s:\ |\Im s|\le \sigma_\eta\}$.
\end{lemma}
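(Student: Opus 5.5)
The plan is to establish the three inequalities in the chain one at a time, in the order they appear; the final claim $2-f_{2\OTOCorder}(s)\neq 0$ then follows at once from $|f_{2\OTOCorder}(s)|\le 2-\eta<2$.

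\emph{First inequality.} For complex $s$ the trace pairing with the state $\rho$ gives $|\Tr[\rho X]|\le \|\rho\|_1\|X\|=\|X\|$. So it remains to bound $\|X_{2\OTOCorder}(s)\|$. By submultiplicativity and $\|M\|=1$, we get $\|X_{2\OTOCorder}(s)\|\le \|B(s)\|^{2\OTOCorder}$. The problem is therefore reduced to the single-operator bound
\begin{equation}
\|B(s)\|\le \frac{1}{1-2\locality JD|\Im s|}\qquad\text{for } 2\locality JD|\Im s|<1.
\end{equation}

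\emph{Second inequality (the main obstacle).} Write $s=t+i\tau$. Real-time conjugation $e^{iHt}(\cdot)e^{-iHt}$ is an isometry, so $\|B(s)\|=\|e^{-H\tau}Be^{H\tau}\|$, and only imaginary time matters. I would expand this in the nested-commutator series
\begin{equation}
e^{-H\tau}Be^{H\tau}=\sum_{\ell\ge0}\frac{(-\tau)^\ell}{\ell!}\ad_H^\ell(B).
\end{equation}
The key step is a combinatorial bound of the form $\|\ad_H^\ell(B)\|\le \ell!\,(2\locality JD)^\ell$, which I expect to be the hardest part. I would prove it by expanding $\ad_H^\ell B$ as a sum over sequences of terms $h_{e_1},\dots,h_{e_\ell}$, where each $h_{e_j}$ must overlap the support grown so far. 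After $j$ steps the support has at most $1+j(\locality-1)\le (j+1)\locality$ sites, so there are at most $(j+1)\locality D$ choices at step $j+1$. Each commutator contributes a factor $2J$. Multiplying these gives the bound $\prod_j 2J(j+1)\locality D\cdot$(constant)$=\ell!\,(2\locality JD)^\ell$, possibly after adjusting constants. Summing the resulting geometric series gives $\sum_\ell (2\locality JD|\tau|)^\ell=(1-2\locality JD|\tau|)^{-1}$. Raising this to the power $2\OTOCorder$ gives the second inequality.

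If the counting produces a slightly worse constant, I would instead redefine the effective rate so that it matches $2\locality JD$. One option is to count terms by the site on which they first touch the support, using each term's $\le\locality$ sites together with degree $D$.

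\emph{Third inequality.} This is pure algebra. The function $(1-2\locality JD|\Im s|)^{-2\OTOCorder}$ is increasing in $|\Im s|$. At $|\Im s|=\sigma_\eta$ we have $1-2\locality JD\sigma_\eta=(2-\eta)^{-1/(2\OTOCorder)}$, so the expression equals exactly $2-\eta$. This also confirms that $2\locality JD|\Im s|<1$ throughout the strip, so the series above converges there.
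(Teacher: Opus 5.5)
Your proposal is correct and follows the paper's route: the paper proves the same bound $\|\ad_H^r(B)\|\le (2\locality JD)^r r!$ by the same support-growth counting (Lemma~\ref{lem:comb}), sums the geometric series for $B(ib)$ after discarding the unitary real-time conjugation (Lemma~\ref{prop:B-bound-2k}), and then solves $(1-2\locality JD|b|)^{-2\OTOCorder}\le 2-\eta$ for $|b|$. Your count $\prod_{j=0}^{\ell-1}(j+1)\locality D=\ell!\,(\locality D)^\ell$ already gives exactly the constant $2\locality JD$, so the hedges about adjusting or redefining the rate are unnecessary; redefining it would not be permitted anyway, since $\sigma_\eta$ fixes that rate.
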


\begin{proof}
Apply \Cref{prop:B-bound-2k} with $b=\Im s$ obtaining $(1-2 \locality JD|b|)^{-2\OTOCorder}\le 2-\eta$. 
Solving this for $|b|$ completes the proof.
\end{proof}

Using this zero-free strip, we can estimate $\log(2-f_{2\OTOCorder}(t))$ by applying interpolation to the composed function $F(z) = \log(2-f_{2\OTOCorder}(t\phi(z)))$, where $\phi$ is a mapping from the disk to the strip such that $\phi(0)=0$ and $\phi(1)=1$ from \Cref{lem:barvinok-strip-map}.
This procedure follows from standard practice in using Barvinok's interpolation and its performance is shown below \cite{barvinok2016combinatorics}.

\begin{proposition}[Zero-free disk and Taylor truncation for OTOC via strip mapping]\label{prop:otoc-strip-trunc}
Let $H=\sum_e h_e$ be a $\locality$-local, degree-$D$ Hamiltonian on $n$ qubits with $\|h_e\|\le J$. 
Fix a real target time $t>0$.
For $\eta\in(0,1)$, let $\sigma := \sigma_\eta / t$ where $\sigma_\eta$ is taken from \Cref{lem:strip-zero-free}.
Then there exist constants $C_1,C_2>0$ and a polynomial $\phi$ from \Cref{lem:barvinok-strip-map} with
\begin{equation}
    \phi(0)=0,\qquad \phi(1)=1,\qquad \deg(\phi)\le \exp(C_1 t),
\end{equation}
such that for all $|z|\le \hat r$,
\begin{equation}
    |\Im \phi(z)| \le \sigma,\qquad \hat r=1+\exp(-C_2 t).
\end{equation}
Define the composed analytic function on $\{|z|\le \hat r\}$, equal to
\begin{equation}
    F(z) :=  \log\left(2-f_{2\OTOCorder}\left(t \phi(z)\right)\right).
\end{equation}
Then, $F$ is holomorphic on the disk $\overline{\D_{\hat r}}$ and $F(1)=\log(2-f_{2\OTOCorder}(t))$. Let $F(z)=\sum_{\ell\ge 0} a_\ell z^\ell$ be the Maclaurin series and denote its order $K$ truncation as $T_K(1):=\sum_{\ell=0}^K a_\ell$. Then for any $\epsilon\in(0,1)$, choosing
\begin{equation}\label{eq:K-OTOC-prop}
    K \ge e^{\Theta(t)}O( \log(1/\epsilon) ),
\end{equation}
guarantees $\left|F(1)-T_K(1)\right| \le \epsilon$.
\end{proposition}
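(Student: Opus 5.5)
The plan is to compose the strip map with the zero-free strip of \Cref{lem:strip-zero-free} and then apply the Cauchy-estimate truncation bound of \Cref{lem:disk-analytic} on a slightly smaller disk.

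\emph{Step 1: the strip map.} Since $\sigma_\eta$ is a constant depending only on $\eta,\locality,J,D,\OTOCorder$, the width $\sigma=\sigma_\eta/t$ is small for large $t$, so $1/\sigma=\Theta(t)$. Invoking \Cref{lem:barvinok-strip-map} with this width gives a polynomial $\phi$ with $\phi(0)=0$ and $\phi(1)=1$. It has degree $O(e^{C_1'/\sigma})=\exp(C_1 t)$, and it satisfies $|\Im\phi(z)|\le\sigma$ for $|z|\le\hat r=1+e^{-\Theta(1/\sigma)}=1+\exp(-C_2t)$.

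\emph{Step 2: holomorphy of $F$.} For $|z|\le\hat r$ we have $|\Im(t\phi(z))|\le t\sigma=\sigma_\eta$. By \Cref{lem:strip-zero-free} this gives $|f_{2\OTOCorder}(t\phi(z))|\le 2-\eta$, and therefore $\Re(2-f_{2\OTOCorder}(t\phi(z)))\ge\eta>0$. The function $s\mapsto f_{2\OTOCorder}(s)=\Tr[\rho X_{2\OTOCorder}(s)]$ is entire, since in finite dimension $e^{\pm iHs}$ is entire. Hence $2-f_{2\OTOCorder}(t\phi(z))$ is holomorphic on a neighborhood of $\overline{\D_{\hat r}}$ and takes values in the right half-plane. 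The principal logarithm is therefore holomorphic there, so $F$ is holomorphic on $\overline{\D_{\hat r}}$. The identity $F(1)=\log(2-f_{2\OTOCorder}(t))$ follows directly from $\phi(1)=1$.

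\emph{Step 3: bounding $F$ and the truncation order.} On the disk, $|F(z)|\le|\log|2-f_{2\OTOCorder}||+\pi/2$, and the modulus satisfies $\eta\le|2-f_{2\OTOCorder}|\le 4-\eta$. Thus $M:=\sup_{|z|\le\hat r}|F(z)|\le\max(\log(1/\eta),\log 4)+\pi/2=O(1)$, independent of $n$ and $t$.

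We then apply \Cref{lem:disk-analytic} with $\rho=\hat r$, shrinking slightly if needed and using the open-disk version on $\D_{\hat r'}$ with $\hat r'$ marginally larger, which still fits inside the strip after adjusting constants. The condition required is
\[
K\ge\frac{\log M-\log(\epsilon(1-1/\hat r))}{\log\hat r}.
\]
Here $\log\hat r=\Theta(e^{-C_2t})$ and $1-1/\hat r=\Theta(e^{-C_2t})$. The numerator is therefore $O(\log(1/\epsilon))+O(t)$ and the denominator is $e^{-\Theta(t)}$. This gives $K\ge e^{\Theta(t)}O(\log(1/\epsilon))$; the additive $O(t)e^{\Theta(t)}$ term is absorbed into the $e^{\Theta(t)}$ prefactor once $\epsilon<1$ is fixed, or one notes $\log(1/\epsilon)\ge$ const.

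\emph{Main obstacle.} The main obstacle is bookkeeping. The constants in \Cref{lem:barvinok-strip-map} must be tracked so that $\hat r-1=e^{-\Theta(t)}$ exactly, and the closed disk must lie within the region where the strip bound holds. In particular, I must check that the strip-map lemma's bound on $\Im\phi$ remains valid on a disk slightly larger than $\hat r$, so that Cauchy's estimate can be used with some $\rho$ strictly between $1$ and the radius of holomorphy. If this fails, I will instead take $\rho=(1+\hat r)/2$ inside $\D_{\hat r}$, which changes only the constant $C_2$.
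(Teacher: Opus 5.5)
Your proof is correct and follows essentially the same route as the paper: the strip map of \Cref{lem:barvinok-strip-map} combined with the zero-free strip of \Cref{lem:strip-zero-free} gives $\Re(2-f_{2\OTOCorder})\ge\eta$, so the principal logarithm is holomorphic and $F$ is bounded by an $O(1)$ constant, after which \Cref{lem:disk-analytic} finishes. The paper bounds $|F|\le\log 2+(2-\eta)/\eta$ via $|\log(1-w)|\le|w|/(1-|w|)$ and takes $\varrho=(1+\hat r)/2$ (your fallback), whereas your bound $|\log|2-f||+\pi/2$ is equally valid; moreover, your stated ``main obstacle'' is a non-issue, because your Step~2 already gives holomorphy on an open neighborhood of $\overline{\D_{\hat r}}$, so the Cauchy estimate on $|z|=\hat r$ only needs the bound on the closed disk and never needs the strip bound on a larger disk.
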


\begin{proof}
Following \Cref{lem:strip-zero-free}, define
\begin{equation}
    \sigma_\eta := \frac{1-(2-\eta)^{-1/(2\OTOCorder)}}{c_0}.
\end{equation}
\Cref{lem:strip-zero-free} states that $|2-f_{2\OTOCorder}(s)|\ge \eta$ whenever $|\Im s|\le \sigma_\eta$, which gives a zero-free horizontal strip for $2-f_{2\OTOCorder}$.
By \Cref{lem:barvinok-strip-map}, for $\sigma=\sigma_\eta/t$ there exists a polynomial $\phi$ of degree $\deg(\phi)\le \exp(C_1/\sigma)=\exp(C_1 t)$ such that $|\Im \phi(z)|\le \sigma$ for all $|z|\le \hat r$ with $\hat r=1+\exp(-C_2/\sigma)=1+\exp(-C_2 t)$, and $\phi(0)=0$, $\phi(1)=1$.
For $|z|\le \hat r$ we have $|\Im(t\phi(z))| \le \sigma_\eta$ implying $|f_{2\OTOCorder}(t\phi(z))|\le 2-\eta$ and $2-f_{2\OTOCorder}$ stays $\eta$-away from $0$. 
Choose the principal logarithm and $F$ is analytic on the closed disk $|z|\le \hat r$.

Setting $\varrho=(1+\hat r)/2$, we have for $|z| \le \varrho$,
\begin{equation}
    |F(z)|=\left|\log 2+\log(1-\tfrac12 f_{2\OTOCorder}(t\phi(z)))\right|
    \le \log 2+\frac{|f_{2\OTOCorder}(t\phi(z))|/2}{1-|f_{2\OTOCorder}(t\phi(z))|/2}
    \le \log 2+\frac{2-\eta}{\eta} =: C_\eta,
\end{equation}
using $|\log(1-w)|\le |w|/(1-|w|)$ for $|w|<1$.
Applying \Cref{lem:disk-analytic} with $R=\hat r$, $\rho=\varrho$, and $M_\varrho\le C_\eta$ we get
\begin{equation}
    |F(1)-T_K(1)| \le  \frac{C_\eta}{\varrho^{K+1}(1-1/\varrho)}.
\end{equation}
Solving for $K$ yields \Cref{eq:K-OTOC-prop}. 

\end{proof}

The upshot of the above is that we can efficiently estimate OTOC whenever $t=O(\log \log n)$.
\begin{corollary}\label{cor:barvinok-final}
With $\eta \in (0,1)$ constant, $\hat r=1+\exp(-\Theta(t))$ and $F(z)=\log(2-f_{2\OTOCorder}(t\phi(z)))$ as in \Cref{prop:otoc-strip-trunc}, truncating the Taylor series at order $K=O\left(e^{\Theta(t)}\log(1/\epsilon) \right)$ yields an $\epsilon$-additive approximation to $f_{2\OTOCorder}(t)$. Computing the first $K$ coefficients costs $O\left(m^{K} \mathrm{poly}(n,K)\right)$ time.
In particular, if $t=O(\log \log n)$, then $K=O(\log (n) \log(1/ \epsilon))$ resulting in a quasi-polynomial time algorithm.
\end{corollary}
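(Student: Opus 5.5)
The plan is to combine the truncation guarantee of \Cref{prop:otoc-strip-trunc} with two further steps: exponentiating the truncated series, and accounting for the cost of computing its coefficients.

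\emph{Step 1: error transfer.} Invoke \Cref{prop:otoc-strip-trunc} with accuracy $\epsilon' := \epsilon/6$. This gives $K = e^{\Theta(t)} O(\log(1/\epsilon))$ with $|F(1) - T_K(1)| \le \epsilon'$, where $F(1) = \log(2 - f_{2\OTOCorder}(t))$. Output the estimate $\widetilde f := 2 - e^{T_K(1)}$. Then
\begin{equation}
|\widetilde f - f_{2\OTOCorder}(t)| = |e^{F(1)}|\,\bigl|e^{T_K(1)-F(1)} - 1\bigr| \le |2 - f_{2\OTOCorder}(t)|\,(e^{\epsilon'}-1) \le 3 \cdot 2\epsilon' = \epsilon .
\end{equation}
This uses $|f_{2\OTOCorder}(t)| \le 1$ for real $t$, so that $|2 - f_{2\OTOCorder}(t)| \le 3$, together with $e^{x} - 1 \le 2x$ for $x \le 1$.

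\emph{Step 2: computing the coefficients.} Expand in the Heisenberg picture:
\begin{equation}
B(s) = \sum_{r \ge 0} \frac{(is)^r}{r!}\,\ad_H^r(B).
\end{equation}
The $r$-th Taylor coefficient of $f_{2\OTOCorder}(s)$ at $s=0$ is then a sum over ways to distribute $r$ commutator orders among the $2\OTOCorder$ copies of $B(s)$. Each such term is $\Tr[\rho\,\cdot\,]$ of a product of nested commutators.

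Expanding $\ad_H^{r}(B)$ term by term gives at most $m^{r}$ Pauli (or local-operator) products. For constant $\OTOCorder$, the order-$r$ coefficient therefore costs $O(m^r\,\mathrm{poly}(n,r))$. This assumes, as is implicit in the setting, that $\rho$ is a state whose Pauli expectations can be evaluated efficiently, for instance a product state.

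Because $\phi(0) = 0$, the first $K$ Maclaurin coefficients of $f_{2\OTOCorder}(t\phi(z))$ depend only on the first $K$ coefficients of $f_{2\OTOCorder}$ together with those of $\phi$. The coefficients of $\phi$ are explicit from \Cref{lem:disk-to-strip-barvinok}, so the composition costs $\mathrm{poly}(K)$. Passing from $2 - f$ to $\log(2-f)$ uses the standard power-series logarithm recurrence, which costs $O(K^2)$, as in \Cref{thm:disk-barvinok}. The total cost is $O(m^K\,\mathrm{poly}(n,K))$.

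\emph{Step 3: the regime $t = O(\log\log n)$.} Take $t \le c \log\log n$ with $c$ small enough relative to the hidden constant in $e^{\Theta(t)}$ that $e^{\Theta(t)} \le \log n$. Then $K = O(\log(n)\log(1/\epsilon))$. Since $m = \mathrm{poly}(n)$ and $\epsilon = 1/\mathrm{poly}(n)$, the runtime is $m^K = n^{O(\log^2 n)}$, which is quasi-polynomial.

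The main obstacle I expect is Step 2. Making the coefficient cost rigorous requires bounding the number and locality of the operators produced by nested commutators across $2\OTOCorder$ insertions. It also requires stating precisely the access model for $\rho$. The error transfer in Step 1 and the runtime arithmetic in Step 3 are routine.
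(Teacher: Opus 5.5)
Your proposal is correct and follows the route the paper leaves implicit: the corollary is stated without a separate proof, as a consequence of \Cref{prop:otoc-strip-trunc} together with the moment-cost argument of \Cref{thm:disk-barvinok}, and you additionally supply the details the paper omits (the transfer from $\log(2-f_{2\OTOCorder})$ back to $f_{2\OTOCorder}$ via $|2-f_{2\OTOCorder}(t)|\le 3$, the composition with the explicit polynomial $\phi$, and the $m^r$ count of nested commutators). The worry you raise about Step 2 is resolved exactly as in \Cref{thm:disk-barvinok}: once $H$ is written as a sum of $m$ Pauli monomials, each nested commutator $\ad_{P_{e_r}}\cdots\ad_{P_{e_1}}(B)$ is a single scaled Pauli string, so each term's expectation in a product or stabilizer state $\rho$ costs $\mathrm{poly}(n,r)$.
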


Barvinok's algorithm in this setting performs interpolation on the function $F(z) :=  \log\left(2-f_{2\OTOCorder}\left(t \phi(z)\right)\right)$ where $\phi:\mathbb{D}_R\to \mathbb{C}$ is a map from the disk to the strip $\{z:|\Im z| \leq \sigma \}$ for $\sigma = O(1/t)$.
In \Cref{app:disk-to-strip-mapping}, we showed that the disk to strip mapping is optimal in its parameter $\hat r=1+e^{-\Theta(1/\sigma)}$ which is the dominating factor in the truncation of $K = e^{\Theta(t)}O( \log(1/\epsilon) )$ in \eqref{eq:K-OTOC-prop} with respect to the time parameter. 
This lends evidence for potential quantum advantage for OTOC in settings where $t$ grows with $n$, such as in lattice OTOC problems.
\begin{remark}[Limitations of polynomial truncation for growing $t$]
    With truncation at degree $K = e^{\Theta(t)}O( \log(1/\epsilon) )$ in \Cref{eq:K-OTOC-prop}, the runtime of Barvinok's interpolation is doubly exponential in $t$.
    Every holomorphic map from the disk $\mathbb{D}_R$ to the strip of height $\sigma$ must satisfy $R\leq 1+e^{-\Omega(1/\sigma)}$ (see \Cref{prop:disk-strip-optimality}) as $\sigma = \Theta(1/t)$. 
    Therefore, any exponential improvements to the runtime of estimating OTOC using Barvinok's interpolation with respect to the time parameter $t$ must somehow incorporate a better zero-free region.
\end{remark}

\subsection{Bounds on operators in OTOC}

Let $\ad_A(\cdot)=[A,\cdot]$ and let $\ad_A^r$ denote the $r$-fold iteration of the $\ad$ operation. 
Note that
\begin{equation}
B(s)=e^{is  \ad_H}(B)=\sum_{r\ge 0}\frac{(is)^r}{r!} \ad_H^r(B).
\end{equation}
We bound the size and norm of $\ad_H^r(B)$ using the bounded degree and locality assumptions on $H$. 
The following bound on $\|\ad_H^r(B)\|$ is standard and can be found in other works, e.g. see \cite{alhambra2023quantum}.

\begin{lemma}\label{lem:comb}
For all $r\ge 0$,
$\|\ad_H^r(B)\| \le  (c_0)^r r!$, where $c_0:=2 \locality JD$.
\end{lemma}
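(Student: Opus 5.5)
We plan to prove \Cref{lem:comb} by expanding the nested commutator into a sum over ordered sequences of Hamiltonian terms and counting only the sequences that contribute. Write
\begin{equation}
\ad_H^r(B)=\sum_{e_1,\dots,e_r}\ad_{h_{e_r}}\cdots\ad_{h_{e_1}}(B).
\end{equation}
A sequence $(e_1,\dots,e_r)$ contributes only if, for each $j$, the term $h_{e_j}$ acts on a qubit in the support of the operator built so far. The reason is that $\ad_{h_e}(A)=0$ whenever $\supp h_e\cap\supp A=\emptyset$. Each nonvanishing nested commutator satisfies $\|\ad_{h_{e_r}}\cdots\ad_{h_{e_1}}(B)\|\le (2J)^r\|B\|=(2J)^r$, because $\|[h,A]\|\le 2\|h\|\|A\|$ and $B$ is a Pauli. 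It therefore suffices to show that the number of admissible sequences is at most $(\locality D)^r r!$.

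To count admissible sequences, track the support $S_j$ after $j$ steps. Initially $S_0=\{b\}$, so $|S_0|=1$. Each step adds at most $\locality-1$ new qubits, and in particular $|S_{j}|\le 1+j\locality\le (j+1)\locality$. At step $j+1$, the term $e_{j+1}$ must touch some qubit of $S_j$. Each qubit lies in at most $D$ terms, so there are at most $D|S_j|\le D\locality(j+1)$ choices. Multiplying over $j=0,\dots,r-1$ gives at most
\begin{equation}
\prod_{j=0}^{r-1}\locality D(j+1)=(\locality D)^r r!
\end{equation}
admissible sequences. Combining this with the norm bound yields $\|\ad_H^r(B)\|\le (2\locality JD)^r r!=c_0^r r!$, which is the claim. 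The case $r=0$ is trivial since $\|B\|=1$.

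The only delicate point is the support-growth bound: the crude estimate $|S_j|\le 1+j(\locality-1)\le (j+1)\locality$ is exactly what produces the factorial rather than something worse. This requires no loss, so we expect no real obstacle. One should also note that choosing a term touching $S_j$ is necessary but not sufficient for nonvanishing; since we only need an upper bound, overcounting is harmless.
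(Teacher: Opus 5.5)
Your proposal is correct and follows the paper's proof essentially step for step: you expand $\ad_H^r(B)$ over ordered tuples $(e_1,\dots,e_r)$, bound the number of admissible tuples by $\prod_{j=1}^r kDj=(kD)^r r!$ using the support-growth estimate $|S_{j-1}|\le 1+k(j-1)\le kj$, and bound each surviving nested commutator by $(2J)^r$. Your remark that tracking $S_j$ as the union of supports along the prefix (a superset of the true support) can only overcount is a small clarification that the paper leaves implicit.
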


\begin{proof}
Expand $\ad_H^r(B)=\sum_{(e_1,\dots,e_r)} \ad_{h_{e_r}}\cdots \ad_{h_{e_1}}(B)$.
A term is nonzero only if $h_{e_j}$ overlaps the support of the prior $j-1$ steps.
After $j-1$ commutators the support size is at most $1+\locality(j-1)$, so the number of admissible $h_{e_j}$ is at most $D(1+\locality(j-1))\le \locality Dj$ for $j\ge 1$.
Thus the number of nonzero ordered $r$-tuples is at most
$\prod_{j=1}^r \locality Dj = (\locality D)^r r!$.
For an operator $O$, we have $\|\ad_{h_e}(O)\|\le 2\|h_e\| \|O\|\le 2J\|O\|$. Applying this $r$ times yields a factor $(2J)^r$.
In conclusion, we get $\|\ad_H^r(B)\|\le (2J)^r(\locality D)^r r!=(2 \locality JD)^r r!$.
\end{proof}

\begin{lemma}[Geometric bounds for $\|B(s)\|$, $\|Y(s)\|$ and $\|X_{2\OTOCorder}(s)\|$]\label{prop:B-bound-2k}
For all $s=a+ib\in\mathbb{C}$ with $|b|<1/c_0=1/(2 \locality JD)$,
\begin{equation}
\|B(s)\| \le \frac{1}{1-c_0|b|},\qquad
\|X_{2\OTOCorder}(s)\| \le \frac{1}{(1-c_0|b|)^{2\OTOCorder}}.
\end{equation}
\end{lemma}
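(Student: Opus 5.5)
The plan is to expand $B(s)$ as the Heisenberg power series and bound it term by term with \Cref{lem:comb}. Writing $s=a+ib$, the goal is to separate the real part, which acts unitarily, from the imaginary part, which is what can inflate the norm.

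\textbf{Reduction to imaginary time.} Since $e^{is\,\ad_H}=e^{ia\,\ad_H}e^{-b\,\ad_H}$ (both are functions of the same operator $\ad_H$, so they commute), we get
\[
B(s)=e^{iHa}\bigl(e^{-b\,\ad_H}(B)\bigr)e^{-iHa}.
\]
Conjugation by the unitary $e^{iHa}$ preserves operator norm, so $\|B(s)\|=\|e^{-b\,\ad_H}(B)\|$. It therefore suffices to control the purely imaginary-time evolution.

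\textbf{Series bound.} Expanding $e^{-b\,\ad_H}(B)=\sum_{r\ge0}\frac{(-b)^r}{r!}\ad_H^r(B)$ and applying \Cref{lem:comb} gives
\[
\|e^{-b\,\ad_H}(B)\|\le\sum_{r\ge0}\frac{|b|^r}{r!}\,c_0^r\,r!=\sum_{r\ge 0}(c_0|b|)^r=\frac{1}{1-c_0|b|}.
\]
This uses $\|B\|=1$ for a Pauli observable, and the series converges absolutely because $|b|<1/c_0$. Absolute convergence also justifies the series representation itself. It is valid for all complex $s$, since $\ad_H$ is a bounded operator on a finite-dimensional space, so the exponential series converges everywhere and the factorization in the first step is legitimate.

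\textbf{Bound on $X_{2\OTOCorder}$.} By definition $X_{2\OTOCorder}(s)=(B(s)M)^{2\OTOCorder}$. Here $M$ is a Pauli matrix with $\|M\|=1$ and no time dependence. Submultiplicativity then gives
\[
\|X_{2\OTOCorder}(s)\|\le\|B(s)\|^{2\OTOCorder}\|M\|^{2\OTOCorder}\le(1-c_0|b|)^{-2\OTOCorder}.
\]
Note that for complex $s$, $B(s)$ is not Hermitian, but this is not needed for the norm bound.

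\textbf{Main obstacle.} The only nontrivial input is the factorial growth $\|\ad_H^r(B)\|\le c_0^r r!$. This is exactly what cancels the $1/r!$ and yields a geometric series with radius $1/c_0$. It is already established in \Cref{lem:comb}, so the remaining care is just the unitary-conjugation reduction, which removes any dependence on $\Re s$.
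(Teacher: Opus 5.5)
Your proof is correct and matches the paper's argument. You bound the imaginary-time piece $B(ib)=e^{-b\,\ad_H}(B)$ term by term via \Cref{lem:comb} to get the geometric series $\sum_r (c_0|b|)^r$, remove $\Re s$ by unitary conjugation with $e^{iHa}$, and finish with submultiplicativity for $X_{2\OTOCorder}(s)=(B(s)M)^{2\OTOCorder}$. Your factorization $e^{is\,\ad_H}=e^{ia\,\ad_H}e^{-b\,\ad_H}$ makes explicit the identity $\|B(s)\|=\|B(ib)\|$, which the paper uses without stating it.
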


\begin{proof}
In what follows we write $Y(s):=B(s)M$ and $X_{2\OTOCorder}(s):=Y(s)^{2\OTOCorder}$.
By \Cref{lem:comb},
\begin{equation}
\|B(ib)\|\le \sum_{r\ge 0}\frac{|b|^r}{r!} \|\ad_H^r(B)\|
\le \sum_{r\ge 0} (c_0|b|)^r = \frac{1}{1-c_0|b|}.
\end{equation}
Then $\|Y(s)\|\le \|e^{-iHa}\|  \|B(ib)\|  \|e^{iHa}\|  \|M\|=\|B(ib)\|$ and $\|X_{2\OTOCorder}(s)\|\le \|Y(s)\|^{2\OTOCorder}$. 
\end{proof}

\subsection{Comparison to Lieb Robinson truncation}\label{subsec:LR-degree-d}

Let $B$ be supported at site $b$ and $M$ at site $m$ with distance $r:=\mathrm{dist}(b,m)$. 
For radius $R\in\mathbb{N}$, denote $B_R$ as the set containing all sites at graph distance at most $R$ around $b$. 
Then, let $H_{B_R}$ be the restriction of $H$ to $B_R$ dropping all terms that are in $B_R^c$. 
Define corresponding OTOC terms for $H_{B_R}$:
\begin{equation}
    B_{B_R}(t) := e^{i H_{B_R} t} B e^{-i H_{B_R} t},\qquad
    X_{2\OTOCorder,B_R}(t) := \left(B_{B_R}(t) M\right)^{2\OTOCorder},\qquad
    f^{\mathrm{LR}}_{2\OTOCorder}(t) := \Tr\left[\rho X_{2\OTOCorder,B_R}(t)\right].
\end{equation}
Using Lieb-Robinson bounds, one can obtain an estimate of the OTOC using the surrogates above as we show below.
To make this comparison realistic, we will assume throughout that $\rho$ is a state which is efficiently representable on a classical computer (e.g. product or stabilizer state).
\begin{proposition}[LR truncation complexity on degree-$D$ graphs]\label{prop:LR-degree-d}
Let $\mu,v,c>0$ be the constants from the truncated‑evolution Lieb-Robinson bound of \cite[Prop.~4.3]{chen2023speed}. Then for any $t\in\mathbb{R}$ and $\epsilon>0$, choosing
\begin{equation}\label{eq:R-choice}
    R = v|t| + \mu^{-1} \log \left(\frac{C_0}{\epsilon} \right)
\end{equation}
guarantees the additive error in the OTOC
\begin{equation}\label{eq:LR-err}
    \left| f_{2\OTOCorder}(t)-f^{\mathrm{LR}}_{2\OTOCorder}(t)\right|
    \le  2\OTOCorder C_0 e^{-\mu (R-v|t|)} \le  \epsilon .
\end{equation}
Moreover, classical Heisenberg simulation over the region $B_R$ has runtime
\begin{equation}\label{eq:TLR-degree-d}
    T_{\mathrm{LR}}(t,\epsilon;D)
     = 2^{ O\left(D^{ v|t|} \left(\frac{1}{\epsilon}\right)^{\frac{\log D}{\mu}}\right)} .
\end{equation}
\end{proposition}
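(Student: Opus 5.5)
The proof has two parts. First I bound the truncation error via Lieb--Robinson. Then I count the cost of exact Heisenberg simulation on the ball $B_R$.

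\emph{Error bound.} I would write $X_{2\OTOCorder}(t)-X_{2\OTOCorder,B_R}(t)$ as a telescoping sum over the $2\OTOCorder$ factors of $Y(t)=B(t)M$. For real $t$, each factor $B(t)$, $B_{B_R}(t)$ and $M$ is unitary, hence has norm one. The sum then gives
\begin{equation}
\bigl\|X_{2\OTOCorder}(t)-X_{2\OTOCorder,B_R}(t)\bigr\| \le 2\OTOCorder\,\bigl\|B(t)-B_{B_R}(t)\bigr\|.
\end{equation}
Because $\rho$ is a state, $|f_{2\OTOCorder}(t)-f^{\mathrm{LR}}_{2\OTOCorder}(t)|$ is at most the operator norm of this difference. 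I would then invoke the truncated-evolution Lieb--Robinson bound \cite[Prop.~4.3]{chen2023speed}, which gives
\begin{equation}
\|B(t)-B_{B_R}(t)\|\le C_0 e^{-\mu(R-v|t|)}.
\end{equation}
This yields the first inequality of \eqref{eq:LR-err}. Substituting the choice \eqref{eq:R-choice} gives $e^{-\mu(R-v|t|)}=\epsilon/C_0$. Strictly this produces error $2\OTOCorder\epsilon$. I would absorb the factor $2\OTOCorder$ into $C_0$, or equivalently replace $\epsilon$ by $\epsilon/(2\OTOCorder)$, which only shifts constants.

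\emph{Runtime.} On a graph of degree $D$ (interpreted as the interaction graph having bounded degree), the ball $B_R$ contains $|B_R|\le D^{R+O(1)}$ sites. Substituting \eqref{eq:R-choice} gives
\begin{equation}
|B_R| \le D^{v|t|}\cdot D^{\mu^{-1}\log(C_0/\epsilon)} = O\!\left(D^{v|t|}(1/\epsilon)^{\log D/\mu}\right).
\end{equation}
Here the constants $C_0^{\log D/\mu}$ and the additive $O(1)$ in the exponent are absorbed into the $O(\cdot)$. The observable $M$ is supported on a single site. If $m\notin B_R$, the operator $X_{2\OTOCorder,B_R}(t)$ is supported on $B_R\cup\{m\}$, so $|B_R|+1$ qubits in either case.

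To evaluate $f^{\mathrm{LR}}_{2\OTOCorder}(t)$ I would proceed in three steps:
\begin{enumerate}
\item Exponentiate the $2^{|B_R|}\times 2^{|B_R|}$ matrix $H_{B_R}$ and form $B_{B_R}(t)$, at cost $2^{O(|B_R|)}$.
\item Form the power $(B_{B_R}(t)M)^{2\OTOCorder}$, again at cost $2^{O(|B_R|)}$.
\item Contract with the reduced density matrix of $\rho$ on $B_R\cup\{m\}$. Since $\rho$ is efficiently classically representable (a product or stabilizer state), this reduced state can be computed in $\mathrm{poly}(n)\,2^{O(|B_R|)}$ time.
\end{enumerate}
The total cost is $2^{O(|B_R|)}$, which is \eqref{eq:TLR-degree-d}.

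\emph{Expected obstacle.} The only delicate point is matching the hypotheses and constants of the cited Lieb--Robinson proposition: the definition of graph distance on a degree-$D$ hypergraph, and the ball-size bound $|B_R|\le D^{R}$ up to constants. Everything else is routine.
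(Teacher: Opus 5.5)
Your proposal is correct and follows the paper's proof essentially line by line: the truncated-evolution Lieb--Robinson bound applied to $B$, the telescoping estimate $\|(XM)^{2\OTOCorder}-(\tilde X M)^{2\OTOCorder}\|\le 2\OTOCorder\|X-\tilde X\|$ for unitaries, and exact simulation on $B_R$ at cost $2^{O(|B_R|)}$ with $|B_R|$ growing like $(D-1)^R\le D^R$. Your observation that the stated choice of $R$ literally gives error $2\OTOCorder\epsilon$ rather than $\epsilon$ correctly identifies a harmless constant slip in the statement, and replacing $\epsilon$ with $\epsilon/(2\OTOCorder)$ fixes it.
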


\begin{proof}
Let us denote $N_R$ as the number of qubits in $B_R$ bounded as
\begin{equation}\label{eq:ball-volume}
    N_R := |B_R| \le 1+D\sum_{j=0}^{R-1}(D-1)^j = O \left((D-1)^R\right),
\end{equation}
Applying \cite[Prop.~4.3]{chen2023speed} setting $A$ in their proposition to $A=B$ (so $|\partial A|=O(1)$, $\|A\|=1$ in their notation) yields
$\|B(t)-B_{B_R}(t)\|\le C_0 e^{-\mu (R-v|t|)}$ with $C_0,\mu,v$ constant.
Using the bound
$\|(XM)^{2\OTOCorder}-(\tilde X M)^{2\OTOCorder}\| \le 2\OTOCorder\|X-\tilde X\|$ for $X,\tilde X, M$ unitary
gives \Cref{eq:LR-err}. 
Evolving $B$ on $B_R$ and evaluating the trace costs $2^{O(N_R)}$, resulting in \Cref{eq:TLR-degree-d}.
\end{proof}

\begin{remark}[Comparison with Barvinok's method]
Barvinok interpolation (\Cref{cor:barvinok-final}) has truncation order $K=O \left(\log(1/\epsilon)e^{\Theta(t)}\right)$ which has logarithmic dependence on $1/\epsilon$. 
In comparison to the LR runtime in \eqref{eq:TLR-degree-d} which is exponential in $\mathrm{poly}(1/\epsilon)$, the runtime of Barvinok's interpolation is exponential in $\log(1/\epsilon)$. 
This is a significant improvement in expander type models.
For $d$‑dimensional lattices, Lieb-Robinson methods achieve a similar scaling however as $T_{\mathrm{LR}}=2^{O\left((|t|+\log(1/\epsilon))^{d}\right)}$ follows from $|B_R|=O(R^d)$.
\end{remark}

\end{document}